\newcounter{mnote}%[page]
\def\xmarginnote{%
  \xymarginnote{\hskip -\marginparsep \hskip -\marginparwidth}}
\def\ymarginnote{%
  \xymarginnote{\hskip\columnwidth \hskip\marginparsep}}
\long\def\xymarginnote#1#2{%
\vadjust{#1%
\smash{\hbox{{%
        \hsize\marginparwidth
        \@parboxrestore
        \@marginparreset
\footnotesize #2}}}}}
\def\mnoteson{%
\gdef\mnote##1{\refstepcounter{mnote}\label{##1}%
  \zsavepos{##1}%
  \ifnum20432158>\number\zposx{##1}%
  \xmarginnote{{\color{blue}\bf $\langle$\arabic{mnote}$\rangle$}}% 
  \else
  \ymarginnote{{\color{blue}\bf $\langle$\arabic{mnote}$\rangle$}}%
  \fi%
}
  }
\gdef\mnotesoff{\gdef\mnote##1{}}
\tikzstyle{block} = [draw,fill=blue!20,minimum size=2em]
\theoremstyle{plain}
\theoremstyle{nonumberplain}
\theoremstyle{plain}
\newtheorem{Conjecture}{Conjecture}
\let\iftwocolumn\if@twocolumn
\g@addto@macro\@twocolumntrue{\let\iftwocolumn\if@twocolumn}
\g@addto@macro\@twocolumnfalse{\let\iftwocolumn\if@twocolumn}
\let\underbrace\LaTeXunderbrace % adapt spacing to font sizes
\renewcommand{\eqref}[1]{\textup{(\refeq{#1})}} % eqref was not allowed in
\lstdefinelanguage{Maple}{
  morekeywords={proc,module,end, for,from,to,by,while,in,do,od
    ,if,elif,else,then,fi ,use,try,catch,finally}, sensitive,
  morecomment=[l]\#,
  morestring=[b]",morestring=[b]`}[keywords,comments,strings]
\DeclareMathAlphabet{\mathpzc}{OT1}{pzc}{m}{it}
\def\multi@nostar#1#2{%
  \expandafter\def\csname multi#1\endcsname##1{%
    \if ##1.\let\next=\relax \else
    \def\next{\csname multi#1\endcsname}     
    %\expandafter\def\csname #1##1\endcsname{#2}
    \expandafter\newcommand\csname #1##1\endcsname{#2}
    \fi\next}}
\def\multi@star#1#2{%
  \expandafter\def\csname #1\endcsname##1{#2}
  \multi@nostar{#1}{#2}
}
\newcommand{\multi}{%
  \@ifstar \multi@star \multi@nostar}
\newcommand{\dotleq}{\buildrel \textstyle  .\over {\smash{\lower
      .2ex\hbox{\ensuremath\leqslant}}\vphantom{=}}}
\newcommand{\dotgeq}{\buildrel \textstyle  .\over {\smash{\lower
      .2ex\hbox{\ensuremath\geqslant}}\vphantom{=}}}
\newcommand{\bM}{\begin{bmatrix}}
\newcommand{\eM}{\end{bmatrix}}
\newcommand{\bSM}{\left[\begin{smallmatrix}}
\newcommand{\eSM}{\end{smallmatrix}\right]}
\renewcommand*\env@matrix[1][*\c@MaxMatrixCols c]{%
  \hskip -\arraycolsep
  \let\@ifnextchar\new@ifnextchar
  \array{#1}}
\newcommand{\sgn}{\operatorname{sgn}}
\DeclareFontFamily{OMX}{MnSymbolE}{}
\DeclareSymbolFont{MnLargeSymbols}{OMX}{MnSymbolE}{m}{n}
\DeclareFontShape{OMX}{MnSymbolE}{m}{n}{
    <-6>  MnSymbolE5
   <6-7>  MnSymbolE6
   <7-8>  MnSymbolE7
   <8-9>  MnSymbolE8
   <9-10> MnSymbolE9
  <10-12> MnSymbolE10
  <12->   MnSymbolE12
}{}
\DeclareFontShape{OMX}{MnSymbolE}{b}{n}{
    <-6>  MnSymbolE-Bold5
   <6-7>  MnSymbolE-Bold6
   <7-8>  MnSymbolE-Bold7
   <8-9>  MnSymbolE-Bold8
   <9-10> MnSymbolE-Bold9
  <10-12> MnSymbolE-Bold10
  <12->   MnSymbolE-Bold12
}{}
\let\llangle\@undefined
\let\rrangle\@undefined
\DeclareMathDelimiter{\llangle}{\mathopen} {MnLargeSymbols}{'164}{MnLargeSymbols}{'164}
\DeclareMathDelimiter{\rrangle}{\mathclose} {MnLargeSymbols}{'171}{MnLargeSymbols}{'171}
\DeclarePairedDelimiter\Set{\{}{\}}
\newcommand{\imod}[1]{\allowbreak\mkern10mu({\operator@font mod}\,\,#1)}
\newcommand{\threecols}[3]{
\hbox to \textwidth{%
      \normalfont\rlap{\parbox[b]{\textwidth}{\raggedright#1\strut}}%
        \hss\parbox[b]{\textwidth}{\centering#2\strut}\hss
        \llap{\parbox[b]{\textwidth}{\raggedleft#3\strut}}%
    }% hbox 
}
\newcommand{\reason}[2][\relax]{
  \ifthenelse{\equal{#1}{\relax}}{
    \left(\text{#2}\right)
  }{
    \left(\parbox{#1}{\raggedright #2}\right)
  }
}
\newcommand{\utag}[2]{\mathop{#2}\limits^{\text{(#1)}}}
\newcommand{\uref}[1]{(#1)}
\let\SavedDoubleVert\relax
\let\protect\relax
  \xdef\extendvert{\protect\expandafter\noexpand\csname extendvert \endcsname}
\gdef\csname extendvert \endcsname#1{\mskip-5mu \left.%
      \ifx\SavedDoubleVert\relax \let\SavedDoubleVert\|\fi
     \:{\let\|\SetDoubleVert
       \mathcode`\|32768\let|\SetVert
     #1}\:\right.\mskip-5mu}
\def\SetVert{\@ifnextchar|{\|\@gobble}% turn || into \|
    {\egroup\;\mid@vertical\;\bgroup}}
\def\SetDoubleVert{\egroup\;\mid@dblvertical\;\bgroup}
 \edef\@tempa{\meaning\middle}
 \edef\@tempb{\string\middle}
 \def\mid@vertical{\middle|}
 \def\mid@dblvertical{\middle\SavedDoubleVert}
 \def\mid@vertical{\mskip1mu\vrule\mskip1mu}
 \def\mid@dblvertical{\mskip1mu\vrule\mskip2.5mu\vrule\mskip1mu}
\newenvironment{ybox}{
	\setlength{\FrameSep}{1.5mm}
	\setlength{\FrameRule}{0mm}
  \MakeFramed {\FrameRestore}}%
{\endMakeFramed}
\newenvironment{gbox}{
	\setlength{\FrameSep}{1.5mm}
\setlength{\FrameRule}{0mm}
  \MakeFramed {\FrameRestore}}%
{\endMakeFramed}
\let\theparentequation\theequation
\patchcmd{\theparentequation}{equation}{parentequation}{}{}
\renewenvironment{subequations}[1][]{%              optional argument: label-name for (first) parent equation
	\refstepcounter{equation}%
	\setcounter{parentequation}{\value{equation}}%    parentequation = equation
	\setcounter{equation}{0}%                         (sub)equation  = 0
	\def\theequation{\theparentequation\alph{equation}}% 
	\let\parentlabel\label%                           Evade sanitation performed by amsmath
	\ifx\\#1\\\relax\else\label{#1}\fi%               #1 given: \label{#1}, otherwise: nothing
	\ignorespaces
}{%
	\setcounter{equation}{\value{parentequation}}%    equation = subequation
	\ignorespacesafterend
}
\newcommand*{\nextParentEquation}[1][]{%            optional argument: label-name for (first) parent equation
	\refstepcounter{parentequation}%                  parentequation++
	\setcounter{equation}{0}%                         equation = 0
	\ifx\\#1\\\relax\else\parentlabel{#1}\fi%         #1 given: \label{#1}, otherwise: nothing
}
\DeclareMathOperator{\rank}{rank}
\newcommand{\wskc}{C_{\op{W}}}
\newcommand{\skc}{C_{\op{S}}}
\newcommand{\pkc}{C_{\op{P}}}
\newcommand{\rco}{R_{\op{CO}}}
\newcommand{\rl}{R_{\op{L}}}
\newcommand{\Fq}{\mathbb{F}_q}
\title{Wiretap Secret Key Agreement \\Via Secure Omniscience}
\author{Praneeth Kumar Vippathalla, Chung Chan, Navin Kashyap and Qiaoqiao Zhou
\thanks{N.\ Kashyap (nkashyap@iisc.ac.in) and Praneeth Kumar V.\ (praneethv@iisc.ac.in) are with the Department of Electrical Communication Engineering, Indian Institute of Science, Bangalore 560012. Their work was supported in part by a Swarnajayanti Fellowship awarded to N.\ Kashyap by the Department of Science \& Technology (DST), Government of India.}
\thanks{C.\ Chan (email: chung.chan@cityu.edu.hk) is with the Department of Computer Science, City University of Hong Kong. His work is supported by a grant from the University Grants Committee of the Hong Kong Special Administrative Region, China (Project No. 21203318).}
 \thanks{Q.\ Zhou (email: zhouqq@comp.nus.edu.sg) is with the Department of Computer Science, National University of Singapore.}
 \thanks{Corresponding author: C.\ Chan}
\thanks{This work was presented in part at the 2020 IEEE International Symposium on Information Theory, and in part at the 2021 IEEE International Symposium on Information Theory.}}
\setlist{leftmargin=1.2em}
\begin{document}
% \newif\ifPAGELIMIT
% \PAGELIMITfalse
%\PAGELIMITtrue
\IEEEoverridecommandlockouts

\maketitle
\begin{abstract}
    In this paper, we explore the connection between secret key agreement and secure omniscience within the setting of the multiterminal source model with a wiretapper who has side information. While the secret key agreement problem considers the generation of a maximum-rate secret key through public discussion, the secure omniscience problem is concerned with communication protocols for omniscience that minimize the rate of information leakage to the wiretapper. The starting point of our work is a lower bound on the minimum leakage rate for omniscience, $\rl$, in terms of the wiretap secret key capacity, $\wskc$. Our interest is in identifying broad classes of sources for which this lower bound is met with equality, in which case we say that there is a duality between secure omniscience and secret key agreement. We show that this duality holds in the case of certain finite linear source (FLS) models, such as two-terminal FLS models and pairwise independent network models on trees with a linear wiretapper. Duality also holds for any FLS model in which $\wskc$ is achieved by a perfect linear secret key agreement scheme. We conjecture that the duality in fact holds unconditionally for any FLS model. On the negative side, we give an example of a (non-FLS) source model for which duality does not hold if we limit ourselves to communication-for-omniscience protocols with at most two (interactive) communications.  We also address the secure function computation problem and explore the connection between the minimum leakage rate for computing a function and the wiretap secret key capacity.

\end{abstract} 

\begin{IEEEkeywords}
Information theoretic security, secret key generation, secure omniscience, leakage rate for omniscience, tree-PIN model, finite linear sources
\end{IEEEkeywords}

\section{Introduction} \label{sec:introduction}
In the setting of the multiterminal source model for secure computation, users who privately observe correlated random variables from a source try to compute functions of these private observations through interactive public discussion. The goal of the users is to keep these computed functions secure from a wiretapper who has some side information (a random variable possibly correlated with the source), and noiseless access to the public discussion. A well-studied problem within this model is that of secret key agreement, where users try to agree on a key that is kept secure from the wiretapper. In other words, users try to compute a common function that is independent of the public discussion and the wiretapper's side information. 

The secret key agreement problem was first studied for two users by Maurer \cite{maurer93}, and Ahlswede and Csisz\'ar \cite{ahlswedeCRpart1}. These works attempted to characterize the \emph{wiretap secret key capacity} $\wskc$, which is defined as the maximum secret key rate possible with unlimited public discussion. They were able to do this in certain special cases, for instance, in the case when only one user is allowed to communicate \cite[Theorem~1]{ahlswedeCRpart1}, and in the case when the wiretapper's side information is conditionally independent of one user's private information, given that of the other user \cite[Theorems~2 and 3]{maurer93}. In particular, when the wiretapper has no side information, $\wskc$ was shown to be equal to the mutual information between the random variables observed by the two users. But, for the two-user setting without additional assumptions, only upper and lower bounds on $\wskc$ were given. Subsequently, there have been multiple efforts, notably \cite{renner_wolf03,csiszar04,aminsource}, to strengthen and extend these bounds to the general setting of two or more users, but finding a single-letter expression remains a fundamental open problem in this domain.

In the course of extending the earlier results to the setting of multiple users, Csisz{\'a}r and Narayan \cite{csiszar04} gave a single-letter expression for the secret key capacity in the case when the wiretapper has no side information. They did this by establishing an equivalence or ``duality'' between the secret key agreement problem and the source coding problem of communication for \emph{omniscience}, which is attained when each user is able to recover (with high probability) the private observations of all the other users. They observed that a secret key of maximum rate can be extracted from a protocol that involves public discussion at the minimum rate required to attain omniscience. They were thus able to relate the secret key capacity to $\rco$, the minimum rate of communication required for omniscience, which can be obtained as the solution to a relatively simple linear program. 

Subsequently, Gohari and Anantharam \cite{aminsource} succeeded in establishing a similar duality in the more general setting of a wiretapper having side information. They showed an equivalence between the wiretap secret key agreement problem (in the presence of a wiretapper having side information) and a problem of communication for omniscience at a neutral observer. In the latter problem, there is (in addition to the users and the wiretapper) a neutral observer who is given access to the wiretapper's side information. The goal here is for the users to communicate in public to create a shared random variable which when provided to the neutral observer, allows the observer to reconstruct all the users' private observations. Theorem~3 of \cite{aminsource} relates $\wskc$ to the minimum rate of public communication required for omniscience at the neutral observer. However, this does not lead to a single-letter expression for $\wskc$, as it is not known how to compute the minimum rate of communication for omniscience at the neutral observer.

Motivated in part by the results of \cite{csiszar04} and \cite{aminsource}, we explore the possibility of an alternative duality existing between the wiretap secret key problem and a certain \emph{secure omniscience} problem, in the hope of obtaining additional insight on $\wskc$, potentially leading to its evaluation in settings where it still remains unknown. In the secure omniscience problem we consider, we stay within the original setting of the multiterminal source model with a wiretapper having side information. The users communicate interactively in public so as to attain omniscience, but now the aim is not necessarily to minimize the rate of communication needed for this. Instead, the goal is to minimize the rate at which the communication for omniscience leaks information about the source to the wiretapper. We give the formal definition of $\rl$, the minimum information leakage rate of any communication for omniscience, in Section~\ref{sec:problem}. 

%We try to make inroads into this problem by connecting secret key generation with secure source coding namely secure omniscience. In the secure omniscience problem, every user tries to attain omniscience by communicating interactively using their private observations from a correlated source; however, the goal here is to minimize the information leakage rate $\rl$ to a wiretapper who has side information about the source. This is motivated by the work of Csisz\'{a}r and Narayan \cite{csiszar04} on the multiterminal setting with no wiretapper side information, where they gave a single-letter expression for the secret key capacity by connecting it with source coding problem of communication for omniscience. They observed that it is enough for the users to attain omniscience by communicating at the minimum rate possible to extract a secret key of maximum rate. This kind of a duality has been further explored in the case with wiretapper side information by Gohari and Anantharam, in \cite{aminsource}. They proved a duality between secret key agreement and the problem of communication for omniscience by a neutral observer, where the neutral observer attains omniscience given the common function agreed by the users and the wiretapper side information. We would like to explore an alternate duality without the neutral observer by relating $\wskc$ and $\rl$. If we are able to establish a duality between these two problems, then we can use an optimum secure omniscience protocol to achieve the wiretap secret key capacity.

\subsection{Main Contributions}

The starting point of our paper is an inequality that relates the wiretap secret key capacity and the minimum leakage rate for omniscience for a source $(\RZ_V,\RZ_{\opw})$. Here, $V:=\{1,\ldots, m\}$ denotes the set of users, $\RZ_V:=(\RZ_i \mid i \in V)$ is the collection of user observations, and $\RZ_{\opw}$ denotes the wiretapper's side information. We then have 
     \begin{equation}
      H(\RZ_V|\RZ_{\opw}) - \wskc \leq \rl ,
      \label{basic_ineq}
        \end{equation}
%The minimum leakage rate for omniscience is upper bounded by $H(\RZ_V|\RZ_{\opw})$.
  where $\rl$ is the information leakage rate, which we formally define in Section~\ref{sec:problem} as $\limsup_n\frac{1}{n}I(\RF^{(n)} \wedge \RZ_V^n|\RZ_{\opw}^n)$, minimized over all communications $\RF^{(n)}$ for omniscience.
The inequality follows from a standard argument: once the users attain omniscience via a communication protocol that achieves the minimum leakage rate $\rl$, they can extract a secret key of rate $H(\RZ_V | \RZ_{\opw}) - \rl$ from the reconstruction of $\RZ_V$ available to each of them. 
%We give the simple proof of this inequality in Section~\ref{sec:problem}.

If the inequality in \eqref{basic_ineq} holds with equality, then we refer to it as a duality between secure omniscience and wiretap secret key agreement.  Essentially, whenever this duality holds, a secret key of maximum rate can be extracted from a communication for omniscience protocol that minimizes the leakage rate. Note that equality in \eqref{basic_ineq} yields an expression for $\wskc$ in terms of $\rl$, but its utility towards computing $\wskc$ is unclear, as it is not known whether $\rl$ admits a single-letter expression.

We first address the question of whether there is always a duality between secure omniscience and wiretap secret key agreement for any multiterminal source model with wiretapper. Note that if equality holds in \eqref{basic_ineq}, then it must be the case that $\wskc = 0$ iff $\rl = H(\RZ_V | \RZ_{\opw})$. Now, it is easily shown that, for any multiterminal source model, $\wskc = 0$ implies $\rl = H(\RZ_V | \RZ_{\opw})$. This follows directly from \eqref{basic_ineq} and the upper bound $\rl \le H(\RZ_V | \RZ_{\opw})$, which always holds, as is easily seen from the definition of $\rl$ --- see Theorem~\ref{thm:RL:lb} in Section~\ref{sec:problem}. It is not so clear whether the converse is also true, namely, that $\rl = H(\RZ_V | \RZ_{\opw})$ implies $\wskc = 0$. We conjecture that  the converse does not always hold, i.e., there are sources for which $\rl = H(\RZ_V | \RZ_{\opw})$, yet $\wskc > 0$. We make partial progress in this direction by showing that this is the case if we restrict ourselves to omniscience protocols in which at most two communications are allowed. We give an example of a two-user source model for which $\wskc > 0$, but the leakage rate equals $H(\RZ_V | \RZ_{\opw})$ for any omniscience protocol involving at most two messages. While our example does not definitively resolve the issue of duality between secure omniscience and wiretap secret key agreement, it seems to indicate that this duality may not always hold.

Next, we consider a broad class of sources, namely, \emph{finite linear sources}, for which we believe the duality must hold. In a finite linear source (FLS) model, each user's observations, as well as the wiretapper's side information, is given by a linear transformation of an underlying random vector consisting of finitely many i.i.d.\ uniform random variables. This class of sources has received some prior attention \cite{chan11itw,chan11delay,chan19oneshot}. We prove that \eqref{basic_ineq} holds with equality for FLS models in which the wiretap secret key capacity $\wskc$ is achieved by a \emph{perfect} key agreement protocol involving public communications that are linear functions of the users' observations. It is an open question as to whether $\wskc$ can always be achieved through linear communication protocols for any FLS model, but it is reasonable to expect that this is the case. We also give two unconditionally positive results: duality holds in the case of two-user FLS models, and in the case of pairwise independent network (PIN) models on trees \cite{sirinpin, sirinperfect} in which the wiretapper's side information is a linear function of the source. In both these cases, we obtain explicit expressions for $\rl$ and $\wskc$. In fact, in the case of tree-PIN models with a linear wiretapper, we are able to explicitly determine the maximum secret key rate achievable when the total rate of public communication is constrained to be at most $R$.

{ Finally, we consider the problem of secure function computation, where users try to compute a given function $\RG=g(\RZ_1,\ldots, \RZ_m)$ without revealing much information about the computed value to the wiretapper. For this problem, we generalize the inequality \eqref{basic_ineq} to $H(\RG|\RZ_{\opw}) - \wskc \leq \rl^{\RG}$, where $\rl^{\RG}$ is the minimum information leakage rate for computing the function $\RG$, which we define as $\limsup_n\frac{1}{n}I(\RF^{(n)} \wedge \RG^n|\RZ_{\opw}^n)$, minimized over all communications $\RF^{(n)}$ for computing $\RG$. We also give the conditions under which the previous inequality holds with equality. Furthermore, for finite linear sources, we prove a result that says that any information leakage rate in computing a linear function can also be achieved by an omniscience scheme.}

\subsection{Related Work}

Our work is closely related to that of Prabhakaran and Ramchandran \cite{vinod07}. In their work, they considered the problem of secure source coding in a two-user model with a wiretapper where only one user is allowed to communicate to the other. This kind of communication is commonly referred to as \emph{one-way communication}. The goal here is to communicate in such a way that the receiving user recovers the observations of the transmitting user while minimizing the rate of information leaked to the wiretapper about the transmitting user's source. In this case, they  obtained a single-letter characterization of the minimum leakage rate for recovering one terminal's observation by the other terminal by using conventional information-theoretic techniques. Moreover, they used this quantity to lower bound the wiretap secret key capacity. Our work, in fact, generalizes this result by considering the minimum leakage rate for omniscience instead in the multi-user setting where interactive communication is allowed.

The secure source coding problem considered in \cite{vinod07},  has been generalized and studied extensively in the direction of characterizing the minimum rate of leakage of transmitter's source \cite{gunduz08, villard13} by incorporating various constraints. For instance, Villard and Piantanida \cite{villard13} considered a similar model as in \cite{vinod07}, but the receiving user observes  coded side information from a third party. Since uncoded side information is a special case of coded side information, this framework subsumes the model of \cite{vinod07}. For this model, they studied the problem in a broad generality by considering a lossy recovery of the transmitter's observations at the receiving terminal in the presence of a wiretapper. They gave a characterization of the rate-distortion-leakage rate region which is the set of all achievable tuples of communication rate, distortion and leakage rate.

Recently, in \cite{wenwentu19},  Tu and Lai have considered the same model but studied the problem of lossy function computation by the receiving terminal, which is a further generalization of the model of \cite{villard13}. They considered even the privacy aspect (leakage of the transmitting user's source to the receiving user) and studied it along with the rate-distortion-leakage rate region. They were able to give an explicit characterization of the entire achievable rate region.

This problem falls in the class of source coding for distributed function computation; see, for e.g., \cite{han_dichotomy, alon_roche, ma11, tyagi11, wenwentu19}. In this problem, each user has access to a private random variable, and they wish to compute functions of these private random variables by communicating in public, possibly interactively or/and in the presence of a wiretapper. For instance, in \cite{ma11},  Ma and Ishwar have considered a two-user model without a wiretapper, where users, after observing private random variables, interactively communicate to compute functions of these private random variables. They studied the interactive communication rates needed for the computation of functions and completely characterized the rate region. Subsequently, this work has been extended  by \cite{yassaee15} for randomized function computation in the two-user case. Recently, \cite{data_interactive_securefunction} has studied the randomized function computation even by including privacy constraints on the users' observation.

One work that studies the function computation in the context of  multi-user source model with a wiretapper is \cite{tyagi11}. In their work, Tyagi, Narayan, and Gupta assumed that the wiretapper has no side information and addressed the question: when can a common function be computed securely? Here we say a function is \emph{securely computable} if it is kept asymptotically independent of the communication that is needed to compute this function. It means that the wiretapper can gain almost no knowledge of the function output even with access to the communication. They answered this question by 
relating it with the secret key capacity of the source model. The precise result is that a common function is securely computable by all the terminals if and only if  the entropy of the function is less than the secret key capacity. 

Secure omniscience is also a problem of source coding for distributed function computation. Here, all the users try to recover the users' source, and the quantity of interest is the minimum rate of information about the source that gets leaked to the wiretapper through the communication. A problem that is closely related to secure omniscience is the coded cooperative data exchange (CCDE) problem with a secrecy constraint; see, for e.g., \cite{sprinston13, courtade16}. In the problem of CCDE, we consider a hypergraphical source and study one-shot omniscience. The hypergraphical model generalizes the PIN model within the class of FLSs. \cite{courtade16} studied the secret key agreement in the CCDE context and characterized the number of transmissions required versus the number of SKs generated. On the other hand, \cite{sprinston13} considered the same model but with wiretapper side information and explored the leakage aspect of an omniscience protocol. However, the security notion considered therein does not allow the eavesdropper to recover even one hyperedge of the source from the communication except what is already available. However, the communication scheme can still reveal information about the source. In this paper, we are interested in minimizing the rate of information leakage to the wiretapper. Though we consider the asymptotic notion, the  designed optimal communication scheme uses only a finite number of realizations of the source. Hence our scheme can find applications even in CCDE problems. 

The role of omniscience in the multi-user secret key agreement (with wiretapper side information) was highlighted in the work of Csisz{\'a}r and Narayan \cite{csiszar04}. They showed that a maximum key rate could be achieved  by communicating at a minimum rate for omniscience. This led to the question of whether the omniscience is optimal even in terms of the minimum communication rate needed to achieve secret key capacity. The works \cite{chan18, mukherjee16} have addressed this question by giving sufficient conditions for general sources and equivalent conditions for hypergraphical sources. 

Though the characterization of secret key capacity (without wiretapper side information) is known, and its connection with omniscience is well studied, the characterization of wiretap secret key capacity is still an open problem. Results are known only in special sources \cite{ahlswedeCRpart1, maurer93}. However, there has been some progress in this direction in recent times.  For instance,  Gohari, G\"{u}nl\"{u} and Kramer, in \cite{amin2020}, sought for the characterization of the class of two-user sources for which wiretap secret key capacity is positive. They were able to find an equivalent characterization in terms of R\'enyi divergence. Its usefulness has been demonstrated on  sources with an erasure model on the wiretapper side information   by deriving a sufficient condition for the positivity of $\wskc$. In the direction of characterizing $\wskc$,  Poostindouz and Safavi-Naini, in \cite{alireza19}, have made an effort in the case of some special source models. In particular, they considered tree-PIN models with a  wiretapper side information containing  noisy versions of the edge random variables. They obtained a characterization of $\wskc$ in terms of the conditional minimum rate of communication for omniscience which is a solution to a certain linear program.

% Despite all these efforts,  to our knowledge, there is no work that addresses the relationship between secure omniscience and wiretap secret key agreement for multiterminal setting with interactive communication allowed.

\subsection{Organization}
This paper is organized as follows. In Section~\ref{sec:problem}, we introduce the problem and notations. In this section, we also establish an inequality relating the minimum leakage rate for omniscience and wiretap secret key capacity for general source models. Section~\ref{sec:counterexamaple_duality} contains an example showing that the duality does not hold between secure omniscience and secret key agreement in the case of limited interaction (with two messages allowed). This result suggests that the duality need not hold in the general case. In Section~\ref{sec:duality_fls}, we first formally define the finite linear source models and prove a duality result concerning linear protocols. Furthermore, we establish an unconditional result in the two-user FLS. In Section~\ref{sec:treepin}, we prove the duality in the case of the tree-PIN model with linear wiretapper. Moreover, for this model,  we determine the rate region containing all achievable secret key rate and total communication rate pairs. In fact, we use a secure omniscience scheme for a part of the source to obtain this result. 
%In Section~\ref{sec:positivity}, we obtain some equivalent conditions for the positivity of $\wskc$ for multi-user case using \eqref{basic_ineq}. This generalizes the two-user result of \cite{amin2020}. 
{ In Section~\ref{sec:funcomp}, we introduce the problem of secure function computation and derive an inequality between the minimum leakage rate for computing a function and the wiretap secret key capacity. Furthermore, in this section, we explore the conditions under which this inequality holds with equality.} Finally, we discuss the open problems and challenges in establishing duality in Section~\ref{sec:discussion}.

\section{Problem formulation}\label{sec:problem}
In this section, we describe two different scenarios, namely wiretap secret key agreement and secure omniscience, in the context of the multiterminal source model. In this model,  the terminals communicate publicly using their correlated observations to compute functions securely from the eavesdropper, who has access to the public communication along with some side information.  More precisely, let $V=[m]:=\left\lbrace1, \ldots, m\right\rbrace$ be the set of users,  and let $\opw$ denote the wiretapper.  Let  $\RZ_1,\ldots, \RZ_m$ and $\RZ_{\opw}$ be the random variables  taking values in finite alphabets $\mc{Z}_1,\ldots, \mc{Z}_m$ and $\mc{Z}_{\opw}$ respectively, and their joint distribution is given by $P_{\RZ_1 \ldots \RZ_m \RZ_{\opw}}$. Let $\RZ_V := (\RZ_i: i \in V)$ and $\RZ_i^n$ denote the $n$ i.i.d. realizations  of $\RZ_i$. For $i \in V$, user $i$ has access to the random variable $\RZ_i$, and the wiretapper observes $\RZ_{\opw}$. Upon observing $n$ i.i.d. realizations, the users communicate interactively using their observations, and possibly independent private randomness, on the noiseless and authenticated channel. In other words, the communication made by a user in any round depends on all the previous rounds' communications and  the user's own observations.  Let $\RF^{(n)}$ denotes this interactive communication. We say $\RF^{(n)}$ is \emph{non-interactive}, if it is of the form $(\tRF_i^{(n)}: i \in V)$, where $\tRF_i^{(n)}$ depends only on $\RZ_i^n$ and the  private randomness of user $i$. Note that the eavesdropper has access to the pair $(\RF^{(n)}, \RZ_{\opw}^n)$. At the end of the communication, each user outputs a value in a finite set using its observations and $\RF^{(n)}$. For example, user $i$ outputs $\RE_i^{(n)}$ using $(\RF^{(n)}, \RZ_i^n)$ and its private randomness { $\RS_i$, i.e., $\RE_i^{(n)} = \psi_i(\RF^{(n)}, \RZ_i^n, \RS_i)$ for some function $\psi_i$}. See Fig.~\ref{fig:system}.
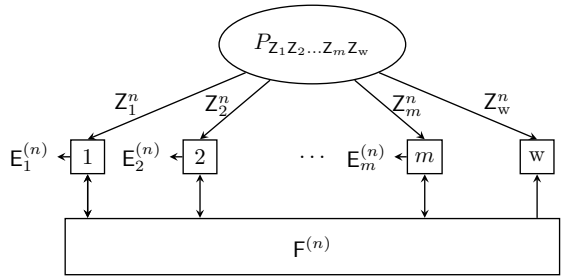
\begin{figure}[h]
\centering
\resizebox{0.85\width}{!}{\begin{tikzpicture}[>=stealth,semithick, node distance = 5 em]
\tikzstyle{source}=[ellipse, draw, semithick, minimum height=3.5 em, minimum width=7.5 em];
\tikzstyle{user}=[rectangle, draw, semithick, minimum height=1.5 em, minimum width=1.5 em];
\tikzstyle{chnl}=[rectangle, draw, semithick, minimum height=2.5 em, minimum width=22 em];
\node   (dots)   {$\ldots$};
\node[source]      (source)       [above of = dots] {$P_{\RZ_1\RZ_2\ldots \RZ_m\RZ_{\opw}}$};
\node[user]      (2)       [left of = dots] {$2$};
\node[user]      (1)       [left of = 2] {$1$};
\node[user]      (m)       [right of = dots] {$m$};
\node[user]      (w)       [right of = m] {${\opw}$};
\node[chnl]      (chnl)       [below of =dots, yshift=1 em] {$\RF^{(n)}$};
\node (k1) [left of = 1, xshift=2.4 em] {$\RE_1^{(n)}$};
\node (k2) [left of = 2, xshift=2.4 em] {$\RE_2^{(n)}$};
\node (km) [left of = m, xshift=2.4 em] {$\RE_m^{(n)}$};
\draw [->] (source) edge node[above,near end] {$\RZ_1^n$} (1.north);
\draw [->] (source) edge node[above,near end]{$\RZ_2^n$}(2.north);
\draw [->] (source) edge node[above,near end]{$\RZ_m^n$} (m.north);
\draw [->] (source) edge node[above,near end]{$\RZ_{\opw}^n$} (w.north);
\draw [<->]  (1.south)--(1|-chnl.north);
\draw [<->]  (2.south)--(2|-chnl.north);
\draw [<->]  (m.south)--(m|-chnl.north);
\draw [->]  (w|-chnl.north)--(w.south);
\draw [<->]  (1.south)--(1|-chnl.north);
\draw [->]  (1.west)--(k1.east);
\draw [->]  (2.west)--(k2.east);
\draw [->]  (m.west)--(km.east);
\end{tikzpicture}}
\caption{Multiterminal source model with wiretapper side information. The terminals interactively discuss over a public channel using their observations from a correlated source to  compute their respective functions.}
\label{fig:system}
 \end{figure}
\subsection{Secure Omniscience}\label{subsec:omniscience}
In the secure omniscience scenario, each user tries to recover the observations of all the users other than the wiretapper. We say that $(\RF^{(n)}, \RE_1^{(n)}, \ldots, \RE_m^{(n)})_ {n \geq 1}$  is an \emph{omniscience scheme} if it satisfies the recoverability condition for omniscience:
\begin{align}\label{eq:omn:recoverability}
\liminf_{n \to \infty} \Pr(\RE_1^{(n)} = \dots =\RE_m^{(n)} = \RZ_V^n) = 1.
\end{align}
The communication $\RF^{(n)}$ in an omniscience scheme is called an \emph{omniscience communication}.

The \emph{minimum leakage rate for omniscience} is defined as 
\begin{align}
\begin{split}
 \rl&:= \inf  \biggl\lbrace \limsup_{n \to \infty} \frac{1}{n}I(\RF^{(n)} \wedge \RZ_V^n|\RZ_{\opw}^n) \biggr\rbrace, \label{eq:rl}
 \end{split}
\end{align}
where the infimum is over all omniscience schemes. We sometimes use $\rl(\RZ_V\|\RZ_{\opw})$ instead of $\rl$ to make the source explicit. The \emph{minimum rate of communication for omniscience} $\rco(\RZ_V)$, or simply $\rco$, is defined as \cite{csiszar04}
  \begin{align}
    \begin{split}
 \rco&:= \inf  \biggl\lbrace \limsup_{n \to \infty} \frac{1}{n}\log|\mc{F}^{(n)}| \biggr\rbrace, \label{eq:rco_def}
 \end{split}
\end{align}
where $\mc{F}^{(n)}$ is the range of $\RF^{(n)}$, and the infimum is over all omniscience schemes. It is known \cite[Proposition~1]{csiszar04} that $\rco$ is given by the solution to the following linear program:
\begin{align*}
    \rco = \min\left\{\sum \limits_{i \in V} R_i \Bigm\vert \sum \limits_{i \in B}R_i \geq H(\RZ_B|\RZ_{B^c}), \: \forall B\subsetneq V  \right\}.
    \label{rco_expression}
\end{align*}

The \emph{conditional} minimum rate of communication for omniscience, $\rco(\RZ_V|\RJ)$, is used in situations where all the users have access to a common random variable $\RJ^n$ along with  their private observations. This means that user $i$ observes $(\RJ^n, \RZ_i^n)$. 

Observe that for any source, we have
\begin{align}\label{eq:rl_rco}
    \rl  \leq \rco,
\end{align}
which follows easily from \eqref{eq:rl} and \eqref{eq:rco_def} as $I(\RF^{(n)} \wedge \RZ_V^n|\RZ_{\opw}^n)\leq H(\RF^{(n)}) \leq \log|\mc{F}^{(n)}|$, where $\RF^{(n)}$ is an omniscience communication taking values in the set $\mc{F}^{(n)}$. 

To get a sense of how $\rl(\RZ_V\|\RZ_{\opw})$ behaves with respect to the correlation between $\RZ_{\opw}$ and $\RZ_V$, consider another wiretapper side information $\tRZ_{\opw}$ that is less correlated with $\RZ_{V}$ than  $\RZ_{\opw}$, in the sense that they form the Markov chain $\tRZ_{\opw} \textrm{ -- } \RZ_{\opw} \textrm{ -- } \RZ_V$. Using the data processing inequality and the fact that any omniscience communication $\RF^{(n)}$ is a function only of $\RZ_V^n$ and private randomness (which is independent of $\RZ_V^n, \RZ_{\opw}^n, \tRZ_{\opw}^n$), we can infer that $I(\RF^{(n)} \wedge \RZ_V^n|\RZ_{\opw}^n)\leq I(\RF^{(n)} \wedge \RZ_V^n|\tRZ_{\opw}^n)$. We conclude, via \eqref{eq:rl} and \eqref{eq:rl_rco}, that
\begin{align*}
    \rl(\RZ_V\|\RZ_{\opw})\leq \rl(\RZ_V\|\tRZ_{\opw}) \leq \rco(\RZ_V).
\end{align*}
At the end of the next subsection, we will show that when $\RZ_{\opw}$ is independent of $\RZ_V$, \eqref{eq:rl_rco} holds with equality.

\subsection{Wiretap Secret Key Agreement}\label{sec:wska:def}
In the wiretap secret key agreement, each user tries to compute a common function, which is called a \emph{key}, that is kept secure from the wiretapper. Specifically, we say that $(\RF^{(n)}, \RE_1^{(n)}, \ldots, \RE_m^{(n)})_ {n \geq 1}$  is a \emph{wiretap secret key agreement (SKA) scheme} if there exists a sequence $(\RK^{(n)})_{n \geq 1}$  such that
\begin{subequations}
\label{eq:sk:constraints}
\begin{align}
\liminf_{n \to \infty} \Pr(\RE_1^{(n)} = \dots =\RE_m^{(n)} = \RK^{(n)}) = 1 \label{eq:sk:recoverability},\\
\limsup_{n \to \infty}\left[\log |\mc{K}^{(n)}| - H(\RK^{(n)}| \RF^{(n)},\RZ_{\opw}^n)\right] =0\label{eq:sk:secrecy},
\end{align}
where $|\mc{K}^{(n)}|$ denotes the cardinality of the range of $\RK^{(n)}$. Conditions \eqref{eq:sk:recoverability} and \eqref{eq:sk:secrecy} are referred to as the key recoverability condition and the secrecy condition of the key, respectively. 
\end{subequations}
The \emph{wiretap secret key capacity}  is defined as
\begin{align}
 \wskc:= \sup \left\lbrace \liminf_{n \to \infty} \frac{1}{n} \log |\mc{K}^{(n)}| \label{eq:wskc}\right\rbrace
\end{align}
where the supremum is over all SKA schemes. The quantity $\wskc$ is also sometimes written as $\wskc(\RZ_V\|\RZ_{\opw})$. In \eqref{eq:wskc}, we use $\skc$ instead of $\wskc$, when the wiretap side information is set to a constant.   Similarly, we use $\pkc(\RZ_V| \RJ)$  in the case when wiretap side information is  $\RZ_{\opw}= \RJ$ and all the users have the shared random variable $\RJ$ along with  their private observations $\RZ_i$. The quantities $\skc$ and $\pkc(\RZ_V|\RJ)$ are referred to  as \emph{secret key capacity} of  $\RZ_V$, and \emph{private key capacity} of $\RZ_V$ with compromised-helper side information $\RJ$ respectively. 

The following theorem gives lower and upper bounds on the minimum leakage rate for omniscience for a general source $(\RZ_V,\RZ_{\opw})$. 

\begin{theorem}\label{thm:RL:lb}
    For a general source $(\RZ_V,\RZ_{\opw})$,
      \begin{align}
      H(\RZ_V|\RZ_{\opw}) - \wskc \leq \rl \leq H(\RZ_V|\RZ_{\opw}).\label{eq:RL:lb}
        \end{align}
\end{theorem}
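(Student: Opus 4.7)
The plan is to establish the two bounds separately; both are essentially one-line arguments in spirit, but the lower bound requires invoking privacy amplification on the recovered source.

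For the upper bound $\rl \leq H(\RZ_V|\RZ_{\opw})$, I would simply note that for any omniscience scheme $(\RF^{(n)}, \RE_1^{(n)}, \dots, \RE_m^{(n)})_{n\ge 1}$, we have
\begin{align*}
\tfrac{1}{n} I(\RF^{(n)} \wedge \RZ_V^n \mid \RZ_{\opw}^n) \leq \tfrac{1}{n} H(\RZ_V^n \mid \RZ_{\opw}^n) = H(\RZ_V \mid \RZ_{\opw}),
\end{align*}
using the i.i.d.\ assumption. So the upper bound follows as soon as we exhibit at least one omniscience scheme, and the existence of such schemes is classical (e.g.\ Slepian--Wolf style communication with rate $\rco$ from \cite{csiszar04}).

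For the lower bound $H(\RZ_V|\RZ_{\opw}) - \wskc \leq \rl$, the plan is to convert any near-optimal omniscience scheme into a wiretap SKA scheme whose key rate is close to $H(\RZ_V|\RZ_{\opw}) - \rl$. Concretely, fix $\epsilon>0$ and take an omniscience scheme with $\limsup_n \tfrac1n I(\RF^{(n)} \wedge \RZ_V^n \mid \RZ_{\opw}^n) \leq \rl + \epsilon$. For each $n$, let the users publicly agree on a uniformly random index selecting a 2-universal hash function $h_n : \mathcal{Z}_V^n \to \{0,1\}^{\ell_n}$, with $\ell_n := \lfloor n(H(\RZ_V|\RZ_{\opw}) - \rl - 2\epsilon)\rfloor$; append this hash seed to the omniscience communication to form the new $\RF'^{(n)}$, and define each user's key as $\RE_i'^{(n)} := h_n(\RE_i^{(n)})$.

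The key recoverability condition \eqref{eq:sk:recoverability} follows from \eqref{eq:omn:recoverability}, since the hash is applied to the (common) recovered value $\RZ_V^n$ whenever omniscience succeeds. For the secrecy condition \eqref{eq:sk:secrecy}, I would invoke a standard leftover-hash/privacy-amplification argument (for example, in the smooth min-entropy framework or its asymptotic equivalent from \cite{csiszar04}): the eavesdropper's view is $(\RF^{(n)}, h_n, \RZ_{\opw}^n)$, and the conditional entropy satisfies
\begin{align*}
\tfrac1n H(\RZ_V^n \mid \RF^{(n)}, \RZ_{\opw}^n) = H(\RZ_V|\RZ_{\opw}) - \tfrac1n I(\RF^{(n)} \wedge \RZ_V^n \mid \RZ_{\opw}^n),
\end{align*}
which for large $n$ exceeds $H(\RZ_V|\RZ_{\opw}) - \rl - \epsilon > \tfrac{\ell_n}{n} + \epsilon$. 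The leftover hash lemma then yields a key of length $\ell_n$ that is asymptotically uniform and independent of $(\RF^{(n)}, h_n, \RZ_{\opw}^n)$, establishing \eqref{eq:sk:secrecy}. Taking $\epsilon \to 0$ gives $\wskc \geq H(\RZ_V|\RZ_{\opw}) - \rl$.

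The main obstacle, and the only nontrivial step, is invoking privacy amplification in the right form: one must smooth the conditional min-entropy before hashing so that a bound on the Shannon mutual information $I(\RF^{(n)} \wedge \RZ_V^n \mid \RZ_{\opw}^n)$ translates to a bound on the min-entropy loss. This is, however, a routine asymptotic equipartition argument on i.i.d.\ sources, so the lower bound in \eqref{eq:RL:lb} is ultimately the standard omniscience-then-extract construction.
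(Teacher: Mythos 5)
Your proposal follows the same approach as the paper: the upper bound by noting that $\frac{1}{n}I(\RF^{(n)} \wedge \RZ_V^n|\RZ_{\opw}^n)\leq H(\RZ_V|\RZ_{\opw})$ for any omniscience scheme (plus the existence of at least one such scheme), and the lower bound by applying privacy amplification on the recovered source to extract a key of rate $H(\RZ_V|\RZ_{\opw})-\rl$ and then comparing with $\wskc$. The paper's proof is terser, simply citing Lemma~B.2 of \cite{csiszar04} for the privacy amplification step, whereas you unpack the 2-universal hashing construction and the AEP-smoothing remark that this lemma packages; the substance is the same.
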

 \begin{IEEEproof}[Proof sketch]
The upper bound on $\rl$ follows from \eqref{eq:rl}, upon noting that  $\frac{1}{n}I(\RF^{(n)} \wedge \RZ_V^n|\RZ_{\opw}^n)\leq H(\RZ_V|\RZ_{\opw})$. 
For the lower bound, the underlying idea of the proof is that given a discussion scheme that achieves $\rl$, one can apply privacy amplification to extract a secret key of rate $H(\RZ_V|\RZ_{\opw})-\rl$ from the recovered source. The details of this argument are given in Appendix~\ref{app:thm:proof_rl_bound}.
\end{IEEEproof}

% Given a discussion scheme that achieves $\rl$, one can apply privacy amplification~\cite[Lemma~B.2]{csiszar04} to extract a secret key of rate $H(\RZ_V|\RZ_{\opw})-\rl$ from the recovered source. Since the secret key rate thus achieved is bounded above by $\wskc$, we obtain the lower bound on $\rl$. The upper bound on $\rl$ follows from \eqref{eq:rl}, upon noting that  $\frac{1}{n}I(\RF^{(n)} \wedge \RZ_V^n|\RZ_{\opw}^n)\leq H(\RZ_V|\RZ_{\opw})$.

\begin{remark}\renewcommand{\qed}{}
Note that the achievable key rate is, intuitively, the total amount of randomness in the recovered source $\RZ_V$ that is not in the wiretapper's side information $\RZ_{\opw}$ nor revealed in public. 
\end{remark}

% Therefore, we have 
% \begin{align*}
%     \rl  \leq {\min} \{\rco, H(\RZ_V|\RZ_{\opw})\}.
% \end{align*}

In Theorems~1-3 of \cite{csiszar04}, Csisz\'ar and Narayan showed that $\skc(\RZ_V)=H(\RZ_V)-\rco(\RZ_V)$ and $ \pkc(\RZ_V | \RZ_{\opw}) = H(\RZ_V|\RZ_{\opw}) - \rco(\RZ_V|\RZ_{\opw})$. They also proved, in Theorem~4 of \cite{csiszar04}, that $\wskc(\RZ_V \| \RZ_{\opw}) \leq \min\{\skc(\RZ_V),\pkc(\RZ_V | \RZ_{\opw})\}$, which implies that $\rl(\RZ_V \| \RZ_{\opw}) \stackrel{\eqref{eq:RL:lb}}{\geq} H(\RZ_V|\RZ_{\opw}) - \wskc (\RZ_V \| \RZ_{\opw}) \geq H(\RZ_V|\RZ_{\opw}) - \pkc (\RZ_V | \RZ_{\opw})=\rco(\RZ_V|\RZ_{\opw})$. By combining this inequality and \eqref{eq:rl_rco}, we get
$\rco(\RZ_V|\RZ_{\opw})\leq \rl(\RZ_V \| \RZ_{\opw}) \leq \rco(\RZ_V)$.

When $\RZ_{\opw}$ is independent of $\RZ_V$, as a straightforward consequence of the definition of $\wskc(\RZ_V\|\RZ_{\opw})$, we have $\wskc(\RZ_V\|\RZ_{\opw})=\skc(\RZ_V)$. Therefore, we see that $$\rco(\RZ_V) = H(\RZ_V)-\skc(\RZ_V)=H(\RZ_V|\RZ_{\opw}) - \wskc(\RZ_V\|\RZ_{\opw}) \stackrel{\eqref{eq:RL:lb}}{\leq} \rl(\RZ_V\|\RZ_{\opw}) \stackrel{\eqref{eq:rl_rco}}{\le} \rco(\RZ_V).$$ Thus, the upper bound in \eqref{eq:rl_rco} and the lower bound in \eqref{eq:RL:lb} in fact hold with equality in this case.

\section{Duality between secure omniscience and wiretap secret key agreement: Limited interaction}\label{sec:counterexamaple_duality}
In this section, we address the question of whether there is always a duality between secure omniscience and wiretap secret key agreement for any multiterminal source model with wiretapper. We study this by considering a necessary condition for duality, which is $\wskc > 0$ iff $\rl < H(\RZ_V|\RZ_{\opw})$. One direction, namely, that $\rl < H(\RZ_V|\RZ_{\opw})$ implies $\wskc > 0$ holds for any  source follows from \eqref{eq:RL:lb}. For the other direction, intuitively, if the users can generate a secret key that is independent of the wiretapper's side information, then they can use this advantage to protect some information during an omniscience scheme. However, we will prove that this need not be the case if we limit the number of messages exchanged between the users. 

%Now we will address the question: Is the ability to generate a positive secret key rate equivalent to that the users can achieve omniscience by strictly protecting a part of the source? More precisely, does the statement $\wskc > 0$ iff $\rl < H(\RZ_V|\RZ_{\opw})$ hold? One direction that $\rl < H(\RZ_V|\RZ_{\opw})$ implies $\wskc > 0$ follows from the lower bound \eqref{eq:RL:lb} which uses the idea of privacy amplification of the recovered source.  

To illustrate this result, let us consider a two-user setting ($m=2$) with source distribution $P_{\RZ_1\RZ_2\RZ_{\opw}}$. Let $r$ be the number of messages exchanged between the users, and let $\wskc^{(r)}$ and $\rl^{(r)}$ denote the wiretap secret key capacity and the minimum leakage rate for omniscience, respectively, when we allow at most $r$ messages to be exchanged among the users. Note that we can ensure omniscience only if we allow $r \geq 2$ because omniscience is not guaranteed with one message transmission. Moreover, omniscience can be obtained using a non-interactive communication that involves only $2$ messages.  Here $\rl^{(r)}<H(\RZ_1,\RZ_2|\RZ_{\opw})$ implies $\wskc^{(r)}>0$, because if the users can achieve omniscience using $r$ messages such that $\rl^{(r)}<H(\RZ_1,\RZ_2|\RZ_{\opw})$, then they can apply privacy amplification to recover a key with positive rate implying $\wskc^{(r)} > 0$. For the other direction, we show that $\wskc^{(r)}>0$ does not imply $\rl^{(r)}<H(\RZ_1,\RZ_2|\RZ_{\opw})$ if $r=2$. This is stated in the following proposition.

\begin{proposition} \label{prop:positivity} 

If $r=2$, then for any source $P_{\RZ_1\RZ_2\RZ_{\opw}}$,
    $$\rl^{(r)}<H(\RZ_1,\RZ_2|\RZ_{\opw}) \Longrightarrow \wskc^{(r)}>0.$$
    However, the converse need not hold. %In particular, for the source given in Lemma~\ref{lem:twowaycounter}, $\wskc^{(r)}> 0$ but  $\rl^{(r)}=H(\RZ_1,\RZ_2|\RZ_{\opw})$.

    % \item If $r\geq 3$, then for any source $P_{\RZ_1\RZ_2\RZ_{\opw}}$,
    % $$\rl^{(r)}<H(\RZ_1,\RZ_2|\RZ_{\opw}) \iff \wskc^{(r)}>0.$$

\end{proposition}

%The above proposition hints that this should be the case even with an arbitrary but fixed number of messages and the unlimited number of messages. 

%The following theorem summarizes this result for two user case on the interplay between the positivity of secret key capacity and the non-maximality of minimum leakage rate for omniscience.

For the converse part, we first derive an upper bound on $\rl^{(2)}$ using the results from the one-way communication setting. We then give a source in Lemma~\ref{lem:twowaycounter} that serves as a counterexample to illustrate that the converse does not hold in general. In the rest of this section, we denote $\RZ_1,\RZ_2$ and $\RZ_{\opw}$ by $\RX, \RY$, and $\RZ$, respectively. The random variables $\RX, \RY$, and $\RZ$ take values in  finite sets $\mc{X}$, $\mc{Y}$, and $\mc{Z}$, respectively.

\subsection{One-way communication, i.e., $r=1$}
Before we address the problem completely, first, we consider a model with only one message allowed. Since omniscience requires a minimum of two messages between users, we slightly modify the setup by letting only one of the users recover the other user's observations\textemdash see Fig. \ref{fig:oneway}. We define the minimum leakage rate for recovery of $\RX$ by user $2$ as 
$$\rl^{\text{ow}}:= \inf  \biggl\lbrace \limsup_{n \to \infty} \frac{1}{n}I(\RF_1^{(n)} \wedge \RX^n|\RZ^n) \biggr\rbrace,$$ where the infimum is over all one-way communication schemes that allow user $2$ to recover $\RX$. 
% Furthermore, the definition of one-way wiretap secret key capacity, denoted by $\wskc^{\text{ow}}$,  is the same as \eqref{eq:wskc} with the exception that the supremum is taken over all one-way SKA schemes. 

\begin{figure}[h]
\centering
\resizebox{0.9\width}{!}{\tikzstyle{block}=[rectangle, draw, thick, minimum width=2em, minimum height=2em]

\begin{tikzpicture}[node distance=4cm,auto,>=latex']

    \node (f) {};
    \node [block] (a) [left of = f, node distance=1.5cm] {1};
    \node (x) [above of = a, node distance=1.25cm] {$\RX^n$};
    \node [block] (b) [right of=f, node distance=1.5cm] {2};
    \node (y) [above of = b, node distance=1.25cm] {$\RY^n$};
    \node [block] (w) [below of = f, node distance=1.5cm] {W};
    \node (z) [left of = w, node distance=1.25cm] {$\RZ^n$};
    \node (e1) [left of = a, node distance=1.5cm] {$\RE_1^{(n)}$};
    \node (e2) [right of = b, node distance=1.5cm] {$\RE_2^{(n)}$};

    \draw[->, thick] (x) --  (a);
    \draw[->, thick] (a) --  (e1);
    \draw[->, thick] (y) --  (b);
    \draw[->, thick] (b) --  (e2);
   \draw[->, thick] (a) -- (b) node [midway, above] {$\RF_1$};
%    \draw[->] (f.center) --  (w);
    \draw[->,thick] (z) -- (w);
\end{tikzpicture}}
\caption{Only one message transfer is allowed.  Since omniscience is, in general, not possible within this setup, we only require user $2$ to recover user $1$'s observations, i.e., $\RE_1^{(n)}$ is  constant and $\RE_2^{(n)}= \hat{\RX}^{(n)}$.}
\label{fig:oneway}
 \end{figure}
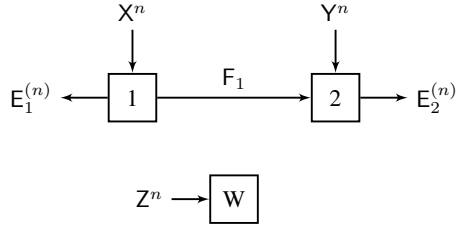
% Ahlswede and Csisz\'ar, in \cite{ahlswedeCRpart1}, studied the one-way wiretap secret key agreement, and gave a single-letter expression \cite[Theorem~1]{ahlswedeCRpart1} for secret key capacity: 
% \begin{align}
%     \wskc^{\text{ow}} = \max_{\RV-\RU-\RX-(\RY, \RZ)}\Big[I(\RU \wedge \RY | \RV)-I(\RU \wedge \RZ | \RV)\Big]. \label{eq:oneway_wskc}
% \end{align}
% In the above optimization, it is enough to consider random variables $\RU$ and $\RV$ (taking values in sets $\mc{U}$ and $\mc{V}$, respectively) such that $|\mcU| \leq |\mcX|^2$ and $|\mcV| \leq |\mcX|$.
On the other hand, the problem of one-way leakage rate was studied in \cite{vinod07}, but with a measure of leakage $\rl^{\text{ow}}+I(\RX \wedge \RZ)$. A single-letter expression obtained, in \cite[Theorem~1]{vinod07}, for $\rl^{\text{ow}}+I(\RX \wedge \RZ)$ is $\min\limits_{\RS-\RX-(\RY, \RZ)}\left[I(\RS,\RZ \wedge \RX)\right.\linebreak\left.+\,H(\RX | \RS, \RY)\right]$.
Therefore, we have
\begin{align}\label{eq:oneway_rl}
    \rl^{\text{ow}} = \min\limits_{\RS-\RX-(\RY, \RZ)}\left[I(\RS \wedge \RX | \RZ)+H(\RX | \RS, \RY)\right],
\end{align}
where the minimization is over random variables $\RS$ taking values in a set $\mc{S}$ such that $|\mcS|\leq |\mcX|$.

\subsection{Two messages are allowed, i.e., $r=2$ }
If we allow the users to exchange two messages interactively (Fig.~\ref{fig:twoway}), then omniscience is possible, as users 1 and 2 can communicate non-interactively at any rate larger than $H(\RX|\RY)+H(\RY|\RX)$ to recover each other's source. Let $\wskc^{(r)}$ and $\rl^{(r)}$ be defined as in \eqref{eq:wskc} and \eqref{eq:rl} but with a restriction to communication schemes involving only  $r=2$ interactive messages. Here we do not impose the condition that a particular user must transmit the first message. So any user can initiate the protocol, but we allow at most two messages to be exchanged. In this case, we can ask the same question: Does $\wskc^{(2)}>0$ imply that $\rl^{(2)}< H(\RX, \RY|\RZ)\/$? 

 \begin{figure}[h]
\centering
\resizebox{0.9\width}{!}{\tikzstyle{block}=[rectangle, draw, thick, minimum width=2em, minimum height=2em]

\begin{tikzpicture}[node distance=4cm,auto,>=latex']

    \node (f) {};
    \node [block] (a) [left of = f, node distance=1.5cm] {1};
    \node (x) [above of = a, node distance=1.25cm] {$\RX^n$};
    \node [block] (b) [right of=f, node distance=1.5cm] {2};
    \node (y) [above of = b, node distance=1.25cm] {$\RY^n$};
    \node [block] (w) [below of = f, node distance=1.5cm] {W};
    \node (z) [left of = w, node distance=1.25cm] {$\RZ^n$};
    \node (e1) [left of = a, node distance=1.5cm] {$\RE_1^{(n)}$};
    \node (e2) [right of = b, node distance=1.5cm] {$\RE_2^{(n)}$};

    \draw[->, thick] (x) --  (a);
    \draw[->, thick] (a) --  (e1);
    \draw[->, thick] (y) --  (b);
    \draw[->, thick] (b) --  (e2);
   \draw[->, thick] ([yshift=0.15 cm] a.east) -- ([yshift=0.15 cm]b.west) node [midway, above] {$\RF$};
   \draw[->, thick] ([yshift=-0.15 cm]b.west) -- ([yshift=-0.15 cm]a.east) ;
  %    \draw[->] (f.center) --  (w);
    \draw[->,thick] (z) -- (w);
\end{tikzpicture}}
\caption{Two messages are allowed. Here omniscience is feasible. If user $1$ initiates the communication, then $\RF=(\RF_1, \RF_2)$ where $\RF_2$, the communication by user $2$, depends on $\RF_1$ . Similarly, if user $2$ starts the communication, then $\RF=(\RF_2, \RF_1)$ and $\RF_1$, the communication made by user $1$, depends on $\RF_1$.}
\label{fig:twoway}
 \end{figure}
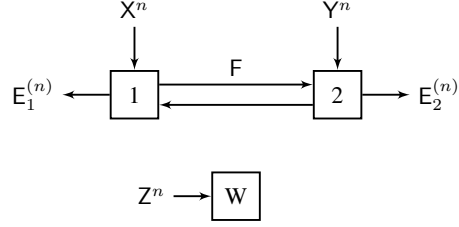
 
It turns out that with two messages, the ability to generate a positive secret key rate does not imply that the minimum leakage rate for omniscience is  strictly  less than $H(\RX, \RY | \RZ)$. To show this, we will use the results from the one-way communication setting. Let $\rl^{\text{ow}}(1 \rightarrow 2)$ (resp. $\rl^{\text{ow}}(2 \rightarrow 1))$  denote the minimum leakage rate for recovery of $\RX$ by user $2$ when user $1$ is the transmitter (resp. recovery of $\RY$ by user $1$ when user $2$ is the transmitter). By \eqref{eq:oneway_rl}, we have 
\begin{align*}
    \rl^{\text{ow}} (1 \rightarrow 2) &= \min\limits_{\RS-\RX-(\RY, \RZ)}\left[I(\RS \wedge \RX | \RZ)+H(\RX | \RS, \RY)\right]
    % \wskc^{\text{ow}}(1 \rightarrow 2) & = \max_{\RV-\RU-\RX-(\RY, \RZ)}\left[I(\RU \wedge \RY | \RV)-I(\RU \wedge \RZ | \RV)\right],
\end{align*}
and 
\begin{align*}
    \rl^{\text{ow}} (2 \rightarrow 1) &= \min\limits_{\RS-\RY-(\RX, \RZ)}\left[I(\RS \wedge \RY | \RZ)+H(\RY | \RS, \RX)\right].
    % \wskc^{\text{ow}}(2 \rightarrow 1) & = \max_{\RV-\RU-\RY-(\RX, \RZ)}\left[I(\RU \wedge \RX | \RV)-I(\RU \wedge \RZ | \RV)\right].
\end{align*}

% Since any one-way SKA scheme is also a valid SKA scheme in the $r=2$ case, 
% \begin{align}
%     \wskc^{(2)}\geq \max \left\lbrace \wskc^{\text{ow}}(1 \rightarrow 2), \wskc^{\text{ow}}(2 \rightarrow 1)\right \rbrace.\label{eq:twowskc}
% \end{align}
We next prove the following lower bound on the minimum leakage rate: 
\begin{align}
         \rl^{(2)}\geq \min & \left\lbrace \rl^{\text{ow}} (1 \rightarrow 2)+ H(\RY | \RZ, \RX),  
         %\right. \notag \\ &\mkern 30mu\left.
         \rl^{\text{ow}} (2 \rightarrow 1)+ H(\RX | \RZ, \RY)\right\rbrace,\label{eq:tworl}
\end{align}
% \begin{align*}
%          \rl^{(2)}\geq \min & \left\lbrace \min_{\RU-\RX-\RY,\RZ} I(\RU \wedge \RX | \RZ)+ H(\RX | \RY, \RU)+ H(\RY | \RZ, \RX), \right.  \notag \\ &\mkern -30mu\left. \min_{\RU-\RY-\RX,\RZ} I(\RU \wedge \RY | \RZ)+ H(\RY | \RX, \RU)+ H(\RX | \RZ, \RY)\right\rbrace,\label{eq:tworl}
% \end{align*}
where each term corresponds to a lower bound on the leakage rate when a particular user transmits first.  This bound may not be tight in general but will be enough for our purpose of constructing a counterexample. To prove \eqref{eq:tworl}, first we will show that $\rl^{(2)}\geq \rl^{\text{ow}} (1 \rightarrow 2)+ H(\RY | \RZ, \RX)$ when user $1$ starts the communication. Note that for any omniscience scheme $(\RF_1^{(n)},\RF_2^{(n)})$, we have $I(\RF_1^{(n)},\RF_2^{(n)} \wedge \RX^n,\RY^n|\RZ^n) \geq I(\RF_1^{(n)} \wedge \RX^n|\RZ^n) + I(\RF_2^{(n)} \wedge \RY^n|\RZ^n,\RX^n) \geq I(\RF_1^{(n)} \wedge \RX^n|\RZ^n) + H(\RY^n|\RZ^n,\RX^n)- n\delta_n$, where the last equality follows from Fano's inequality and the recoverability condition of $\RY^n$ from $\RF_2^{(n)}$ and $\RX^n$. Here, $\delta_n \to 0$ as $n \to \infty$. Therefore, we have 
\begin{align*}
\limsup_{n \to \infty} \frac{1}{n} I(\RF_1^{(n)},\RF_2^{(n)} \wedge \RX^n,\RY^n|\RZ^n) & \geq \limsup_{n \to \infty}\frac{1}{n} I(\RF_1^{(n)} \wedge \RX^n|\RZ^n) + H(\RY|\RZ,\RX)\\
&\geq \rl^{\text{ow}} (1 \rightarrow 2)+ H(\RY | \RZ, \RX).
\end{align*}
Since the above inequality holds for any omniscience scheme where user $1$ initiates the communication, we can conclude that $\rl^{(2)}\geq \rl^{\text{ow}} (1 \rightarrow 2)+ H(\RY | \RZ, \RX)$. Similarly, for  omniscience schemes with user $2$ starting the communication, we have that $\rl^{(2)}\geq \rl^{\text{ow}} (2 \rightarrow 1)+ H(\RX | \RZ, \RY)$. This completes the proof of \eqref{eq:tworl}.

We will make use of the following standard result on broadcast channels to construct a source $P_{\RX\RY\RZ}$ with $\wskc^{(2)}>0$ and $\rl^{(2)}=H(\RX, \RY | \RZ)$. Let $h(q)$ denote the binary entropy function, i.e, $h(q)= -q \log_2q-(1-q) \log_2(1-q)$, for $q \in (0,1)$.
 
% The motivation for considering broadcast channels comes from the structure of the expressions.

\begin{lemma}[{{\cite[p.~121]{elgamalbook}}}] \label{lem:bc_oneway}
    Consider a discrete memoryless broadcast channel $P_{\RY\RZ | \RX}$ with $\mc{X} \in \{0,1\}$, $\mc{Y} \in \{0,1\}$ and $\mc{Z} \in \{0,1,\Delta\}$, where the channel from $\RX$ to $\RY$ is BSC($p$), $p \in (0,\frac{1}{2})$, and the channel from $\RX$ to 
    $\RZ$ is BEC($\epsilon$),  $\epsilon \in (0,1)$.  Then, for $\epsilon \leq h(p)$,  $\RZ$ is more capable than $\RY$, i.e., for every input distribution $P_{\RX}$, $I(\RX\wedge\RZ) \geq I(\RX\wedge\RY).$
\end{lemma}

For a source distribution  $P_{\RX\RY\RZ}=P_{\RX}P_{\RY\RZ|\RX}=P_{\RY}P_{\RX\RZ|\RY} $, if $\RZ$ is more capable than $\RY$ for the channel $P_{\RY\RZ|\RX}$, then $\min\limits_{\RS-\RX-(\RY, \RZ)}\left[I(\RX \wedge \RZ | \RS) - I(\RX \wedge \RY | \RS)\right] =\sum_{s \in \mcS} P_{\RS}(s)\left[I(\RX \wedge \RZ | \RS=s) - I(\RX \wedge \RY | \RS=s)\right]\geq 0$. This is because for an $s \in \mcS$ with $P_{\RS}(s)>0$, the term $I(\RX \wedge \RZ | \RS=s) - I(\RX \wedge \RY | \RS=s)$ is evaluated with respect to $P_{\RX,\RY,\RZ | \RS=s} = P_{\RX | \RS=s}P_{\RY,\RZ | \RX}= P_{\RX | \RS=s}P_{\RY | \RX}P_{\RZ | \RX}$. So this term is equal to $I(\RX_s \wedge \RZ) - I(\RX_s \wedge \RY)$, where $\RX_s \sim P_{\RX | \RS=s}$, and $\RY$ (resp. $\RZ$) is obtained by passing $\RX_s$ through $P_{\RY|\RX}$ (resp. $P_{\RZ|\RX}$). Since $\RZ$ is more capable than $\RY$, $I(\RX_s \wedge \RZ) - I(\RX_s \wedge \RY)\geq 0$ for every $s$. As a result, we have $I(\RX \wedge \RZ | \RS) - I(\RX \wedge \RY | \RS) \geq 0$. Therefore, we have
\begin{align*}
    \rl^{\text{ow}} (1 \rightarrow 2)+ H(\RY | \RZ, \RX) &= \min\limits_{\RS-\RX-(\RY, \RZ)}\left[I(\RS \wedge \RX | \RZ)+H(\RX | \RS, \RY)\right]+ H(\RY | \RZ, \RX)\\
    &= H(\RX,\RY | \RZ) + \min\limits_{\RS-\RX-(\RY, \RZ)}\left[I(\RX \wedge \RZ | \RS) - I(\RX \wedge \RY | \RS)\right]\\
    &\geq H(\RX,\RY | \RZ).
\end{align*}
Similarly, for the channel $P_{\RX\RZ|\RY}$, if $\RZ$ is more capable than $\RX$, then we have $ \rl^{\text{ow}} (2 \rightarrow 1)\geq H(\RX,\RY | \RZ)$. Thus $\rl^{(2)} = H(\RX, \RY | \RZ)$, which follows from \eqref{eq:RL:lb} and \eqref{eq:tworl}. 

A source $(\RX, \RY, \RZ)$ is called a \emph{DSBE$(p,\epsilon)$ source} if $(\RX, \RY)$ is a doubly  symmetric binary source with parameter $p$, and $\RZ\in \{0,1\}^2 \cup \{\Delta\}$ is obtained by passing $(\RX, \RY)$ through an erasure channel with erasure probability $\epsilon$. It means that for a DSBE$(p,\epsilon)$ source $(\RX, \RY, \RZ)$, $\RX \sim \text{Ber}(\frac{1}{2})$, the channel from $\RX$ to $\RY$ is a BSC($p$), and the channel from $(\RX, \RY)$ to $\RZ$ is 
 $$P_{\RZ | \RX,\RY}\left(z|x,y\right)= \left\{\begin{array}{ll} 1-\epsilon, & \text{if } z=(x,y),\\
        \epsilon, & \text{if } z=\Delta,\\
        0, & \text{otherwise},
        \end{array}\right.$$ 
for every $(x,y)\in \{0,1\}^2$.
\begin{lemma}\label{lem:twowaycounter}
For a DSBE$(p,\epsilon)$ source with $p$ and $\epsilon$ chosen so that $ \frac{\min\{p,1-p\}}{\max\{p,1-p\}} < \epsilon \leq h(p)$, we have $\wskc^{(2)}>0$ but $\rl^{(2)}=H(\RX,\RY|\RZ)$.
\end{lemma}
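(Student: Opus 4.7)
The plan is to observe that the paragraphs immediately preceding the lemma already reduce the statement to verifying two structural conditions on each of the DSBE broadcast channels $P_{\RY\RZ|\RX}$ and $P_{\RX\RZ|\RY}$: that $\RZ$ is \emph{more capable} than the other user's observation (which, via \eqref{eq:tworl}, yields $\rl^{(2)}\geq H(\RX,\RY|\RZ)$ to match the trivial upper bound from \eqref{eq:RL:lb}), and that $\RZ$ is \emph{not less noisy} than the other user's observation for at least one of the two channels (which forces $\wskc^{\text{ow}}>0$ in that direction and hence $\wskc^{(2)}>0$ via \eqref{eq:twowskc}). Both properties are exactly what Lemma~\ref{lem:bc_oneway} provides, but only for a \emph{pure} $\text{BEC}(\epsilon)$ wiretapper, whereas in the DSBE model $\RZ$ takes values in $\{0,1\}^2\cup\{\Delta\}$.

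The main technical step, and really the only place that needs care, is bridging this gap. I would do this by noting that the additional content of $\RZ$ beyond a plain erasure output is the value of $\RY$, but this content is present only on the event $\{\RZ\neq\Delta\}$, on which $\RX$ has already been revealed as the first coordinate of $\RZ$; moreover $\Pr(\RZ=\Delta\mid \RX=x)=\epsilon$ for every $x$. A short computation then verifies the identity
\begin{equation*}
I(\RU\wedge \RZ) \;=\; (1-\epsilon)\,I(\RU\wedge \RX)
\end{equation*}
for every Markov chain $\RU-\RX-\RZ$, and in particular $I(\RX\wedge\RZ)=(1-\epsilon)H(\RX)$. These are exactly the mutual information values that a pure $\text{BEC}(\epsilon)$ output would produce. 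With this identification, the more-capable test for $P_{\RY\RZ|\RX}$ reduces to the classical $\text{BEC}(\epsilon)$-vs-$\text{BSC}(p)$ comparison, which holds for every input distribution whenever $\epsilon\leq h(p)$; and the not-less-noisy test is witnessed by the explicit $P^{\ast}_{\RU\RX}$ supplied in Lemma~\ref{lem:bc_oneway} precisely when $\epsilon>4p(1-p)$.

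With that in hand, the $\RX\leftrightarrow\RY$ swap symmetry of the doubly symmetric binary source hands over the analogous conclusions for $P_{\RX\RZ|\RY}$ at no additional cost: $\RY$ is uniform, $P_{\RX|\RY}$ is again $\text{BSC}(p)$, and $\RZ$ remains $\text{BEC}(\epsilon)$-equivalent when viewed as a channel output from $\RY$. Assembling the pieces: more-capability on both channels, combined with the argument preceding the lemma, forces $\rl^{(2)}=H(\RX,\RY|\RZ)$; the not-less-noisy condition on either channel, together with the Lemma~\ref{lem:onewaycounter}-style observation that $\max_{\RV-\RU-\RX-(\RY,\RZ)}[I(\RU\wedge\RY|\RV)-I(\RU\wedge\RZ|\RV)]>0$ (obtained by taking $\RV$ constant) and with \eqref{eq:twowskc}, delivers $\wskc^{(2)}\geq \wskc^{\text{ow}}(1\to 2)>0$. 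The single step worth singling out as the main obstacle is the BEC-equivalence identity above, since without it one might worry that the extra coordinate of $\RZ$ leaks information about $\RX$ that a pure BEC would not; once that identity is in place, the remainder is direct invocation of machinery already developed in the section.
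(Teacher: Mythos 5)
Your proof is correct and tracks the paper's overall structure: exploit the $\RX\leftrightarrow\RY$ symmetry of the DSBE source, verify that $\RZ$ is more capable and not less noisy than $\RY$ for the channel $P_{\RY\RZ|\RX}$, and deduce $\rl^{(2)}=H(\RX,\RY|\RZ)$ and $\wskc^{(2)}>0$ from the bounds established just before the lemma. The one place you proceed differently is in matching the DSBE wiretapper (with $\RZ\in\{0,1\}^2\cup\{\Delta\}$) to the pure $\text{BEC}(\epsilon)$ of Lemma~\ref{lem:bc_oneway}. The paper computes, for an arbitrary binary input distribution $(q,1-q)$, the difference $I(\tRX\wedge\RZ)-I(\tRX\wedge\RY)=(1-\epsilon)h(q)-h(p*q)+h(p)$ and observes that this is literally the same function $f(q)$ as for the source of Lemma~\ref{lem:bc_oneway}; it then quotes the non-negativity and local strict convexity of $f$ from that lemma's proof. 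You instead establish the structural identity $I(\RU\wedge\RZ)=(1-\epsilon)I(\RU\wedge\RX)$ for any Markov chain $\RU-\RX-(\RY,\RZ)$, showing the DSBE wiretapper is information-equivalent, with respect to any auxiliary on the $\RX$ side, to a $\text{BEC}(\epsilon)$ observation of $\RX$ alone. The identity is indeed correct: writing $\RE$ for the erasure indicator (independent of $(\RU,\RX,\RY)$ yet a deterministic function of $\RZ$), one has $I(\RU\wedge\RZ)=I(\RU\wedge\RZ,\RE)=(1-\epsilon)\,I(\RU\wedge\RX,\RY)=(1-\epsilon)\,I(\RU\wedge\RX)$, the last step using $\RU-\RX-\RY$. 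This gives a somewhat cleaner and more self-contained reduction that lets you invoke Lemma~\ref{lem:bc_oneway} as a black box rather than re-opening its proof to extract the analytic properties of $f$; both routes arrive at the same comparison and are equally valid.
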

\begin{proof}
 First note that $\wskc^{(2)}>0$ is equivalent to the condition that $\wskc>0$ (with no restriction on the number of communications), as was shown by Orlitsky and Wigderson in \cite{orlitsky1993secrecy} (and reproduced in Theorem~1 of \cite{amin2020}). For a DSBE$(p,\epsilon)$, it follows from Equation~(69) of \cite{amin2020} that $\wskc>0$ if and only if $\epsilon > \frac{\min\{p,1-p\}}{\max\{p,1-p\}}$. As a result, we have $\wskc^{(2)}>0$ for the chosen parameters.

Let us now argue that $\rl^{(2)}=H(\RX,\RY|\RZ)$. Since a DSBE$(p,\epsilon)$ source is symmetrical in $\RX$ and $\RY$, it is enough to show that the more capable condition hold for one user. In other words, it is sufficient to show that for the channel $P_{\RY\RZ | \RX}$, $\RZ$ is more capable than $\RY$. For any binary input distribution $P_{\tRX}=(P_{\tRX}(0), P_{\tRX}(1)):=(q, 1-q), 0 \leq q \leq 1$, to the channel $P_{\RY \RZ | \RX}$,  $I(\tRX\wedge\RY) =  h(p*q) - h(p) $, where $p*q=p(1-q)+(1-p)q$. Let $f(q):= (1-\epsilon)h(q) - h(p*q) +h(p)= I(\tRX\wedge\RZ) - I(\tRX\wedge\RY)$. Note that this difference is the same as that of the source considered in Lemma~\ref{lem:bc_oneway}. The proof of that lemma involves showing that for $\epsilon \leq h(p)$, $f(q)$ is a non-negative function, % and moreover, $f(q)$ is strictly convex around $q=\frac{1}{2}$, 
which is equivalent to the more capable condition. %and not less noisy conditions, respectively. 
Making use of this property of $f(q)$, we can also conclude that  for $\epsilon \leq h(p)$, $\RZ$ is more capable than $\RY$ for $P_{\RY\RZ | \RX}$. 

Thus, the minimum leakage rate $\rl^{(2)}=H(\RX,\RY|\RZ)$ because $\RZ$ is more capable than $\RY$ for the channel $P_{\RY\RZ | \RX}$, and $\RZ$ is more capable than $\RX$ for the channel $P_{\RX\RZ | \RY}$.
% Since $\RZ$ is not less noisy than $\RY$ for $P_{\RY\RZ | \RX}$, $\wskc^{\text{ow}}(1 \rightarrow 2)$ is positive, and hence we have $\wskc^{(2)}>0$ by \eqref{eq:twowskc}. 
% And, the minimum leakage rate $\rl^{(2)}=H(\RX,\RY|\RZ)$ because $\RZ$ is more capable than $\RY$ for the channel $P_{\RY\RZ | \RX}$, and $\RZ$ is more capable than $\RX$ for the channel $P_{\RX\RZ | \RY}$.
\end{proof}

For the source given in the above lemma, no user can gain an advantage in terms of $\rl^{(2)}$ over the other by starting the communication. This completes the proof of Proposition~\ref{prop:positivity}. 

 This result seems to indicate that duality does not always hold. We conjecture that for the DSBE  source considered in the above lemma, $\wskc^{(r)}>0$ need not imply $\rl^{(r)} < H(\RZ_V|\RZ_{\opw})$, $r\geq 2$. We additionally conjecture that, even with no restriction on the number of communications, $\wskc>0$ need not imply $\rl < H(\RZ_V|\RZ_{\opw})$.   

% \subsection{$r\geq 3$ messages are allowed}
% The interplay between  the positivity of $\wskc$ and non-maximality of $\rl$ becomes more evident if there is no restriction on the number of messages. Unlike as in the case of one or two messages, our intuition actually works with multiple messages, in particular, with three messages.

\section{Duality for finite linear source models}\label{sec:duality_fls}
In this section, we  consider  a  broad  class  of  sources,  namely, finite linear sources, for which we believe the duality between secure omniscience and wiretap secret key agreement must hold. 

\begin{definition}[Finite linear source \cite{chan11itw}]
A source $(\RZ_V, \RZ_{\opw})$ is said to be a \textit{finite linear source (FLS)} if we can express $\RZ_V$ and $\RZ_{\opw}$ as
$$\bM \RZ_V & \RZ_{\opw}\eM=\bM \RZ_1 & \cdots& \RZ_m & \RZ_{\opw}\eM=\RX \bM \MM_1\;\cdots\;\MM_m \; \MW \eM,$$
where  $\RX$ is a random row vector of some length $l$ that is uniformly distributed over a field $\Fq^l$, and $\MM_1, \ldots,\MM_m, \MW$ are some matrices over $\Fq$ with dimensions $l \times l_1, \ldots ,l \times l_m, l \times l_w$, respectively. Each terminal observes a collection of linear combinations of the entries in $\RX$. 
\end{definition} 

In the context of FLS models, we say a communication scheme $\RF^{(n)}$ is \emph{linear} if each user's communication is a linear function of its observations and the previous communication on the channel. Without loss of generality \cite[Sec.~II]{chan19oneshot}, linear communication can be assumed to be non-interactive.  In the rest of the paper, we consider only matrices over $\Fq$ unless otherwise specified.

The following notions related to G\'{a}cs-K\"{o}rner common information will play an important role in proving some of our subsequent results. The \emph{ G\'{a}cs-K\"{o}rner common information} of  $\RX$ and $\RY$ with joint distribution $P_{\RX,\RY}$ is defined as
\begin{align}\label{eq:gk}
 J_{\op{GK}}(\RX \wedge \RY) := \max \left\lbrace H(\RG) : H(\RG|\RX)=H(\RG|\RY) =0 \right\rbrace
\end{align}
A $\RG$  that satisfies the constraint in \eqref{eq:gk} is called a common function (c.f.) of $\RX$ and $\RY$. An optimal $\RG$ in \eqref{eq:gk} is called a \emph{maximal common function} (m.c.f.) of $\RX$ and $\RY$, and is denoted by $\op{mcf}(\RX, \RY)$. Similarly, for $m$ random variables, $\RX_1, \RX_2, \ldots, \RX_m$,  we can extend these definitions by replacing the condition in \eqref{eq:gk} with $H(\RG|\RX_1)=H(\RG|\RX_2)=\ldots=H(\RG|\RX_n)=0$. For a two-user FLS $(\RZ_1, \RZ_2)$, i.e., $\RZ_1 = \RX \MM_1$ and $\RZ_2=\RX \MM_2$ for some matrices $\MM_1$ and $\MM_2$ where $\RX$ is a $1 \times l$ row vector   uniformly distributed on $\Fq^l$, it was shown in \cite{chan18zero} that the $\op{mcf}(\RZ_1, \RZ_2)$ is a linear function of each of $\RZ_1$ and $\RZ_2$. This means that there exists some matrices $\MM_{z_1}$ and $\MM_{z_2}$ such that $\op{mcf}(\RZ_1, \RZ_2) = \RZ_1 \MM_{z_1}=\RZ_2 \MM_{z_2}$. One can infer from this relation that if $\RZ_1$ and $\RZ_2$ are independent, then $\op{mcf}(\RZ_1, \RZ_2)$ is identically $0$.

We prove results in this and the next section favoring the following conjecture.
%we address the question: For what sources, secure omniscience achieves wiretap secret key capacity? In other words, does $$\rl = H(\RZ_V|\RZ_{\opw}) - \wskc $$ hold for a large enough class of sources?\\
\begin{Conjecture}\label{conj:duality:fls}
$\rl = H(\RZ_V|\RZ_{\opw}) - \wskc$ holds for finite linear sources.
\end{Conjecture}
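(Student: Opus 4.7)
The plan is to establish the missing direction $\rl \leq H(\RZ_V|\RZ_{\opw}) - \wskc$, since the reverse inequality is already available from Theorem~\ref{thm:RL:lb}. The paper has proved this direction for any FLS in which $\wskc$ is achieved by a \emph{perfect linear} SKA scheme, so the conjecture would follow from the claim that, for every finite linear source, $\wskc$ is attained in the limit by some sequence of linear SKA protocols. My proposal is to take this linearity claim as the single step to close; if that step is out of reach, I would fall back on a direct linear-algebraic construction that bypasses $\wskc$ altogether.

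Granted a linear SKA with communication $\RF_K^{(n)}$ and key $\RL^{(n)}$ of rate approaching $\wskc$, I would build an omniscience scheme of the required leakage as follows. Because the source is linear over $\Fq$, the conditional distribution of $\RZ_V^n$ given $(\RZ_{\opw}^n, \RF_K^{(n)}, \RL^{(n)})$ is uniform on an affine coset of an $\Fq$-subspace, so a random linear syndrome $\MM \RZ_V^n$ at rate approaching $\tfrac{1}{n} H(\RZ_V^n | \RZ_{\opw}^n, \RF_K^{(n)}, \RL^{(n)})$ allows each user to recover $\RZ_V^n$ via syndrome decoding. Concatenate $\RF_K^{(n)}$ with $\MM \RZ_V^n$ and with the key $\RL^{(n)}$ masked by an independent one-time pad of matching length; the masked block is uniform and independent of $(\RZ_V^n, \RZ_{\opw}^n, \RF_K^{(n)}, \MM \RZ_V^n)$, so it contributes no conditional mutual information with $\RZ_V^n$ given $\RZ_{\opw}^n$. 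The overall leakage then collapses to the rate of $\RF_K^{(n)}$ together with $\MM \RZ_V^n$, which by the secrecy and recoverability of the original SKA and a routine chain-rule manipulation evaluates to $H(\RZ_V|\RZ_{\opw}) - \wskc$.

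The main obstacle is the linearity reduction itself. Even single-letter expressions for $\wskc$ are unknown in general, and there is no obvious procedure for converting an arbitrary interactive SKA protocol into a linear one without losing key rate. A more tractable substitute, and the one I would pursue in parallel, is a direct converse--achievability matching over the linear category. On the achievability side, parametrize all linear omniscience schemes by matrices $\MM$ whose row span, together with that of $\MW$, covers the row spans of each $\MM_i$, and compute the resulting leakage rate by a direct calculation. On the converse side, leverage the fact that for finite linear sources the only non-trivial common randomness among subsets of users is itself linear (a consequence of the m.c.f.\ structure cited from \cite{chan18zero}) to show that the infimum of the leakage over this linear family already matches the lower bound of Theorem~\ref{thm:RL:lb}. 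This is the template that worked for the tree-PIN case treated later in the paper, and extending that style of argument\textemdash rather than first trying to linearize an arbitrary SKA\textemdash appears to be the most plausible route to the full conjecture.
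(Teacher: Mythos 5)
This statement is explicitly a \emph{conjecture} in the paper and is left open; the paper does not contain a proof of it. What the paper does prove are sufficient conditions: Theorem~\ref{thm:linearscheme} establishes the conjecture under the hypothesis that $\wskc$ is achieved by a \emph{perfect linear} SKA scheme, and Theorems~\ref{thm:fls} and \ref{thm:cwsk:red} establish it unconditionally in the two-user and tree-PIN-with-linear-wiretapper special cases. Your proposal correctly identifies these as the two natural routes, and also correctly identifies the missing step: nobody has shown that linear (let alone perfect linear) schemes are $\wskc$-optimal for general FLS. Since you acknowledge that this gap is ``out of reach,'' your proposal is not a proof but rather a restatement of why the conjecture is a conjecture.

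Two technical points on your Theorem~\ref{thm:linearscheme}-style construction, should you attempt to flesh it out. First, Theorem~\ref{thm:linearscheme} requires a \emph{perfect} linear SKA scheme, i.e., exact key recoverability and exact secrecy at finite blocklength, not merely a linear scheme with asymptotically vanishing error; your proposal treats ``linear'' as sufficient, which is a weaker hypothesis than the paper actually invokes. Second, the rate you assign to the random linear syndrome $\MM \RZ_V^n$ is $\tfrac{1}{n} H(\RZ_V^n \mid \RZ_{\opw}^n, \RF_K^{(n)}, \RL^{(n)})$, which is the wiretapper's residual uncertainty; but the syndrome has to let each \emph{user} $i$ decode from $(\RZ_i^n, \RF_K^{(n)})$, so the relevant rate constraint is governed by $\rco$ conditioned on the public discussion, not by the wiretapper's conditional entropy. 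The paper's actual proof of Theorem~\ref{thm:linearscheme} avoids this mismatch altogether by iteratively appending linear messages $\RF'^{(n)} = \RK^{(n)}\MM_K - a\RX$ that simultaneously reveal a missing source coordinate to all users and remain inside the span of $(\RF^{(n)}, \RZ_{\opw}^n)$, preserving perfect secrecy at each step. Your one-time-pad trick is unnecessary in that construction and does not obviously yield the claimed leakage once the syndrome rate is corrected.

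Your fallback suggestion---parametrizing linear omniscience schemes directly and matching the lower bound via the linear m.c.f.\ structure from~\cite{chan18zero}---is in fact how the paper handles the two-user and tree-PIN cases, and the paper itself notes in Section~\ref{sec:discussion} that this template breaks down already for PIN models on graphs with cycles: there are irreducible such sources where no $\rco$-achieving omniscience scheme aligns perfectly with $\RZ_{\opw}$. So the direct route also does not obviously extend. In short, the conjecture remains open, and neither branch of your proposal closes it.
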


The reason to believe Conjecture~\ref{conj:duality:fls} comes from the following two theorems. Since the source is linear, it is reasonable to conjecture that linear schemes are optimal. Theorem~\ref{thm:linearscheme} below states that if a linear \emph{perfect SKA scheme} is optimal in terms of $\wskc$, then secure omniscience achieves wiretap secret key capacity. Here, we call an SKA scheme \emph{perfect} if there exists a sequence of communication-key pairs $(\RF^{(n)}, \RK^{(n)})_{n\geq1}$ such that 
$H(\RK^{(n)}|\RF^{(n)},\RZ_i^n) = 0$ for all users $i \in V$ (perfect key recoverability condition), and $\log |\mc{K}^{(n)}| = H(\RK^{(n)}| \RF^{(n)},\RZ_{\opw}^n)$ (perfect secrecy condition).

\begin{theorem}\label{thm:linearscheme}
  For an FLS $(\RZ_V,\RZ_{\opw})$, if a linear perfect SKA scheme achieves $\wskc$, then we have
  $$\rl = H(\RZ_V|\RZ_{\opw}) - \wskc.$$
\end{theorem}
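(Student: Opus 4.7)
The lower bound $\rl \geq H(\RZ_V|\RZ_{\opw}) - \wskc$ is already in Theorem~\ref{thm:RL:lb}, so the task is to exhibit an omniscience scheme whose leakage rate is at most $H(\RZ_V|\RZ_{\opw}) - \wskc$. My approach is to start from the given linear perfect SKA scheme $(\RF_{\op{SKA}}^{(n)}, \RK^{(n)})$, for which $\tfrac{1}{n}\log|\mc{K}^{(n)}|\to\wskc$, and to append a further linear communication $\RF_{\op{om}}^{(n)}$ that completes omniscience without destroying the independence of $\RK^{(n)}$ from the wiretapper's enlarged view.

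Recasting the perfect scheme in linear-algebra terms, let $F, K, U_i, W$ denote the column spaces in $\Fq^{nl}$ associated to $\RF_{\op{SKA}}^{(n)}, \RK^{(n)}, \RZ_i^n, \RZ_{\opw}^n$, respectively. Then $F \subseteq \sum_j U_j$, $K \subseteq F + U_i$ for every $i \in V$ (perfect key recoverability), and $K \cap (F + W) = \{0\}$ (the perfect secrecy condition, which over a uniform $\Fq$-source is equivalent to $\RK^{(n)}$ being uniform on $\mc{K}^{(n)}$ and statistically independent of $(\RF_{\op{SKA}}^{(n)}, \RZ_{\opw}^n)$). I seek $F_{\op{om}} \subseteq \sum_j U_j$ satisfying (a)~$F + F_{\op{om}} + U_i = \sum_j U_j$ for every $i$, and (b)~$K \cap (F + F_{\op{om}} + W) = \{0\}$. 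To construct it, pass to the quotient $\bar D := (\sum_j U_j)/F$. Because $K \cap F = \{0\}$ and $K \subseteq F + U_i$, the image $\bar K$ of $K$ in $\bar D$ is isomorphic to $K$ and lies inside each $\bar U_i := (U_i + F)/F$; moreover $\bar K \cap \bar W_{\!D} = \{0\}$, where $\bar W_{\!D} := ((W+F)/F) \cap \bar D$. Fix any complement $\bar C$ of $\bar K$ in $\bar D$ that contains $\bar W_{\!D}$ (possible since $\bar K \cap \bar W_{\!D} = \{0\}$). Then solve the standard omniscience problem in the quotient $\bar D/\bar K$ for the projected subspaces $\bar U_i/\bar K$, lift the solution into $\bar C$ to obtain a subspace $\bar F_{\op{om}}$, and finally pull back to the original space to obtain $F_{\op{om}}$. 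Property~(a) follows because $\bar K \subseteq \bar U_i$ allows the quotient omniscience to lift to the full omniscience, while (b) follows because $\bar F_{\op{om}} + \bar W_{\!D} \subseteq \bar C$ is disjoint from $\bar K$.

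With (a) and (b) in hand, set $\RF^{(n)} := (\RF_{\op{SKA}}^{(n)}, \RF_{\op{om}}^{(n)})$. Both $\RF^{(n)}$ and $\RK^{(n)}$ are linear functions of $\RZ_V^n$, so by (b) and the uniformity of $\RK^{(n)}$,
\begin{align*}
H(\RZ_V^n \mid \RF^{(n)}, \RZ_{\opw}^n) \geq H(\RK^{(n)} \mid \RF^{(n)}, \RZ_{\opw}^n) = H(\RK^{(n)}) = \log|\mc{K}^{(n)}|.
\end{align*}
Therefore $\tfrac{1}{n}I(\RF^{(n)} \wedge \RZ_V^n \mid \RZ_{\opw}^n) \leq H(\RZ_V \mid \RZ_{\opw}) - \tfrac{1}{n}\log|\mc{K}^{(n)}|$, which tends to $H(\RZ_V|\RZ_{\opw}) - \wskc$. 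This yields $\rl \leq H(\RZ_V|\RZ_{\opw}) - \wskc$ and, together with Theorem~\ref{thm:RL:lb}, the desired equality. I expect the main obstacle to be making the quotient--lift construction of $F_{\op{om}}$ fully rigorous, in particular the simultaneous enforcement of the omniscience identities in (a) and the non-intersection condition in (b); once the subspace arithmetic is set up correctly, everything else is a routine translation between the linear and information-theoretic languages.
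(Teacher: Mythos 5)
Your proof is correct, and it takes a genuinely different route from the paper's. The paper's argument is \emph{iterative and local}: starting from the linear perfect SKA communication $\RF^{(n)}$, it repeatedly finds one missing $\Fq$-coordinate $\RX$ of the source, and appends one extra linear symbol $\RF'^{(n)}$ that reveals $\RX$ without creating any new common linear function with the key $\RK^{(n)}$. The crux is that if revealing $\RX$ on its own would break secrecy, then (by the linear structure of FLS m.c.f.'s) there is a linear combination $\RL^{(n)}=\RK^{(n)}\MM_K=a\RX+\RF^{(n)}\MM_F+\RZ_\opw^n\MM_\opw$ with $a\ne 0$, and the paper transmits $\RF'^{(n)}:=\RK^{(n)}\MM_K - a\RX$, which user $i$ can compute via the key, which reveals $\RX$ to everyone else who already has the key, and which equals $\RF^{(n)}\MM_F+\RZ_\opw^n\MM_\opw$ so it cannot leak anything new about the key to the wiretapper. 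Your argument is instead \emph{global and one-shot}: after passing to the quotient $\bar D=(\sum_j U_j)/F$ you use the two structural facts $\bar K\subseteq\bar U_i$ for all $i$ (perfect recoverability) and $\bar K\cap\bar W_D=\{0\}$ (perfect secrecy), pick a complement $\bar C$ of $\bar K$ containing $\bar W_D$, and lift any linear omniscience solution for the doubly-quotiented subspaces into $\bar C$; the key observation that makes the lift respect computability is again $\bar K\subseteq\bar U_i$. Both proofs hinge on the same two subspace identities, but yours constructs the whole appended communication $F_{\op{om}}$ in one stroke rather than coordinate by coordinate, which makes the ``why'' of the construction (the complement $\bar C$ neatly walls off $\bar K$ from $\bar W_D$ and from the omniscience leakage) more transparent at the cost of more bookkeeping in the quotient/lift step. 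One small gap worth flagging: when you ``recast the perfect scheme in linear-algebra terms'' and treat $K$ as a subspace, you are tacitly assuming that the key is a linear function of $\RZ_V^n$ without private randomization; the paper explicitly justifies this assumption via \cite[Theorem~1]{chan19oneshot}, and you should invoke it too. Also, the claim ``solve the standard omniscience problem in the quotient $\bar D/\bar K$, lift into $\bar C$'' should be fleshed out along the lines you sketch — in particular, that a representative in $\bar C$ of a coset contained in $\bar U_i/\bar K$ is still in $\bar U_i$ because $\bar K\subseteq\bar U_i$ — but your own verification shows this is a routine check, not a hole.
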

\begin{proof}
See Appendix~\ref{app:thm:proof_linearscheme}.
\end{proof}

The next theorem shows the duality between secure omniscience and wiretap secret key agreement for two-user FLS without any restriction to linear schemes. It also provides single-letter expressions for $\rl$ and $\wskc$.

\begin{theorem}[Two-user FLS]
  \label{thm:fls}
  For secure omniscience with $V=\Set{1,2}$ and FLS $(\RZ_V,\RZ_{\opw})$, we have
  \begin{align}
    R_{\opL} &= H(\RZ_1,\RZ_2|\RZ_{\opw}) - \wskc,\\
    \wskc& = I(\RZ_1\wedge \RZ_2|\RG),\label{eq:fls}
  \end{align}
  where $\RG$ can be chosen to be $\RG_1$, $\RG_2$, or $(\RG_1,\RG_2)$, with $\RG_i$ being the solution to 
  \begin{align}
    J_{\op{GK}}(\RZ_{\opw}\wedge \RZ_i) := \max_{\RG_i: H(\RG_i|\RZ_{\opw})=H(\RG_i|\RZ_i)=0} H(\RG_i) \label{eq:JGK}
  \end{align}
  for $i\in V$.
\end{theorem}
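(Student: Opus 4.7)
The plan is to establish both identities of Theorem~\ref{thm:fls} by exhibiting an explicit linear perfect SKA scheme that achieves $I(\RZ_1 \wedge \RZ_2 | \RG)$, and matching it with an upper bound on $\wskc$. Once such a scheme is in hand, Theorem~\ref{thm:linearscheme} immediately delivers the duality $\rl = H(\RZ_1, \RZ_2 | \RZ_{\opw}) - \wskc$, while the matching bounds pin down the single-letter formula $\wskc = I(\RZ_1 \wedge \RZ_2 | \RG)$ for each of the three choices of $\RG$.

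For the upper bound on $\wskc$, I would first use the fact that $\RG$ is a (linear) function of $\RZ_{\opw}$ to replace the wiretapper's side information by $\RG$, which can only increase capacity: $\wskc(\RZ_1, \RZ_2 || \RZ_{\opw}) \le \wskc(\RZ_1, \RZ_2 || \RG)$. Because each $\RG_i$ is also computable from $\RZ_i$, user~$i$ may then publicly broadcast $\RG_i$ without leaking anything new in the reduced problem. After such announcements, both users and the (reduced) wiretapper share $\RG$, so the problem reduces to the compromised-helper setting $\pkc(\RZ_1, \RZ_2 | \RG)$, which for two users equals $I(\RZ_1 \wedge \RZ_2 | \RG)$ by conditioning the classical Ahlswede--Csisz\'ar two-user secret-key formula on the shared $\RG$.

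For the achievability I would construct a linear key-agreement scheme. Writing $\RZ_i = \RX \MM_i$, $\RZ_{\opw} = \RX \MW$, and choosing $\MM_{\RG_1}$ so that $\op{span}(\MM_{\RG_1}) = \op{span}(\MM_1) \cap \op{span}(\MW)$ and $\RG_1 = \RX \MM_{\RG_1}$, the scheme publicly announces $\RG_1$ (no new leakage, since $\RG_1$ is a function of $\RZ_{\opw}$) and takes the key as $\RK = \RX \MK$, with $\MK$ of maximum rank subject to
\begin{align*}
\op{span}(\MK) &\subseteq \op{span}(\MM_1) \cap (\op{span}(\MM_2) + \op{span}(\MM_{\RG_1})),\\
\op{span}(\MK) \cap \op{span}(\MW) &= \{0\}.
\end{align*}
The first condition makes $\RK$ computable by both users, because after the announcement user~2 knows $\RZ_2$ together with $\RG_1$; the second makes $\RK$ independent of $\RZ_{\opw}$ and hence of the wiretapper's total view, so the recovered key is perfectly secret. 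Using the modular law (valid since $\op{span}(\MM_{\RG_1}) \subseteq \op{span}(\MM_1)$) together with the defining identity $\op{span}(\MM_1) \cap \op{span}(\MW) = \op{span}(\MM_{\RG_1})$, a short dimension count shows the maximum rank of $\MK$ equals $I(\RZ_1 \wedge \RZ_2 | \RG_1)/\log q$, yielding a linear perfect SKA scheme at the target rate. The case $\RG = \RG_2$ is symmetric, and the equivalence with $\RG = (\RG_1, \RG_2)$ follows from the FLS identity $I(\RZ_1 \wedge \RG_2 | \RG_1) = 0$, itself a consequence of the subspace equality $\op{span}(\MM_{\RG_1}) \cap \op{span}(\MM_{\RG_2}) = \op{span}(\MM_1) \cap \op{span}(\MM_2) \cap \op{span}(\MW)$.

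The main obstacle is the dimension-count step of the achievability: verifying that $S := (\op{span}(\MM_1) \cap \op{span}(\MM_2)) + \op{span}(\MM_{\RG_1})$ satisfies $\dim(S \cap \op{span}(\MW)) = \dim \op{span}(\MM_{\RG_1})$, so that the maximum rank of $\MK$ actually matches the claimed rate. Both this identity and the equivalence of the three choices of $\RG$ rest on careful use of the modular law combined with the m.c.f.\ characterization in the FLS setting. Once this calculation is in place, the perfect-secrecy conditions and the rate match the upper bound, pinning down $\wskc = I(\RZ_1 \wedge \RZ_2 | \RG)$, and Theorem~\ref{thm:linearscheme} concludes the duality $\rl = H(\RZ_1, \RZ_2 | \RZ_{\opw}) - \wskc$.
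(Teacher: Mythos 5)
Your proposal is correct, and it establishes the right quantities, but the achievability is a genuinely different route from the paper's. The converses essentially coincide: both bound $\wskc(\RZ_V||\RZ_{\opw})$ by $\pkc(\RZ_V|\RG) = I(\RZ_1\wedge\RZ_2|\RG)$ using the Markov chain $\RG-\RZ_{\opw}-\RZ_V$ and the Csisz\'ar--Narayan private-key formula (one small nit: the ``broadcast $\RG_i$'' step is unnecessary for the upper bound, since $\pkc(\RZ_V|\RG)$ trivially dominates $\wskc(\RZ_V||\RG)$ by granting the users strictly more; also the conditional two-user formula is Csisz\'ar--Narayan rather than Ahlswede--Csisz\'ar). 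For achievability, the paper directly constructs a block-length-one \emph{omniscience} discussion $\RF'$ via a careful linear decomposition of $(\RZ_V,\RZ_{\opw})$ into mutually independent pieces $(\RX_a,\RX_b,\RX'_c,\RG_1,\RZ'_{\opw},\ldots)$, verifies $I(\RZ_V\wedge\RZ_{\opw}|\RF')=0$ and $I(\RZ_1\wedge\RZ_2|\RF')=I(\RZ_1\wedge\RZ_2|\RG_1)$, and so bounds $\rl$ from above directly. You instead construct a block-length-one linear \emph{perfect SKA} scheme at rate $I(\RZ_1\wedge\RZ_2|\RG_1)$ and delegate the omniscience construction to Theorem~\ref{thm:linearscheme}, which delivers $\rl=H(\RZ_V|\RZ_{\opw})-\wskc$. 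Your dimension count does verify: with $\op{span}(\MM_{\RG_1})=\op{span}(\MM_1)\cap\op{span}(\MW)$ (the FLS m.c.f.\ characterization) and two applications of the modular law -- once using $\op{span}(\MM_{\RG_1})\subseteq\op{span}(\MM_1)$ and once using $\op{span}(\MM_{\RG_1})\subseteq\op{span}(\MW)$ -- the maximum rank of $\MK$ is $\dim(\op{span}(\MM_1)\cap\op{span}(\MM_2))-\dim(\op{span}(\MM_1)\cap\op{span}(\MM_2)\cap\op{span}(\MW))$, which equals $I(\RZ_1\wedge\RZ_2|\RG_1)/\log q$. Your handling of the three $\RG$-choices via the identity $I(\RZ_1\wedge\RG_2|\RG_1)=0$ is correct and arguably cleaner than the paper's squeezing argument between the converse and the $\RG_1$-achievability. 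The trade-off: your route is more modular and illustrates the utility of Theorem~\ref{thm:linearscheme}, but it couples Theorem~\ref{thm:fls} to that result, whereas the paper's proof is self-contained.
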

\begin{proof}
See Appendix~\ref{app:thm:fls}.
\end{proof}

The following example compares the result of Theorem~\ref{thm:fls} with that of Csisz\'ar and Narayan~\cite{csiszar04} for the case of  two-user FLSs. Recall from Section~\ref{sec:wska:def} that for general sources, $\skc(\RZ_V) = H(\RZ_V) - \rco(\RZ_V)$ and $\pkc(\RZ_V | \RZ_{\opw}) = H(\RZ_V|\RZ_{\opw}) - \rco(\RZ_V|\RZ_{\opw})$; it also holds that $\wskc(\RZ_V \| \RZ_{\opw}) \leq \min\{\skc(\RZ_V),\pkc(\RZ_V | \RZ_{\opw})\}$, and $\rco(\RZ_V|\RZ_{\opw})\leq \rl(\RZ_V \| \RZ_{\opw}) \leq \rco(\RZ_V)$. For the source considered in the next example, we see that all these inequalities are strict.
\begin{example}
Let $V=\{1, 2\}$, and let $\RX= (\RX_{a}, \RX_{b}, \RX_{c}, \RX_{d})$ be a random vector uniformly distributed over $\mathbb{F}_2^4$. Consider the two-user FLS  $(\RZ_V, \RZ_{\opw})$ with
\begin{gather*}
   \RZ_1= (\RX_a, \RX_b, \RX_c)  \quad \RZ_2= (\RX_b, \RX_c, \RX_d) \quad \RZ_{\opw}= (\RX_b+\RX_c, \RX_a+\RX_d).
\end{gather*}
For this source, $\RG_1=\RG_2 = \RG=\RX_b+\RX_c$.
It follows from Theorem~\ref{thm:fls} that $\wskc(\RZ_V \| \RZ_{\opw})= I(\RZ_1 \wedge \RZ_2|\RG)= H(\RZ_1|\RG)-H(\RZ_1|\RG, \RZ_2)= H(\RX_a, \RX_b, \RX_c|\RX_b+\RX_c)-H(\RX_a, \RX_b, \RX_c|\RX_b+\RX_c, \RX_b, \RX_c, \RX_d)= 2-1= 1 \text{ bit}$. On the other hand, the secret key capacity of this source is $\skc(\RZ_V)=I(\RZ_1 \wedge \RZ_2)=H(\RZ_1)-H(\RZ_1|\RZ_2)=3-1=2 \text{ bits}$; and the private key capacity is $\pkc(\RZ_V|\RZ_{\opw})=I(\RZ_1 \wedge \RZ_2|\RZ_{\opw})=H(\RZ_1|\RZ_{\opw})-H(\RZ_1|\RZ_2,\RZ_{\opw})=2-0=2 \text{ bits}$. Observe that $$1=\wskc(\RZ_V \| \RZ_{\opw})< \min \{\skc(\RZ_V), \pkc(\RZ_V|\RZ_{\opw})\}=2.$$

Similarly, using the duality in Theorem~\ref{thm:fls} and the results of \cite{csiszar04}, we get $\rl (\RZ_V\|\RZ_{\opw})= H(\RZ_1,\RZ_2|\RZ_{\opw}) - \wskc(\RZ_V\|\RZ_{\opw})=2-1=1\text{ bit}$, $\rco(\RZ_V) = H(\RZ_1,\RZ_2) - \skc(\RZ_V)=4-2=2\text{ bits}$, and $\rco(\RZ_V|\RZ_{\opw}) = H(\RZ_1,\RZ_2|\RZ_{\opw}) - \pkc(\RZ_V|\RZ_{\opw}) = 2-2 =0 $. Thus, we have $\rco(\RZ_V|\RZ_{\opw})< \rl(\RZ_V\|\RZ_{\opw}) < \rco(\RZ_V)$.

\end{example}

In the next section, we prove the duality between secure omniscience and wiretap secret key agreement for tree-PIN sources with linear wiretapper, a sub-class of FLSs. We also give single-letter expressions for $\rl$ and $\wskc$ for this model.

\section{Tree-PIN source with linear wiretapper}\label{sec:treepin}
 A source $\RZ_V$ is said to be \emph{tree-PIN} if there exists a tree $T=(V,E,\xi)$ and for each edge $e \in E$, there is a non-negative integer $n_e$ and a random vector $\RY_e = \left( \RX_{e,1}, \ldots, \RX_{e,n_e} \right)$. We assume that the collection of random variables $\RX :=(\RX_{e,k}: e\in E, k \in [n_e])$ are i.i.d. and each component is  uniformly distributed over a finite field, say $\Fq$. For $i \in V$,
 \begin{align*}
  \RZ_i = \left( \RY_e : i \in \xi (e) \right).
 \end{align*} 

 The linear wiretapper's side information $\RZ_{\opw}$ is defined as 
\begin{align*}
 \RZ_{\opw} = \RX \MW,
\end{align*}
where $\RX$ is a $1 \times (\sum_{e \in E}n_e)$ vector and $\MW$ is a $(\sum_{e \in E}n_e) \times n_w$ full column-rank matrix over $\Fq$. We sometimes refer to $\RX$ as the base vector. We refer to the pair $(\RZ_V, \RZ_{\opw})$ defined as above as a \emph{tree-PIN source with linear wiretapper}. This is a special case of an FLS. 

\begin{example}
Consider the tree $T$ in Fig.~\ref{fig:tree_pin_def} defined on  $V=\{1, \ldots,5\}$ with $E=\{a,b,c,d\}$. Let $\RY_a = (\RX_{a1}, \RX_{a2})$, $\RY_b = \RX_{b1}$, $\RY_c =\RX_{c1}$, and $\RY_d =\RX_{d1}$, where the base vector $\RX= (\RX_{a1}, \RX_{a2}, \RX_{b1}, \RX_{c1}, \RX_{d1})$ is uniformly distributed over $\mathbb{F}_2^5$. The corresponding source $\RZ_V$ is given by
\begin{gather*}
   \RZ_1= (\RX_{a1}, \RX_{a2}), \quad \RZ_2= (\RX_{a1}, \RX_{a2}, \RX_{b1}, \RX_{c1}), \quad \RZ_3=\RX_{c1}\\
    \RZ_4= (\RX_{b1}, \RX_{d1}), \quad \RZ_5= \RX_{d1}.
\end{gather*}

\begin{figure}[h]
    \centering
        \resizebox{0.85\width}{!}{\begin{tikzpicture}[-,>=stealth, line width=5pt,thick, auto]
\tikzstyle{vertex}=[circle, fill=black,inner sep=0pt, minimum size=5pt];

\node[vertex]      (2)        [label= above:{$2$}] {};
\node[vertex]      (1)        [above left = 3 em and 4 em of 2,label= above:{$1$}]  {};
\node[vertex]      (3)       [below left = 3 em and 4 em of 2,label= below:{$3$}] {};
\node[vertex]      (4)      [right = 4 em of 2,label= above:{$4$}]  {};
\node[vertex]      (5)      [right = 4 em of 4,label= above:{$5$}]  {};

\draw (1) -- (2) node (a) [midway, above] {$a$} ;
\draw (2) -- (4) node (b) [midway, above] {$b$} ;
\draw (2) -- (3) node (c) [midway, below] {$c$} ;
\draw (4) -- (5) node (d) [midway, above] {$d$} ;

\end{tikzpicture}}
        \caption{Tree $T$ corresponding to a tree-PIN model}\label{fig:tree_pin_def}
\end{figure}
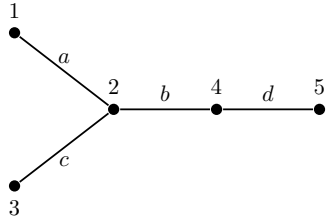
The observations of a wiretapper are $\RZ_{\opw}=(\RX_{a1}+\RX_{a2}, \RX_{a2}+\RX_{b1}+\RX_{d1})$.
This source $(\RZ_V, \RZ_{\opw})$ is an example of a tree-PIN source with a linear wiretapper. This is a special case of an FLS, as we can express it as $\bM \RZ_1 & \cdots& \RZ_m & \RZ_{\opw}\eM=\RX \bM \MM_1\;\cdots\;\MM_m \; \MW \eM$ for some matrices $\MM_1,\dots, \MM_m, \MW$. For example, we can write  $$\RZ_2= (\RX_{a1}, \RX_{a2}, \RX_{b1}, \RX_{c1})=\RX \underbrace{\bM 1&0&0&0\\0 & 1&0&0\\ 0 & 0&1&0\\0&0&0&1\\ 0 & 0&0&0 \eM}_{\MM_2},$$ 
$$\RZ_{\opw}=(\RX_{a1}+\RX_{a2}, \RX_{a2}+\RX_{b1}+\RX_{d1})=\RX \underbrace{\bM 1&0\\1 & 1\\ 0 & 1\\0&0\\ 0 & 1 \eM}_{\MW}.$$
\end{example}

\subsection{Motivating example} The following example of a tree-PIN source with linear wiretapper appeared in our earlier work \cite{chan20secure}, where we constructed an optimal secure omniscience scheme. Let $V=\{1,2,3,4\}$ and  
     \begin{align}
        \RZ_{\opw} &= \RX_a+\RX_b+\RX_c, \\
        \RZ_1 &= \RX_a, \; \RZ_2 = (\RX_a, \RX_b),\; \RZ_3 = (\RX_b, \RX_c),\; \RZ_4 =  \RX_c,
      \end{align}
  where $\RX_a$, $\RX_b$ and $\RX_c$ are uniformly random and independent bits. The tree here is a path of  length $3$ (Fig.~\ref{fig:exampletree}) and the wiretapper observes a linear combination of all the edge random variables. For secure omniscience, terminals 2 and 3, using $n=2$ i.i.d. realizations of the source, communicate linear combinations of their observations. The communication is of the form $\RF^{(2)} =(\tRF_2^{(2)},\tRF_3^{(2)})$, where  $\tRF_2^{(2)} =\RX^2_a+\MM \RX^2_b$ and $\tRF_3^{(2)}=(\MM + \MI) \RX_b^2 +\RX_c^2$ with $\MM:=\bM 1 & 1\\ 1 & 0\eM$.  Since the matrices $\MM$ and $\MM+\MI$ are invertible, all the terminals can recover $\RZ_V^2$ using this communication. For example, user 1 can first  recover $\RX_b^2$ from $(\RX_a^2, \tRF_2^{(2)})$ as $\RX_b^2 = (\MM+\MI)(\RX_a^2+ \tRF_2^{(2)})$, then $\RX_b^2$ can be used along with $\tRF_3^{(2)}$ to recover $\RX_c^2$ as $\RX_c^2 = (\MM+\MI)\RX_b^2+ \tRF_3^{(2)}$.  More interestingly, this communication is ``aligned" with the eavesdropper's observations, since $\RZ^2_{\opw} = \tRF_2^{(2)}+\tRF_3^{(2)}$. This scheme achieves $R_L$, which is 1 bit. 
  
  For minimizing leakage, this kind of alignment must happen. For example, if $\RZ^2_{\opw}$ were not contained in the span of $\tRF_2^{(2)}$ and $\tRF_3^{(2)}$, then the wiretapper could infer a lot more from the communication.  Ideally, if one wants zero leakage, then $\RF^{(n)}$ must be within the span of $\RZ^n_{\opw}$, which is not feasible in many cases because, with that condition, the communication might not achieve omniscience in the first place. Therefore keeping this in mind, it is reasonable to assume that there can be components of $\RF^{(n)}$ outside the span of $\RZ^n_{\opw}$. But we look for communication schemes that span as much of $\RZ_{\opw}$ as possible. Such an alignment condition is used to control the leakage. In this particular example, it turned out that an omniscience communication that achieves $\rco$ can be made to align with the wiretapper side information completely, i.e., $H(\RZ^n_{\opw}|\RF^{(n)})=0$.  Motivated by this example, we show that it is always possible for some omniscience communication to achieve complete alignment with the wiretapper's observations within the class of tree-PIN sources with linear wiretapper.

\begin{theorem}\label{thm:cwsk:red} 
For a tree-PIN source $\RZ_V$ with linear wiretapper observing $\RZ_{\opw}$,
\begin{align*}
\wskc &= \min_{e \in E} H(\RY_e|\op{mcf}(\RY_e,\RZ_{\opw})),  \\ 
\rl &=\left(\sum_{e \in E}n_e -n_w\right)\log_2q -\wskc \text{ bits}.
\end{align*}
In fact, a linear non-interactive scheme  is  sufficient to achieve both $\wskc$ and $\rl$ simultaneously.
\end{theorem}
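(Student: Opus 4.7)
The plan is to match an upper bound on $\wskc$ with a linear non-interactive scheme achieving $\wskc$, and then derive the $\rl$ formula from Theorem~\ref{thm:linearscheme}.

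\textit{Upper bound on $\wskc$.} For each edge $e$, removing $e$ partitions $V$ into two sides $V_1, V_2$. Merging users within each side can only increase the capacity, so
\[
\wskc(\RZ_V \| \RZ_{\opw}) \le \wskc(\RZ_{V_1}, \RZ_{V_2} \| \RZ_{\opw}).
\]
The merged pair is a two-user finite linear source, and Theorem~\ref{thm:fls} evaluates the right-hand side as $I(\RZ_{V_1} \wedge \RZ_{V_2} \mid \op{mcf}(\RZ_{\opw}, \RZ_{V_1}))$. Using that the only edge-variable shared between the two sides is $\RY_e$ (all other edge-variables are mutually independent and lie on one side of the cut) together with the linear-source identity $H(\op{mcf}(\RU,\RV)) = I(\RU\wedge \RV)$, this conditional mutual information simplifies to $H(\RY_e \mid \op{mcf}(\RY_e, \RZ_{\opw}))$. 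Minimizing over $e$ yields $\wskc \le \min_e H(\RY_e \mid \op{mcf}(\RY_e, \RZ_{\opw}))$.

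\textit{Achievability via a linear perfect SKA.} Fix $e^*$ attaining the minimum and decompose each $\RY_e = (\RG_e, \RH_e)$ linearly with $\RG_e := \op{mcf}(\RY_e, \RZ_{\opw})$; because the m.c.f.\ exhausts the mutual information in the linear setting, $\RH_e$ is independent of $\RZ_{\opw}$ and $H(\RH_e) = H(\RY_e \mid \op{mcf}(\RY_e, \RZ_{\opw}))$. Root the tree at one endpoint of $e^*$, and for every other edge $e$ have a designated endpoint publicly broadcast (i) the aligned $\RG_e$, a linear function of $\RZ_{\opw}$ contributing zero conditional leakage, and (ii) a suitably designed linear combination of $\RY_e$ and $\RH_{e^*}$ that lets both endpoints of $e$ recover $\RH_{e^*}$. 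Over $n$ i.i.d.\ copies, using invertible-matrix mixing in the style of the $(\MM, \MM+\MI)$ construction from the motivating example, these combinations are engineered so that the entire transcript $\RF$, together with $\RZ_{\opw}$, leaves $\RH_{e^*}$ perfectly independent of $(\RF, \RZ_{\opw})$. Inductively along the tree, every user recovers $\RH_{e^*}$ and then all other $\RH_e$'s, so omniscience is attained; the resulting linear non-interactive perfect SKA achieves rate $H(\RH_{e^*}) = \min_e H(\RY_e \mid \op{mcf}(\RY_e, \RZ_{\opw}))$, matching the upper bound.

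\textit{Leakage formula and simultaneous scheme.} Since $\wskc$ is attained by a linear perfect SKA, Theorem~\ref{thm:linearscheme} yields $\rl = H(\RZ_V \mid \RZ_{\opw}) - \wskc$. To put this into the stated form, $\MW$ has full column rank (so $H(\RZ_{\opw}) = n_w \log_2 q$) and $\RZ_V$ determines the base vector $\RX$ because each edge variable is observed by both endpoints (so $H(\RZ_V) = \sum_e n_e \log_2 q$); hence $H(\RZ_V \mid \RZ_{\opw}) = (\sum_e n_e - n_w)\log_2 q$. The proof of Theorem~\ref{thm:linearscheme} also promotes the linear perfect SKA into a linear non-interactive secure omniscience scheme at the matching leakage rate, so a single linear non-interactive scheme achieves $\wskc$ and $\rl$ simultaneously.

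\textit{Main obstacle.} The delicate point is the alignment step in the construction: although each individual $\RH_e$ is independent of $\RZ_{\opw}$, the residuals across different edges need not be jointly independent of $\RZ_{\opw}$, and a naive one-time-pad (as illustrated by trying $n=1$ in the motivating example) generally leaks more than necessary. Ensuring that the aggregate of broadcasts lies in exactly the right affine subspace over $\Fq$ requires blocklength $n>1$ and a careful choice of mixing matrices whose pairs $(\MM, \MM+\MI)$ are both invertible; the tree's acyclicity is essential, since it prevents the residuals propagating along distinct edges from creating redundant linear constraints that would expose $\RH_{e^*}$.
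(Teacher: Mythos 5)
Your upper bound via cuts is a genuinely different route from the paper. For each edge $e$ you split $V$, merge each side into a super-user, invoke Theorem~\ref{thm:fls}, and then claim the two-user expression collapses to $H(\RY_e \mid \op{mcf}(\RY_e,\RZ_{\opw}))$. This does appear to go through, but the collapse is not a one-liner: you need first that $I(\RZ_{V_1}\wedge\RZ_{V_2}\mid\RG_1)=H(\RY_e\mid\RG_1)$ (using that $\RR_i := \RZ_{V_i}\setminus\RY_e$ are independent of each other, of $\RY_e$, and of $\RG_1$ given the rest), and then that $H(\RY_e\mid\RG_1)=H(\RY_e\mid\op{mcf}(\RY_e,\RZ_{\opw}))$, which requires arguing that any further common linear function of $\RY_e$ and $\RG_1$ would contradict the maximality of $\op{mcf}(\RY_e,\RZ_{\opw})$ since $\RG_1$ is itself a function of $\RZ_{\opw}$. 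The paper instead reduces to an irreducible source (Lemma~\ref{lem:irred}) and then uses the one-line bound $\wskc\le\skc=\min_e H(\RY_e)$ from Csisz\'ar--Narayan; your route avoids the reduction on the converse side but hides comparable work inside the collapse step. Similarly, going from ``a linear perfect SKA achieves $\wskc$'' to the $\rl$ formula via Theorem~\ref{thm:linearscheme} is a nice alternative to the paper's direct construction of an $\rl$-achieving omniscience scheme, and it is not circular.

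The genuine gap is in the achievability of $\wskc$ by a linear perfect SKA, which you explicitly flag as the ``main obstacle'' and then do not resolve. In the paper this is the bulk of the proof: after reduction to an irreducible source, it fixes a communication template $\tRF_i^{(n)}=\RY^n_{e^*(i)}+\RY^n_e\MA_{i,e}$, shows that invertibility of the $\MA_{i,e}$ yields perfect omniscience, characterizes the left nullspace of the communication matrix $\MF^{(n)}$ by a row vector $\MS$, and proves via Lemma~\ref{lem:upbdirred} (to get the nullspace dimension bound from irreducibility), Lemma~\ref{lem:nonzeropoly}, and the Schwartz--Zippel lemma that one can pick the $\MA_{i,e}$ so that $\MS\MW^{(n)}=\M0$, i.e., perfect alignment. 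Your sketch instead proposes ``invertible-matrix mixing in the style of the $(\MM,\MM+\MI)$ construction,'' but that pairing is tied to the specific wiretapper in the motivating example; for a general $\MW$ the alignment constraint $\MS\MW^{(n)}=\M0$ is an additional system that need not be compatible with having every $\MS_e$ invertible unless you actually establish it, and that requires both the irreducibility reduction and the polynomial non-vanishing argument. Also, your described broadcasts (a mix of $\RY_e$ and $\RH_{e^*}$ from a designated endpoint of $e$) cannot be formed by nodes that do not see $\RH_{e^*}$; the communication has to propagate hop by hop, as in the paper's $\RY^n_{e^*(i)}+\RY^n_e\MA_{i,e}$. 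Without the existence argument for the aligned matrices, the claim that the transcript leaves $\RH_{e^*}$ perfectly secret is unproven, and hence so is the application of Theorem~\ref{thm:linearscheme}.
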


% The above theorem shows that the intrinsic upper bound on $\wskc$ holds with equality. In the multiterminal setting, the intrinsic bound that follows from \cite[Theorem 4]{csiszar04} is given by 
% \begin{align*}
%  \wskc(\RZ_V \|\RZ_{\opw}) \leq \min_{\RJ-\RZ_{\opw}-\RZ_V}\pkc(\RZ_V|\RJ).
% \end{align*}
%  This is analogous to the intrinsic bound for the two-terminal case \cite{maurer99intrinsic}. 
% For the class of tree-PIN sources with linear wiretapper,  when $\RJ^*= \left( \op{mcf}(\RY_e,\RZ_{\opw}) \right)_{e \in E}$, it can be shown that  $\pkc(\RZ_V|\RJ^*)= \min_{e \in E} H(\RY_e|\op{mcf}(\RY_e,\RZ_{\opw})) $. This can be derived using the characterization  in \cite{csiszar04} of the conditional minimum rate of communication for omniscience, $\rco(\RZ_V|\RJ^*)$. In fact, the same derivation can also be found in \cite{alireza19} for a  $\RJ$ that is obtained by passing the edge random variables through independent channels. In particular, $\RJ^{*}$ is a function of edge random variables $(\RY_e)_{e \in E}$ because $\op{mcf}(\RY_e, \RZ_{\opw})$ is a function of $\RY_e$. Therefore, we can see that $\pkc(\RZ_V|\RJ^*)$, which is an upper bound on $ \min_{\RJ-\RZ_{\opw}-\RZ_V}\pkc(\RZ_V|\RJ)$, matches with the $\wskc$ obtained from Theorem~\ref{thm:cwsk:red}.

 The theorem guarantees that we can achieve the wiretap secret key capacity in the tree-PIN case with linear wiretapper through a linear secure omniscience scheme, which establishes the duality between the two problems. This illustrates that omniscience can be helpful even beyond the case when there is no wiretapper side information.

 Our proof of Theorem~\ref{thm:cwsk:red} is through a reduction to the particular subclass of \emph{irreducible} sources, which we defined next.
 
\begin{definition} \label{def:irreducible}
 A tree-PIN source with linear wiretapper is said to be \emph{irreducible} if $\op{mcf}(\RY_e, \RZ_{\opw}) $ is a constant function for every edge $e \in E$ .
\end{definition}

Whenever there is an edge $e$ such that $\RG_e:=\op{mcf}(\RY_e, \RZ_{\opw}) $ is a non-constant function, the user corresponding to a vertex incident on $e$ can reveal $\RG_e$ to the other users. This communication does not leak any additional information to the wiretapper because $\RG_e$ is a function of $\RZ_{\opw}$. Intuitively, for further communication, $\RG_e$ is not useful and hence can be removed from the source. After the reduction, the m.c.f. corresponding to $e$ becomes a constant function. In fact, we can carry out the reduction until the source becomes irreducible. This idea of reduction is illustrated in the following example.

\begin{Example}
Let us consider a source $\RZ_V$ defined on a path of length 3, which is shown in Fig.~\ref{fig:exampletree}. Let $\RY_a = (\RX_{a1}, \RX_{a2})$, $\RY_b = \RX_{b1}$ and $\RY_c =\RX_{c1}$, where $\RX_{a1}$, $\RX_{a2}$, $\RX_{b1}$ and $\RX_{c1}$ are uniformly random and independent bits. 
\begin{figure}[h]
\centering
\resizebox{\width}{1cm}{\begin{tikzpicture}[-,>=stealth,thick, auto]
\tikzstyle{vertex}=[circle,fill=black,inner sep=0pt,minimum size=5pt];

\node[vertex]      (1)        [label= below:{$1$}] {};
\node[vertex]      (2)  [right  = 4 em of 1,label= below:{$2$}] {};
\node[vertex]      (3)  [right  = 4 em of 2,label= below:{$3$}] {};
\node[vertex]      (4)  [right  = 4 em of 3,label= below:{$4$}] {};

\draw (1) -- (2) node [midway, above] {$a$};
\draw (2) -- (3) node [midway, above] {$b$};
\draw (3) -- (4) node [midway, above] {$c$};
\end{tikzpicture}}
\caption{A path of length 3}
\label{fig:exampletree}
 \end{figure}
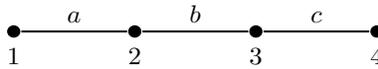
 If $\RZ_{\opw}=\RX_{b1}+\RX_{c1}$, then the source is irreducible because $\op{mcf}(\RY_e, \RZ_{\opw})$ is a constant function for all $e \in \{a,b,c\}$. 
 
 However if  $\RZ_{\opw}=(\RX_{a1}+\RX_{a2},  \RX_{b1}+\RX_{c1})$, then the source is not irreducible, as $\op{mcf}(\RY_a, \RZ_{\opw}) =\RX_{a1}+\RX_{a2}$, which is a non-constant function. An equivalent representation of the source is
 $\RY_a = (\RX_{a1}, \RG_{a})$, $\RY_b = \RX_{b1}$, $\RY_c =\RX_{c1}$ and $\RZ_{\opw}=(\RG_{a}, \RX_{b1}+\RX_{c1})$, where $\RG_{a}=\RX_{a1}+\RX_{a2}$, which is also a uniform bit independent of $(\RX_{a1}, \RX_{b1}, \RX_{c1})$. So, for omniscience, user 2 initially can reveal $\RG_{a}$ without affecting the information leakage as it is completely aligned to $\RZ_{\opw}$. Since everyone has $\RG_a$, users can just communicate according to the omniscience scheme corresponding to the source without $\RG_a$. Note that this new source is irreducible.
\end{Example}

The next lemma shows that the kind of reduction to an irreducible source used in the above example is indeed optimal in terms of $R_L$ and $\wskc$ for all tree-PIN sources with linear wiretapper.
\begin{lemma}\label{lem:irred} 
 If a tree-PIN source with linear wiretapper $(\RZ_V,\RZ_{\opw})$ is not irreducible then there exists an irreducible source $(\tRZ_V, \tRZ_{\opw})$ such that 
 \begin{align*}
\wskc(\RZ_V\| \RZ_{\opw}) = \wskc(\tRZ_V\|\tRZ_{\opw}),\\ \rl(\RZ_V\|\RZ_{\opw}) = \rl(\tRZ_V\|\tRZ_{\opw}),\\
H(\RY_e|\op{mcf}(\RY_e,\RZ_{\opw})) = H(\tRY_e)
\end{align*}
for all $e \in E$.
\end{lemma}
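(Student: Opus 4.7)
My plan is to perform an iterative reduction that absorbs common functions into the public discussion. Suppose $(\RZ_V, \RZ_{\opw})$ is not irreducible, so that $\RG_e := \op{mcf}(\RY_e, \RZ_{\opw})$ is non-constant for some edge $e$. Because $(\RY_e, \RZ_{\opw})$ is a pair of finite linear sources, the cited result of \cite{chan18zero} guarantees that $\RG_e$ is a linear function of both $\RY_e$ and $\RZ_{\opw}$, and uniformity of the underlying base vector lets me choose bases so that, up to invertible linear maps, $\RY_e = (\RG_e, \RY_e')$ with $\RY_e'$ uniform and independent of $\RG_e$, and $\RZ_{\opw} = (\RG_e, \RZ_{\opw}')$. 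The one-step reduced source keeps $\RY_f$ for $f \neq e$, replaces $\RY_e$ by $\RY_e'$, and replaces $\RZ_{\opw}$ by $\RZ_{\opw}'$; it remains a tree-PIN source with linear wiretapper on the same tree.

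Next I would establish that this one-step reduction preserves both $\wskc$ and $\rl$ via a two-way simulation. In the direction \emph{reduced $\Rightarrow$ original}, any SKA or omniscience scheme on the reduced source is directly a scheme on the original (the users simply ignore the extra $\RG_e$ coordinate in their observations), and its secrecy or leakage carries over because $\RG_e$ is independent of the reduced-source random variables (by the decomposition together with tree-PIN edge independence), so the extra wiretapper coordinate contributes nothing new. In the direction \emph{original $\Rightarrow$ reduced}, one of the two users in $\xi(e)$ samples a fresh $\RG_e^n$ as private randomness and broadcasts it on the public channel before running the original scheme; the broadcast hands $\RG_e^n$ to the other endpoint of $e$ and to the wiretapper, so that the joint distribution of users' observations and wiretapper's view matches the original source exactly. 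The added broadcast contributes no leakage rate because $\RG_e^n$ is independent of the reduced source, and it contributes nothing asymptotic to the key length, so key and leakage rates match. Iterating must terminate since the base vector's dimension strictly drops at each step, yielding an irreducible $(\tRZ_V, \tRZ_{\opw})$ with identical $\wskc$ and $\rl$.

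For the third identity $H(\RY_e \mid \op{mcf}(\RY_e, \RZ_{\opw})) = H(\tRY_e)$, let $\tRG_e$ denote the aggregate of all parts removed from $\RY_e$ across the iterations, so that $H(\tRY_e) = H(\RY_e) - H(\tRG_e)$. The upper bound $H(\tRG_e) \leq H(\op{mcf}(\RY_e, \RZ_{\opw}))$ is immediate from maximality once I verify that $\tRG_e$ is a common function of $\RY_e$ and the \emph{original} $\RZ_{\opw}$, which it is because each removal $\op{mcf}(\RY_e^{(k)}, \RZ_{\opw}^{(k)})$ pulls back through the chain of linear quotients to a linear function of both $\RY_e$ and $\RZ_{\opw}$. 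For the matching lower bound, the crucial observation is that the tree-PIN edge independence $\RY_e \perp \RY_f$ forces $\op{mcf}(\RY_e, \RZ_{\opw})$ and $\op{mcf}(\RY_f, \RZ_{\opw})$ to be independent, so quotienting $\RZ_{\opw}$ by the latter does not shrink the former; consequently, regardless of the order in which edges are reduced, a single reduction step at $e$ extracts the full $\op{mcf}(\RY_e, \RZ_{\opw})$, yielding $H(\tRG_e) \geq H(\op{mcf}(\RY_e, \RZ_{\opw}))$.

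I expect the main obstacle to lie in the \emph{original $\Rightarrow$ reduced} half of the simulation, where the publicly broadcast randomness $\RG_e^n$ must be injected at the start of what may be a genuinely \emph{interactive} protocol, and I need careful bookkeeping to confirm that recoverability, secrecy, and the relevant Markov structure are all preserved intact. A secondary subtlety is the clean formalization of the final independence argument\textemdash that the mcfs at distinct edges of a tree-PIN source with linear wiretapper really occupy independent linear sub-spaces of $\RZ_{\opw}$\textemdash which is most transparent via a direct linear-algebraic inspection of the block structure of the wiretapper's generator matrix after an appropriate change of basis.
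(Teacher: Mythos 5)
Your reduction is the same one the paper uses (strip off $\RG_e = \op{mcf}(\RY_e,\RZ_{\opw})$ one edge at a time, using the linearity of the m.c.f.\ from \cite{chan18zero}, preserving $\wskc$ and $\rl$ via a two-way translation of schemes, and iterating to termination). However, your handling of the direction that yields $\rl(\RZ_V\|\RZ_{\opw}) \leq \rl(\tRZ_V\|\tRZ_{\opw})$ has a genuine gap. You claim that ``any SKA or omniscience scheme on the reduced source is directly a scheme on the original (the users simply ignore the extra $\RG_e$ coordinate in their observations).'' For SKA this is fine: the wiretapper gains $\RG_e^n$, but it is independent of the reduced source so secrecy is unaffected. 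For \emph{omniscience} it is false: the reduced scheme lets every user recover $\tRZ_V^n$, but in the original source each user must recover $\RZ_V^n$, which includes $\RY_e^n = (\RG_e^n,\tRY_e^n)$, and $\RG_e^n$ is a function of $\RY_e^n$ known a priori only to the two endpoints of $e$. A user not incident on $e$ cannot produce $\RG_e^n$ from the reduced scheme alone, so omniscience fails. The fix (and what the paper actually does) is to have one endpoint of $e$ \emph{transmit} $\RG_e^n$ as an extra public message before running the reduced scheme; this costs nothing in leakage since $H(\RG_e^n \mid \RZ_{\opw}^n)=0$ and $\RG_e$ is independent of $(\tRZ_V,\tRZ_{\opw})$, so $I(\RZ_V^n \wedge \RG_e^n,\tRF^{(n)} \mid \RZ_{\opw}^n) = I(\tRZ_V^n \wedge \tRF^{(n)}\mid \tRZ_{\opw}^n)$. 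You need to insert this transmission step; without it the bound does not follow.

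On the other points: the obstacle you anticipate in the \emph{original $\Rightarrow$ reduced} direction (injecting broadcast randomness into an interactive protocol) is not actually problematic, since $\RG_e^n$ can be generated by one endpoint of $e$, broadcast as the first message, and then the original interactive protocol runs unchanged on the simulated source; one only compares the final joint distribution of $(\RK,\RF,\RZ_{\opw})$, which matches. Your termination argument (base-vector dimension drops strictly) is valid, though the paper instead tracks the number of edges with non-constant m.c.f.; both work. Finally, your sketch of the third identity $H(\RY_e\mid\op{mcf}(\RY_e,\RZ_{\opw}))=H(\tRY_e)$ relies on exactly the fact the paper isolates as Lemma~\ref{lem:indgk} (if $(\RX,\RY)$ is independent of $\RZ$, then $\op{mcf}(\RX,(\RY,\RZ))=\op{mcf}(\RX,\RY)$ and $\op{mcf}((\RX,\RZ),(\RY,\RZ))=(\op{mcf}(\RX,\RY),\RZ)$); you are right that this deserves a self-contained proof, and stating it as a lemma is the cleanest way to show that reducing one edge neither shrinks the m.c.f.\ at that edge nor perturbs the m.c.f.\ at the others.
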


\begin{proof}
See Appendix~\ref{lem:proof:irred}.
\end{proof}
Note that, in the above lemma, the scheme that achieves $\rl(\RZ_V\|\RZ_{\opw})$  involves revealing the reduced m.c.f. components first and then communicating according to the scheme that achieves $\rl(\tRZ_V\|\tRZ_{\opw})$. As a consequence of Lemma~\ref{lem:irred}, to prove Theorem~\ref{thm:cwsk:red}, it suffices to consider only irreducible sources. For ease of reference, we re-state the theorem for irreducible sources below.

\begin{theorem}\label{thm:cwsk:irred} 
If a tree-PIN source $\RZ_V$ with linear wiretapper $\RZ_{\opw}$ is irreducible then 
\begin{align*}
\wskc &= \min_{e \in E} H(\RY_e)=\skc, \\
\rl &=\left(\sum_{e \in E}n_e -n_w\right)\log_2q- \wskc\text{ bits},
\end{align*}
where $\skc$ is the secret key capacity of Tree-PIN source without the wiretapper side information~\cite{csiszar04}.
\end{theorem}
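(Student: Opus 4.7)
The plan is to prove both claims simultaneously by exhibiting an optimal secure omniscience scheme and then applying privacy amplification. First I would compute $H(\RZ_V|\RZ_{\opw})$ directly: since each edge variable $\RY_e$ is observed by the users at the two endpoints of $e$, the tuple $\RZ_V$ jointly determines the entire base vector $\RX$, and the full column rank of $\MW$ gives $H(\RZ_{\opw})=n_w\log_2 q$ while $H(\RZ_{\opw}\mid \RZ_V)=0$. Hence $H(\RZ_V|\RZ_{\opw})=\bigl(\sum_{e\in E} n_e - n_w\bigr)\log_2 q$ bits. Combined with the trivial bound $\wskc\le \skc$ and the prior tree-PIN result $\skc=\min_e H(\RY_e)$, Theorem~\ref{thm:RL:lb} already delivers the lower bound $\rl \ge \bigl(\sum_e n_e - n_w - \min_e n_e\bigr)\log_2 q$.

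The main step is to construct a linear non-interactive omniscience scheme $\RF^{(n)}$, for some sufficiently large blocklength $n$, with three properties: (i)~every user recovers $\RZ_V^n$ from $(\RF^{(n)},\RZ_i^n)$; (ii)~$H(\RF^{(n)})/n = \rco = \bigl(\sum_e n_e - \min_e n_e\bigr)\log_2 q$; and (iii)~$\RZ_{\opw}^n$ is a deterministic linear function of $\RF^{(n)}$. Guided by the motivating example on the path of length three, I would fix an edge $e^* \in \arg\min_e n_e$, root the tree at an endpoint of $e^*$, and have each non-root vertex $v$ transmit a carefully chosen linear combination of the edge variables incident to $v$ using MDS-like matrices over $\Fq$ (enlarging $n$ if necessary so that such matrices exist). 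The MDS property ensures invertibility along every root-to-vertex path, giving~(i) and~(ii), while an additional linear system on the coefficients is imposed to force~(iii), the alignment with $\RZ_{\opw}^n$. Irreducibility is exactly the slack needed to keep this alignment system solvable: because no nonzero column of $\MW$ is supported on a single edge, the wiretapper's observation genuinely decomposes along the tree and can be matched by the transmissions.

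Given such a scheme, property~(iii) immediately yields $I(\RF^{(n)}\wedge \RZ_V^n\mid \RZ_{\opw}^n) = H(\RF^{(n)})-H(\RZ_{\opw}^n)$, so the per-symbol leakage rate equals $\bigl(\sum_e n_e - \min_e n_e - n_w\bigr)\log_2 q$, matching the lower bound and determining $\rl$. Applying privacy amplification as in Theorem~\ref{thm:RL:lb} to the recovered $\RZ_V^n$ will extract a secret key of rate $(1/n)\,H(\RZ_V^n\mid \RZ_{\opw}^n,\RF^{(n)}) = \min_e H(\RY_e)$, giving $\wskc \ge \min_e H(\RY_e)$ and hence $\wskc = \min_e H(\RY_e) = \skc$. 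The hardest step will be exhibiting the matrices in the construction: proving that the invertibility (omniscience) and alignment (matching $\RZ_{\opw}^n$) conditions can be solved simultaneously. I would expect to discharge this with a genericity/Schwartz--Zippel argument over a sufficiently large extension field, in which both conditions are Zariski-open and irreducibility rules out the systematic obstruction of a wiretapper column concentrated on a single edge.
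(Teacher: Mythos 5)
Your proposal follows essentially the same path as the paper: converse via $\wskc\le\skc=\min_e H(\RY_e)$ combined with Theorem~\ref{thm:RL:lb}, then achievability by constructing a deterministic linear non-interactive communication at rate $\rco$ that both attains perfect omniscience and is perfectly aligned with $\RZ_{\opw}^n$ (i.e.\ $H(\RZ_{\opw}^n\mid\RF^{(n)})=0$), with existence of the coefficient matrices established by a Schwartz--Zippel genericity argument over the extension field $\bb{F}_{q^n}$ and irreducibility used precisely to rule out the determinant polynomial vanishing identically, followed by privacy amplification to read off the key rate. The only cosmetic difference is that the paper roots the tree at a leaf and handles arbitrary $n_e$ by splitting the edge vectors into a uniform $s$-component part and a remainder, rather than rooting at an endpoint of the minimizing edge as you suggest.
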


\begin{IEEEproof}[Proof outline]
The main component of the proof is the achievability part, which involves a construction of an omniscience scheme with leakage rate $ \left(\sum_{e \in E}n_e -n_w\right)\log_2q - \min_{e \in E} H(\RY_e)$. Since $\wskc$ is upper bounded by $\skc = \min_{e \in E} H(\RY_e)$, we have $ \rl \leq \left(\sum_{e \in E}n_e -n_w\right)\log_2q - \min_{e \in E} H(\RY_e) \leq \left(\sum_{e \in E}n_e -n_w\right)\log_2q- \wskc$.  This upper bound, together with the $\rl$ lower bound in Theorem~\ref{thm:RL:lb}, proves the theorem.

In the construction of this scheme, we start by assuming a certain linear structure for the communication $\RF^{(n)}$, with the coefficients of the linear combinations involved in the communication taken to be variables (termed ``communication coefficients'') whose values need to be determined. We then argue that the desired leakage rate can be achieved if we additionally impose two conditions on $\RF^{(n)}$, namely, perfect omniscience and perfect alignment. These conditions translate to certain linear-algebraic constraints on the communication coefficients in $\RF^{(n)}$. So it is enough to find an $\RF^{(n)}$ with its communication coefficients satisfying these constraints. To that end, we use the probabilistic method to argue that if the blocklength $n$ of the source is large enough, then there indeed exists such a realization of $\RF^{(n)}$. 

For clarity of exposition, we first execute this proof in the case of tree-PIN sources in which the underlying tree is a simple path, before generalizing it to the case of trees. The details are given in Appendix~\ref{thm:proof:cwsk:irred}.
\end{IEEEproof}

% \begin{proof}
% See Appendix~\ref{thm:proof:cwsk:irred}.
% \end{proof}

Theorem~\ref{thm:cwsk:irred} shows that, for irreducible sources, even when the wiretapper has side information, the users can still extract a key at rate $\skc$. In terms of secret key generation, the users are not really at a disadvantage if the wiretapper has linear observations.

\subsection{Constrained wiretap secret key capacity of tree-PIN source with linear wiretapper}
Secure omniscience in fact plays a role even in achieving the constrained wiretap secret key capacity of tree-PIN source with linear wiretapper. The constrained wiretap secret key capacity, denoted by $\wskc(R)$, is defined as in \eqref{eq:wskc} but with the supremum over all SKA schemes with $\limsup\frac{1}{n}\log|\mc{F}^{(n)}|<R$ where $\mc{F}^{(n)}$ is the alphabet of $\RF^{(n)}$. The following theorem gives a single-letter expression for the constrained wiretap secret key capacity, whose form is reminiscent of the constrained secret key capacity, \cite[Theorem~4.2]{chan19}. The proof involves a construction of a secure omniscience communication scheme for a part of the source.

\begin{theorem}\label{thm:rateconstrained_treepin}
Given a tree-PIN source $\RZ_V$ with a linear wiretapper $\RZ_{\opw}$, we have
\begin{align*}
\wskc(R) = \min \left\{\frac{R}{|E|-1}, \wskc\right\},
\end{align*}
where $R$ is the total discussion rate and $\wskc =\min_{e \in E}H(Y_e|\op{mcf}(Y_e,Z_{\opw}))$, which is the unconstrained wiretap secret key capacity.
\end{theorem}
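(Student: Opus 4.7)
The plan is to establish matching achievability and converse bounds for $\wskc(R)$. A first simplification is to reduce to the irreducible case via Lemma~\ref{lem:irred}: the removed m.c.f.\ components $\RG_e := \op{mcf}(\RY_e,\RZ_{\opw})$ are functions of $\RZ_{\opw}$, hence available to every party (including the wiretapper), so that supplying them as pre-shared common randomness in the reduced setup $(\tRZ_V,\tRZ_{\opw})$ recovers the original source exactly. Common randomness known to the wiretapper neither helps nor consumes the public-communication budget for wiretap SKA, so $\wskc(R)$ is unchanged under the reduction. In the irreducible case, Theorem~\ref{thm:cwsk:irred} gives $\wskc = \min_e H(\RY_e) = \skc$, which will enable comparison with the wiretapper-free problem.

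For achievability, fix $0 < k \leq \wskc$. The key construction is \emph{partial secure omniscience}. Choose a common integer $\ell = \lceil k/\log q\rceil \leq \min_e n_e$, select from each $\RY_e$ an $\ell$-dimensional sub-vector $\RY'_e$, and let $\RZ'_V := (\RY'_e)_{e\in E}$. Because the base vector $\RX$ has i.i.d.\ uniform components, $(\RZ'_V,\RZ_{\opw})$ is again a tree-PIN source with linear wiretapper on the same tree~$T$; and it remains irreducible, since any common function of the truncated $\RY'_e$ with $\RZ_{\opw}$ is \emph{a fortiori} a common function of the larger $\RY_e$ with $\RZ_{\opw}$, which by hypothesis is trivial. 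Theorem~\ref{thm:cwsk:irred} then supplies a linear non-interactive scheme that achieves, on $(\RZ'_V,\RZ_{\opw})$, both simultaneous omniscience and a wiretap secret key of rate $\min_e H(\RY'_e) = \ell\log q$; the total public communication equals the omniscience rate $\sum_e H(\RY'_e) - \skc' = (|E|-1)\ell\log q$. Combined with time-sharing to absorb non-integer $k/\log q$, this delivers key rate $k$ at public communication $(|E|-1)k$, so $\wskc(R)\geq\min\{R/(|E|-1),\wskc\}$.

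For the converse, $\wskc(R)\leq\wskc$ is immediate. For $\wskc(R)\leq R/(|E|-1)$, note that any wiretap-SKA scheme with $I(\RK^{(n)};\RF^{(n)},\RZ_{\opw}^n)\to 0$ also satisfies $I(\RK^{(n)};\RF^{(n)})\to 0$ by data processing, so it is a valid SKA scheme for the wiretapper-free source $\RZ_V$. Hence $\wskc(R)\leq\skc(R)$, where $\skc(R)$ is the rate-constrained SK capacity of the tree-PIN source without wiretapper. Specialising~\cite[Theorem~4.2]{chan19} to tree-PIN yields $\skc(R)=\min\{R/(|E|-1),\skc\}$, and in the irreducible case $\skc=\wskc$, matching the achievability.

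The main obstacle, in my view, is the asymmetric (irreducible) case in which the $H(\RY_e)$ differ. A naive time-sharing of the full scheme of Theorem~\ref{thm:cwsk:irred} would yield only key rate $R\,\wskc/\rco = R\min_e H(\RY_e)/\bigl(\sum_e H(\RY_e)-\min_e H(\RY_e)\bigr)$, strictly below $R/(|E|-1)$ whenever the $H(\RY_e)$ are unequal. The partial-source construction symmetrises all edge rates to $k$ so that $\rco' = (|E|-1)k$ matches the target exactly, closing the gap. A secondary subtlety is verifying that sub-vector selection preserves both the tree-PIN property (immediate from i.i.d.\ uniformity of $\RX$) and irreducibility with respect to the unchanged $\RZ_{\opw}$, which follows from the linear-algebraic characterisation of the m.c.f.\ for FLS recalled in Section~\ref{sec:duality_fls}.
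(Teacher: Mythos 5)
Your overall architecture mirrors the paper's: reduce to the irreducible case, obtain achievability via a secure-omniscience scheme on a trimmed sub-source, and deduce the converse from $\wskc(R)\le\skc(R)=\min\{R/(|E|-1),\skc\}$ together with $\skc=\wskc$ for irreducible sources. Two steps, however, are not actually justified as stated.

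First, the reduction. You claim that $\RG_e=\op{mcf}(\RY_e,\RZ_{\opw})$ is ``available to every party'' because it is a function of $\RZ_{\opw}$. That is false: the legitimate users do not observe $\RZ_{\opw}$. The quantity $\RG_e$ is known only to the two users incident on $e$ (via $\RY_e$) and to the wiretapper (via $\RZ_{\opw}$); it is \emph{not} shared common randomness of all users, so you cannot invoke ``common randomness known to the wiretapper does not help.'' In the unconstrained case Lemma~\ref{lem:irred} settles this, but for $\wskc(R)$ one must additionally show that the reduction does not distort the communication-rate accounting. The paper proves a separate rate-constrained reduction lemma (Lemma~\ref{lem:irred:rate}): the $\ge$ direction is trivial because the original users can simulate any scheme for the reduced source, while the $\le$ direction proceeds by conditioning the original scheme on a good realization $\RG_e^n=g_e^n$, showing that the conditioned scheme is a valid SKA scheme for the reduced source at no larger a communication rate, and appealing to the monotonicity of $R\mapsto\wskc(R)$. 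None of this is implied by the sentence you wrote.

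Second, the achievability step. You assert that $(\RZ'_V,\RZ_{\opw})$ ``is again a tree-PIN source with linear wiretapper on the same tree $T$.'' It is not: by Definition in Section~\ref{sec:treepin}, $\RZ_{\opw}$ must be a linear function of the base vector of edge random variables, i.e.\ of $\RX'=(\RY'_e)_{e\in E}$; but the original $\RZ_{\opw}=\RX\MW$ depends also on the discarded components $\RX''$. Hence Theorem~\ref{thm:cwsk:irred} cannot be applied as a black box to $(\RZ'_V,\RZ_{\opw})$. The paper instead performs an invertible linear change of coordinates that writes $\RZ_{\opw}=(\RZ'_{\opw},\RZ''_{\opw})$ with $\RZ'_{\opw}$ a linear function of $\RX'$ and $\RZ''_{\opw}$ independent of $\RX'$; it applies the construction of Theorem~\ref{thm:cwsk:irred} to the genuine tree-PIN pair $(\RZ'_V,\RZ'_{\opw})$ (whose irreducibility does follow by the argument you gave), and then checks separately that the independent component $\RZ''_{\opw}$ reveals nothing about the key. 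Without this decomposition, the secrecy claim for $\RK^{(n)}=\RY'^n_{e_0}$ against the full $\RZ_{\opw}^n$ is unsupported. Finally, a minor remark: constructing a scheme for each $\ell=\lceil k/\log q\rceil$ is unnecessary; it suffices to realize the single corner point $\bigl((|E|-1)\wskc,\wskc\bigr)$ with $\ell=\min_e n_e$ and time-share against $(0,0)$, as the paper does.
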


\begin{IEEEproof}[Proof outline]
As in Theorem~\ref{thm:cwsk:red}, this proof is also divided into two parts. In the first part, which is 
Theorem~\ref{lem:irred:rate}, we argue that removing an edge random variable that is also available with the wiretapper does not affect the constrained wiretap secret key capacity $\wskc(R)$. The argument involves showing that any valid SKA scheme on the original model can be converted into a valid SKA scheme on the reduced model, and vice versa.

In the second part, which is Theorem~\ref{thm:rate:irreducible}, we prove the statement of Theorem~\ref{thm:rateconstrained_treepin} for irreducible sources. The converse part follows from the constrained secret key capacity result in the case of tree-PIN sources without wiretapper side information, \cite[Theorem~4.2]{chan19}. The achievability part pivots on the argument that the rate pair $((|E|-1)\wskc,\wskc)$, where $\wskc=\min_{e \in E}H(Y_e)$, is achievable, as the rest of the curve follows from a time sharing argument.
% Note that, in the proof of Theorem~\ref{thm:cwsk:irred}, we showed that the key rate $\min_{e \in E}H(Y_e)$ is achievable using a  communication of rate $\left(\sum_{e \in E}n_e\right) \log_2q  - \min_{e \in E} H(\RY_e) \geq (|E|-1) \min_{e \in E} H(\RY_e)$. 
To show the achievability of this rate pair, we ignore some of the edge components of the source $(\RZ_V, \RZ_{\opw})$ to obtain a new source $(\RZ'_V, \RZ'_{\opw})$, which will be used for the purpose of secret key generation. In fact, we retain only $s=\min_{e \in E}n_e$ components for each edge. We then argue that for the new source $(\RZ'_V, \RZ'_{\opw})$, we can generate a key of rate $\wskc$ with a communication of rate $(|E|-1)\wskc$, if we use the key generation scheme used in the proof of Theorem~\ref{thm:cwsk:irred}.

Theorem~\ref{thm:rateconstrained_treepin} follows from combining both these parts. The complete details are given in Appendix~\ref{app:thm:proof_rateconstrained_treepin}.
\end{IEEEproof}

%  \begin{proof}
% See Appendix~\ref{app:thm:proof_rateconstrained_treepin}.
% \end{proof}

% \section{Positivity of $\wskc$} \label{sec:positivity}
% \input{positivity}
{\section{Secure function computation\label{sec:funcomp}}}
{We can also consider the problem of secure function computation in the context of the multiterminal source model. Given a function of the users' observations, the goal of the users is to interactively communicate over a public channel to compute this function without revealing much information about the computed value to the wiretapper. 

Formally, let $g(z_1,\ldots,z_m)$ be a function over $\mc{Z}_1\times \cdots \times \mc{Z}_m$. For $i \in V$, user $i$ observes $\RZ_i^n:=(\RZ_{i1},\ldots,\RZ_{in})$, $n$ i.i.d. realizations of the component $\RZ_i$ of the source, and the wiretapper observes $\RZ_{\opw}^n$. Using their observations, the users interactively communicate $\RF^{(n)}$ over a public channel. The wiretapper only listens to this public communication.  Finally, user $i$ computes $\RG_i^{(n)}$ for $i \in V$ using the public communication $\RF^{(n)}$, $\RZ_i^n$, and possible local randomness. We say that the users can compute $\RG= g(\RZ_1,\ldots,\RZ_m)$ if there exists an interactive communication $\RF^{(n)}$ satisfying
    \begin{align}
\liminf_{n \to \infty} \Pr(\RG_1^{(n)} = \ldots =\RG_m^{(n)} = \RG^n) = 1 \label{eq:func:recoverability},
\end{align}
where $\RG^n:=(g(\RZ_{11},\ldots,\RZ_{m1}),\ldots, g(\RZ_{1n},\ldots,\RZ_{mn}))$.% which is $n$ copies of the function $g$ evaluated for each independent realization of the source. 

In this set-up, the minimum leakage rate $\rl^{\RG}$ achievable for computation of $\RG$ is defined as

\begin{align}
\begin{split}
 \rl^{\RG}&:= \inf  \biggl\lbrace \limsup_{n \to \infty} \frac{1}{n}I(\RF^{(n)} \wedge \RG^n|\RZ_{\opw}^n) \biggr\rbrace, \label{eq:rl:func}
 \end{split}
\end{align}
where the infimum is over all communications $\RF^{(n)}$ that allow users to compute $\RG$. When $\RG=\RZ_V$, the secure function computation problem becomes the secure omniscience problem (Sec.~\ref{subsec:omniscience}).

We say that a function $\RG$ is \emph{securely computable} if $\rl^{\RG}=0$. This means that $\RG$ can be computed without revealing any information to the wiretapper other than what is already known through the side information. The communication that achieves $\rl^{\RG}=0$ allows users to compute $\RG$ securely. Henceforth, with an abuse of notation, we write $\rl$ instead $\rl^{\RG}$ when the computed function is clear from the context. The work of \cite{tyagi11} considered the case where wiretapper has no side information with an aim of characterizing the securely computable functions, i.e., $\rl=0$. Remarkably, \cite{tyagi11} gave this characterization in terms of the wiretap secret key capacity: If $\RG$ is securely computable then $H(\RG)\leq \skc$ and conversely, if $\RG$ satisfies $H(\RG)< \skc$, then $\RG$ is securely computable.

\begin{example}
Let $\RX_i$,  $i \in \{1,2,3,4\}$, be a collection of i.i.d. Ber($\frac{1}{2}$) random variables. A two user source $(m=2)$ is defined as follows.  
\begin{align*}
    \RZ_1&:=(\RX_1,\RX_2,\RX_3)\\
    \RZ_2&:=(\RX_2,\RX_3,\RX_4)
\end{align*}
% User $1$'s observations are $\RZ_1:=(\RX_1,\RX_2,\RX_3)$ and User $2$'s observations are $\RZ_2:=(\RX_3,\RX_4,\RX_5)$. The wiretapper's observations are $\RZ_{\opw}:=(\RX_1, \RX_5)$.
The wiretap secret key capacity is given by $   \skc=I(\RZ_1 \wedge \RZ_2) = H(\RX_2,\RX_3)=2$.
\begin{enumerate}[label=\alph*)]
    \item Let $g_1(x_1,\ldots,x_4):=x_1+x_4$, where $+$ is over $\mathbb{F}_2$. The function $\RG_1=g_1(\RX_1,\ldots,\RX_4)=\RX_1+\RX_4$ is securely computable by the users because $H(\RG_1)=1 < 2=\skc$.
    \item Let $g_2(x_1,\ldots,x_4):=(x_2, x_3, x_4)$. The function $\RG_2=g_2(\RX_1,\ldots,\RX_4)=(\RX_2, \RX_3, \RX_4)$ is not securely computable because $\skc=2<3=H(\RG_2)$.
\end{enumerate}
\end{example}

The following lemma addresses the secure function computation problem when the wiretapper side information is non-trivial.

\begin{lemma}\label{thm:func:necessary}
For a general source $(\RZ_V,\RZ_{\opw})$, any computable function $\RG$ must satisfy 
\begin{align}\label{eq:func_comp:lb}
    H(\RG|\RZ_{\opw})-\rl \leq \wskc.
\end{align}    
\end{lemma}
\begin{proof} We can  prove this result along the lines of the proof of Theorem~\ref{thm:RL:lb}. The idea is that given a discussion scheme that computes the function $\RG$ with leakage rate $\rl$, one can apply privacy amplification to extract a secret key of rate $H(\RG|\RZ_{\opw})-\rl$ from the recovered source.
\end{proof}

 It is clear from \eqref{eq:func_comp:lb} that if $\RG$ is securely computable (i.e., $\rl=0$), then  $H(\RG|\RZ_{\opw})\leq \wskc$. 
 \begin{example}
Let $\RX_i$,  $i \in \{1,2,\ldots,5\}$, be a collection of i.i.d. Ber($\frac{1}{2}$) random variables. A two user source $(m=2)$ is defined as follows.  
\begin{align*}
    \RZ_1&:=(\RX_1,\RX_2,\RX_3)\\
    \RZ_2&:=(\RX_3,\RX_4,\RX_5)\\
    \RZ_{\opw}&:=(\RX_1, \RX_5)
\end{align*}
% User $1$'s observations are $\RZ_1:=(\RX_1,\RX_2,\RX_3)$ and User $2$'s observations are $\RZ_2:=(\RX_3,\RX_4,\RX_5)$. The wiretapper's observations are $\RZ_{\opw}:=(\RX_1, \RX_5)$.
It follows from Theorem~\ref{thm:fls} that the wiretap secret key capacity is
\begin{align*}
    \wskc&=I(\RZ_1 \wedge \RZ_2 | \RG) \\
&= I(\RX_1,\RX_2,\RX_3 \wedge \RX_3,\RX_4,\RX_5 | \RX_1)\\
&= H(\RX_1,\RX_2,\RX_3 | \RX_1)-H(\RX_1,\RX_2,\RX_3 | \RX_3,\RX_4,\RX_5, \RX_1)\\
&= 2-1=1 \text{ bit },
\end{align*}
where $\RG=\RX_1$, which is the maximal common function of $\RZ_1$ and $\RZ_{\opw}$. Let $g_1(x_1,\ldots,x_5):=(x_2, x_3, x_4)$. The function $\RG_1=g_1(\RX_1,\ldots,\RX_5)=(\RX_2, \RX_3, \RX_4)$ is not securely computable because $\wskc=1<3=H(\RG_1|\RZ_{\opw})$ violates the condition in Lemma~\ref{thm:func:necessary}.

\end{example}

 Establishing the reverse direction, namely, that $H(\RG|\RZ_{\opw})< \wskc$ implies secure computability of $\RG$, is an interesting open problem. The following lemma addresses the minimum leakage rate of those source models for which this reverse direction holds. 

\begin{lemma}\label{lemma:sc}
If every function $\RS$ with $H(\RS|\RZ_{\opw}) < \wskc$ is securely computable via omniscience, then the minimum leakage rate for computing a given function $\RG$ is
     \begin{align}\label{eq:func:11}
         \rl = |H(\RG|\RZ_{\opw}) - \wskc|^{+}.
     \end{align} 
Here $|u|^{+}\triangleq \max\{u,0\}$.
\end{lemma}
\begin{proof}
    Assume first that $H(\RG|\RZ_{\opw}) < \wskc$. By the hypothesis of the lemma, $\RG$ is securely computable, so that the minimum leakage rate for computing $\RG$ is $\rl=0$. So, it suffices to consider the case when $H(\RG|\RZ_{\opw})\geq \wskc$. By Theorem~\ref{thm:func:necessary}, we have $\rl \geq H(\RG|\RZ_{\opw})- \wskc$. We will now show the reverse inequality  $\rl \leq H(\RG|\RZ_{\opw})- \wskc$, which will imply that $\rl=H(\RG|\RZ_{\opw})- \wskc$.

    Fix $\epsilon > 0$. For $m$ large enough, consider the source $(\RZ_V^m, \RZ_{\opw}^m)$, which corresponds to $m$ i.i.d. realizations of the source $(\RZ_V, \RZ_{\opw})$. By the random binning argument, there exists a function $\tRG$ of $\RG^m$ such that 
    \begin{align}\label{eq:func:1}
\wskc(\RZ_V^m\|\RZ_{\opw}^m) \geq H(\tRG|\RZ_{\opw}^m)> \wskc(\RZ_V^m\|\RZ_{\opw}^m) -\epsilon,
    \end{align}
    where $\wskc(\RZ_V^m\|\RZ_{\opw}^m)=m\wskc(\RZ_V\|\RZ_{\opw})$.

Note that the hypothesis also holds for the $m$-letter source. With $\RZ_V, \RZ_{\opw}$, and $\RG$ replaced by $\RZ_V^m, \RZ^m_{\opw}$ , and $\tRG$, respectively, there exists an omniscience scheme $\tRF$ such that

\begin{align}
 &\frac{1}{n}I(\tRG^n \wedge \tRF|\RZ_{\opw}^{mn}) \leq \delta_n^{(m)}\label{eq:func:2}\\
 &\liminf_{n \to \infty} \Pr(\RE_1^{(mn)} = \ldots =\RE_m^{(mn)} = \RZ_V^{mn}) = 1 
\end{align}
for some $\delta_n^{(m)} \to 0$ as $n \to \infty$.
Observe that  $\RG^m$ can be computed by all the users through $\tRF$ because $\tRF$ is an omniscience scheme. Furthermore, the leakage rate for computing $\RG^m$ is 
  \begin{align*}
        \frac{1}{mn} I(\RG^{mn}\wedge \tRF | \RZ^{mn}_{\opw}) & \stackrel{(a)}{=} \frac{1}{mn} I(\tRG^n, \RG^{mn}\wedge \tRF | \RZ^{mn}_{\opw})\\
        &= \frac{1}{mn} I(\tRG^n\wedge \tRF | \RZ^{mn}_{\opw}) + \frac{1}{mn} I(\RG^{mn}\wedge \tRF | \RZ^{mn}_{\opw}, \tRG^n)\\
        & \stackrel{(b)}{\leq}  \frac{1}{m}\delta_n^{(m)} + \frac{1}{mn} I(\RG^{mn}\wedge \tRF | \RZ^{mn}_{\opw}, \tRG^n)\\
        & \leq  \frac{1}{m}\delta_n^{(m)} + \frac{1}{mn} H(\RG^{mn} | \RZ^{mn}_{\opw}, \tRG^n)\\
        & \stackrel{(c)}{\leq}  \frac{1}{m}\delta_n^{(m)} + \frac{1}{mn} \left[ H(\RG^{mn} | \RZ^{mn}_{\opw}) -  H(\tRG^{n} | \RZ^{mn}_{\opw})\right]\\
        & =  \frac{1}{m}\delta_n^{(m)} +  H(\RG | \RZ_{\opw}) -  \frac{1}{m} H(\tRG | \RZ^{m}_{\opw})\\
        & \stackrel{(d)}{\leq} \frac{1}{m}\delta_n^{(m)} +  H(\RG | \RZ_{\opw}) -  mn \wskc +\epsilon,
    \end{align*}
where $(a)$ and $(c)$ are due to the fact the $\tRG$ is a function of $\RG^m$, $(b)$ and $(d)$ follow respectively from \eqref{eq:func:2} and \eqref{eq:func:1}. By taking limit in the above chain of inequalities, we get 
$$\rl \leq \limsup_{n \to \infty} \frac{1}{mn} I(\RG^{mn}\wedge \tRF | \RZ^{mn}_{\opw}) = H(\RG | \RZ_{\opw}) - \wskc +\epsilon.$$
As $\epsilon$ is arbitrary, we get $\rl \leq H(\RG | \RZ_{\opw}) - \wskc$, completing the proof.
\end{proof}

Next we will consider the secure function computation problem for finite linear sources.

\begin{theorem}\label{thm:func:linearscheme}
  For a finite linear source $(\RZ_V, \RZ_{\opw})$, any leakage rate in computing a linear function $\RG$ via a perfect linear communication scheme is also achievable by a linear omniscience scheme.
\end{theorem}
\begin{proof} The proof is similar to the proof of Theorem~\ref{thm:linearscheme}. Let $\RF^{(n)}$ be a perfect linear communication scheme that computes $\RG$, but $\RF^{(n)}$ need not achieve omniscience. By  \cite[Theorem~1]{chan19oneshot}, we can assume that $\RF^{(n)}$ is a linear function of $\RZ_V^n$ alone (additional randomization by any user is not needed) and $\RG^n$ is also a linear function of $\RZ_V^n$. 

If $\RF^{(n)}$ already attains omniscience, then we are done. If not, for some $i,j \in V$, $i \ne j$, we have a component $\RX \in \Fq$ of random vector $\RZ_i^{n}$ such that
\begin{align*}
    H(\RX \mid \RF^{(n)}, \RZ_j^n) &\neq 0.
\end{align*}
We will show that there exists an additional discussion $\RF'^{(n)}$ such that
\begin{align}
    H(\RX \mid \RF^{(n)},\RF'^{(n)}, \RZ_j^n) &= 0 \label{eq:func:additional_recov}
\end{align}  and 
\begin{align}
 I(\RG^{(n)} \wedge \RF^{(n)},\RF'^{(n)}| \RZ_{\opw}^n) = I(\RG^{(n)} \wedge \RF^{(n)}| \RZ_{\opw}^n).\label{eq:func:additional_secrecy}
\end{align}
If $(\RF^{(n)},\RF'^{(n)})$ achieves omniscience, we are done; else, we repeat the construction in our argument till we obtain the desired omniscience-achieving communication.

So, consider the non-trivial case where $H(\RX\mid \RF^{(n)}, \RZ_j^n) \neq 0$ and $I(\RG^n \wedge \RF^{(n)}, \RX| \RZ_{\opw}^n) \neq 0$. (If $I(\RG^n \wedge \RF^{(n)}, \RX| \RZ_{\opw}^n)=0$, then user $i$ transmits $\RF'^{(n)}:= \RX$ which satisfies \eqref{eq:func:additional_recov} and \eqref{eq:func:additional_secrecy}.)
%(Since $\RG^n$ must be recoverable from $(\RF^{(n)}, \RZ_i^n)$, the alternative assumption $I(\RG^n \wedge \RF^{(n)}, \RZ_i^{n}, \RZ_{\opw}^n) = 0$ yields $H(\RG^n) = 0.$) 
Let $\RL^{(n)}$ be a common linear function, not identically $0$, of $\RG^n$ and $(\RF^{(n)}, \RX, \RZ_{\opw}^n))$ taking values in $\Fq$. Such a function exists since $I(\RG^n \wedge \RF^{(n)}, \RX, \RZ_{\opw}^n) \neq 0$. So, we can write \begin{align}
    \RL^{(n)}=\RG^n\MM_K= a\RX + \RF^{(n)}\MM_{F}+\RZ_{\opw}^n \MM_{\opw} \label{eq:func:mcf_lin_perfect}
\end{align}
for some non-zero element $a \in \Fq$, and some  column vectors $\MM_K \neq \M0, \MM_{F},$ and $\MM_{\opw}$ over $\Fq$. (Here, $\RL^{(n)},\RG^n,\RF^{(n)}$ and $\RZ_{\opw}^n$ are the random row vectors with entries uniformly distributed over $\Fq$.) Note if the coefficient $a$ in the above linear combination is zero element in $\Fq$, then \eqref{eq:func:additional_secrecy} is satisfied for any $\RF'^{(n)}$. In particular, we can choose $\RF'^{(n)}$ to be an omniscience scheme, which satisfies \eqref{eq:func:additional_recov}.

Now consider the case of non-zero $a$. Define $\RF'^{(n)}:= \RG^n\MM_K - a\RX$. User $i$ can compute $\RF'^{(n)}$, as it is a function of  $\RG^n$ and $\RZ_i^n$, and transmit it publicly. Let us verify that $\RF'^{(n)}$ satisfies \eqref{eq:func:additional_recov} and \eqref{eq:func:additional_secrecy}. For \eqref{eq:func:additional_recov}, observe that  $H(\RX\mid \RF^{(n)},\RF'^{(n)}, \RZ_j^n) \leq H(\RX\mid \RF'^{(n)}, \RG^n)=0$, the inequality following from $H(\RG^n\mid\RF^{(n)},\RZ_j^n)=0$, and the  equality from the fact that $\RX$ is recoverable from $(\RF'^{(n)}, \RG^n)$. For \eqref{eq:func:additional_secrecy}, $I(\RG^{(n)} \wedge \RF^{(n)},\RF'^{(n)}| \RZ_{\opw}^n) - I(\RG^{(n)} \wedge \RF^{(n)}| \RZ_{\opw}^n)=I(\RG^n\wedge \RF'^{(n)}|\RF^{(n)}, \RZ_{\opw}^n)\leq H(\RF'^{(n)}|\RF^{(n)}, \RZ_{\opw}^n) = 0$,  which follows from \eqref{eq:func:mcf_lin_perfect} and the definition of $\RF'^{(n)}$. This completes the proof.
\end{proof}

}

\section{Discussion}\label{sec:discussion}
In this paper, we have explored the possibility of a duality between the wiretap secret key agreement problem and the secure omniscience problem. Though the problem of characterizing the class of sources for which these two problems are dual to each other is far from being solved completely, we made some progress in the case of limited interaction (with at most two communications allowed), and for the class of finite linear sources. Furthermore, we have made use of \eqref{basic_ineq} to identify several equivalent conditions for the positivity of $\wskc$ in the multi-user case, which is an extension of a recent two-user result of \cite{amin2020}.

By limiting the number of messages to two, we showed that for the source in Lemma~\ref{lem:twowaycounter}, the duality does not hold. This result seems to indicate that the duality does not always hold. 
In particular, we believe that for the DSBE source  considered in Lemma~\ref{lem:twowaycounter}, the duality does not hold even if we relax the restriction on the number of messages (Conjecture~\ref{conj:duality:msg_converse}). To prove this result, we actually  need a single-letter lower bound on $\rl$ that strictly improves our current bound $H(\RZ_V|\RZ_{\opw}) - \wskc$.  However, it has turned out to be challenging to find a better lower bound on $\rl$. 

\newtheorem*{C1}{Conjecture~\ref{conj:duality:fls}}
\begin{C1}
$\rl = H(\RZ_V|\RZ_{\opw}) - \wskc$ holds for finite linear sources.
\end{C1}

\begin{Conjecture}\label{conj:duality:msg_converse}
For $r\geq m$, $\wskc^{(r)}>0$ need not imply $\rl^{(r)} < H(\RZ_V|\RZ_{\opw})$. Moreover, with no restriction on the number of messages, $\wskc>0$ need not imply $\rl < H(\RZ_V|\RZ_{\opw})$.
\end{Conjecture}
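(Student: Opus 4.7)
The plan is to extend the two-message counterexample from Lemma~\ref{lem:twowaycounter} to an arbitrary number of rounds. Fix a DSBE$(p,\epsilon)$ source with $4p(1-p) < \epsilon \leq h(p)$ and consider it under $r$-round interactive discussion. Since any one-way SKA scheme is also a valid $r$-message scheme, $\wskc^{(r)} \geq \wskc^{\op{ow}}(1 \to 2) > 0$ for every $r \geq 1$, so the hard part is to show $\rl^{(r)} = H(\RX,\RY \mid \RZ)$ for every $r \geq 3$. Because $\rl \leq \rl^{(r)} \leq H(\RX,\RY \mid \RZ)$, the unrestricted half of the conjecture would follow from such a uniform bound.

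The first step is to derive a recursive single-letter lower bound on $\rl^{(r)}$ that generalizes \eqref{eq:tworl}. Writing $\RF^{(n)} = (\RF_1, \ldots, \RF_r)$ and decomposing the leakage round by round,
\begin{align*}
\tfrac{1}{n} I(\RF^{(n)} \wedge \RX^n,\RY^n \mid \RZ^n) = \sum_{k=1}^r \tfrac{1}{n} I(\RF_k \wedge \RX^n,\RY^n \mid \RF^{k-1},\RZ^n).
\end{align*}
If user~$1$ transmits in round $k$ then $\RF_k$ is a function of $(\RX^n, \RF^{k-1})$ and private randomness, so the $k$-th term reduces to $\frac{1}{n} I(\RF_k \wedge \RX^n \mid \RF^{k-1}, \RZ^n)$; a symmetric simplification holds for user~$2$'s rounds. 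Via a standard auxiliary-random-variable single-letterization in the spirit of \eqref{eq:oneway_rl}, each such contribution should admit a lower bound of the form $I(\RU_k \wedge \RX \mid \RV_k, \RZ)$ or $I(\RU_k \wedge \RY \mid \RV_k, \RZ)$ subject to appropriate Markov chains $\RU_k - \RX - (\RY, \RZ)$ (or with $\RX$ and $\RY$ exchanged).

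Next I would exploit the more capable structure of the DSBE source. Since $\RZ$ is more capable than $\RY$ for $P_{\RY \RZ \mid \RX}$ \emph{uniformly over all input distributions}, and symmetrically $\RZ$ is more capable than $\RX$ for $P_{\RX \RZ \mid \RY}$, each per-round single-letter bound can be combined with the Fano-type term $H(\RY \mid \RZ, \RX)$ (respectively $H(\RX \mid \RZ, \RY)$) demanded by omniscience recoverability to reproduce $H(\RX, \RY \mid \RZ)$, regardless of who transmits in which round. Together with the upper bound from Theorem~\ref{thm:RL:lb} this would force $\rl^{(r)} = H(\RX, \RY \mid \RZ)$ for every $r$, and hence also $\rl = H(\RX,\RY \mid \RZ)$.

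The main obstacle is the step from the per-round inner sums to a clean single-letter form. Conditioning on the entire transcript $\RF^{k-1}$ induces correlations across time that are not tamed by a single time-sharing random variable, and a rigorous argument likely needs iterated Csisz\'ar-sum manipulations together with an induction that tracks how the auxiliary variables compose across rounds. Should this direct route fail, it would indicate that $H(\RZ_V \mid \RZ_{\opw}) - \wskc$ is genuinely loose as a lower bound on $\rl$; the alternative would then be to prove a strictly sharper single-letter lower bound on $\rl$, perhaps by adapting the hypothesis-testing machinery used in Theorem~\ref{thm:multivariate_positivity} to quantify the minimum information about $(\RX,\RY)$ that any omniscience protocol must reveal beyond what is already aligned with $\RZ$.
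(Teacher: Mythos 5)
The statement you are proving is labeled a \emph{Conjecture} in the paper and is not proved there. In the Discussion section the authors explicitly say that establishing Conjecture~\ref{conj:duality:msg_converse} would require a single-letter lower bound on $\rl$ that strictly improves $H(\RZ_V|\RZ_{\opw}) - \wskc$, and that finding such a bound has proven challenging. There is therefore no proof in the paper against which to compare your argument; what you have written is an attempt at an open problem, and, as you yourself acknowledge, it does not reach the stated conclusion.

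To your credit, the sketch correctly identifies the natural candidate source (the DSBE$(p,\epsilon)$ source with $4p(1-p)<\epsilon\leq h(p)$) and, in your own caveat, the exact point where the argument fails. The two-message proof in Lemma~\ref{lem:twowaycounter} works because round one is genuine one-way communication, to which the exact characterization \eqref{eq:oneway_rl} applies, and round two is forced by omniscience recoverability via Fano to leak essentially $H(\RY|\RZ,\RX)$ (or the symmetric term). For $r\geq 3$ the middle rounds are neither one-way in the required sense nor pinned down by Fano, so this decomposition does not close. Concretely, the step asserting that each round's contribution ``should admit a lower bound of the form $I(\RU_k\wedge\RX|\RV_k,\RZ)$ subject to $\RU_k-\RX-(\RY,\RZ)$'' is not merely unfinished but is in doubt as stated: the prior transcript $\RF^{k-1}$ depends on both $\RX^n$ and $\RY^n$ through user $2$'s earlier messages, so single-letterizing $\frac{1}{n}I(\RF_k\wedge\RX^n|\RF^{k-1},\RZ^n)$ does not in general produce auxiliaries satisfying the Markov chain $\RU_k-\RX-(\RY,\RZ)$. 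Without that Markov structure, the more-capable inequality exploited in Lemma~\ref{lem:onewaycounter} --- which hinges on the factorization $P_{\RX,\RY,\RZ|\RS=s}=P_{\RX|\RS=s}\,P_{\RY,\RZ|\RX}$ --- is unavailable, and the desired conclusion $\rl^{(r)}\geq H(\RX,\RY|\RZ)$ does not follow. Your fallback suggestion, proving a strictly sharper single-letter lower bound on $\rl$, is precisely the open question the authors pose in the Discussion; it is not a known route but a restatement of what remains to be found.
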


In our attempt to resolve the duality for finite linear sources (Conjecture~\ref{conj:duality:fls}), we were able to prove it in the case of two-user FLS models and in the case of tree-PIN models. The proof construction mainly relies on the idea of aligning the communication with the wiretapper side information. Specifically, in the case of tree-PIN models, we used a reduction to obtain an irreducible source on which we constructed an $\rco$-achieving omniscience scheme that aligns perfectly with the wiretapper side information. In fact, we have shown that this construction is $\rl$-achieving. 

However, for more general PIN sources, this proof strategy fails. The notion of irreducibility in Definition~\ref{def:irreducible} can certainly be extended to general PIN sources. However, it turns out that this definition of irreducibility is not good enough. There are irreducible PIN sources on graphs with cycles whose $\rl$ is not achieved by an omniscience protocol of rate $\rco$ that is perfectly aligned with the wiretapper side information. So, proving the duality conjecture for sources beyond the tree-PIN model could be interesting as it will require new communication strategies other than the ones we used in the proof of the tree-PIN model with a linear wiretapper.  

\appendices

% \section{Proof of Lemma~\ref{lem:bc_oneway}}
% \input{proof_lessnoisy}\label{app:proof_lessnoisy}
 \section{Proof of the Lower Bound in Theorem~\ref{thm:RL:lb}}
\label{app:thm:proof_rl_bound}
Fix $\epsilon>0$. Let $\left(\RF^{(N)}\right)_{N\geq 1}$ be a communication scheme allowing users to achieve omniscience such that
\begin{align}
    \limsup_{N \to \infty} \frac{1}{N}I\left(\RF^{(N)} \wedge \RZ_V^N \big| \RZ_{\opw}^N\right) < \rl + \epsilon/2
\end{align}
and 
\begin{align}
\liminf_{N \to \infty} \Pr(\RE_1^{(N)} = \dots =\RE_m^{(N)} = \RZ_V^N) >1-\epsilon/2.
\end{align}
Then, for a large enough integer $N_0$,  $\frac{1}{N}I\left(\RF^{(N)} \wedge \RZ_V^N \big| \RZ_{\opw}^N\right) < \rl + \epsilon$ and $\Pr(\RE_1^{(N)} = \dots =\RE_m^{(N)} = \RZ_V^N) >1-\epsilon$  for all  $N \geq N_0$. We will use the specific communication scheme $\RF^{(N_0)}$ to construct a SKA scheme that achieves a secret key rate close to $H(\RZ_V|\RZ_{\opw})-\rl$. It suffices to construct this protocol only for $n$'s that are integer multiples of $N_0$ (cf. Theorem 3.2 of \cite{csiszar08}).

% Let $(\tRF^{(n)})_{n\geq 1}$ and $(\RK^{(n)})_{n\geq 1}$ be the  communications and keys, respectively, associated to the resulting WSKA scheme. In fact, it is enough to construct this protocol only for $n$'s that are integer multiples of $N_0$ with the desired properties; because for $kN_0 \leq n < (k+1)N_0$, we can use only the first $kN_0$ realizations of the source in the key generation protocol, i.e., set  $\tRF^{(n)}:= \tRF^{(kN_0)}$ and $\RK^{(n)}:=\RK^{(kN_0)}$. Under this construction, it is easy to show that the desired properties of the subsequence in $k$ will carry over to the whole sequence in $n$.

The constructed omniscience communication $\tRF^{(kN_0)}$ comprises two parts. The first part is of the form $(\RF^{(N_0)})^k$, which is obtained by applying the function corresponding to $\RF^{(N_0)}$ individually on the $k$ sub-blocks of size $N_0$ of the $kN_0$ source realizations. The second part contains an extra Slepian-Wolf communication $f_k(\RZ_V^{kN_0})$ (\cite[Lemma~3.1]{csiszar08}) that ensures the recoverability of $\RZ_V^{kN_0}$ at each user with high probability. The form of  $f_k(\RZ_V^{kN_0})$ is specified next.

Let  $(\RS_i)_{i \in V}$ be the private randomness that can potentially be used by the terminals, which satisfies $P_{\RZ^{N_0}_V\RZ_{\opw}^{N_0}\RS_V}=P_{\RZ^{N_0}_V\RZ_{\opw}^{N_0}}\prod_{i \in V}P_{\RS_i}$. Fix a user $j \in  V$, and let $i \neq j$ be another user. By applying Lemma~3.1 of \cite{csiszar08} with $\RU=\RZ_i^{N_0}$ and $\RV=({\RF}^{(N_0)}, \RZ_j^{N_0}, \RS_j)$, we can conclude that there exists a function $g_{ij}$ of $\RZ_i^{kN_0}$ such that $\RZ_i^{kN_0}$ is recoverable from $\tilde{\RF}^{(kN_0)}$, $\RZ_j^{kN}$, $\RS^k_j$ and $g_{ij}(\RZ_i^{kN_0})$ with probability going to one as $k \to \infty$, and $H(g_{ij}(\RZ_i^{kN_0}))< k(\epsilon N_0 \log|\mc{Z}_i|+ h_2(\epsilon))$. This means that user $i$ can transmit $g_{ij}(\RZ_i^{kN_0})$ to user $j$ so that $\RZ_i^{kN_0}$ is recoverable with high probability. Similarly, every user can communicate in this way to each of the other users. By using the union bound, we can conclude that this communication allows every user to recover the complete source $\RZ_V^{kN_0}$ with probability going to one as $k \to \infty$. The extra communication $f_k(\RZ_V^{kN_0})$ is of the form $\{g_{ij}(\RZ_i^{kN_0}):i \in V, j \in  V\setminus \{i\}\}$, for which $R_f:=\frac{1}{k}H(f_k(\RZ_V^{kN_0})) <(|V|-1)\left[\epsilon N_0 \log|\mc{Z}_V|+h_2(\epsilon)\right]$.  The users now can apply a function on the recovered source to extract a key $\RK^{(kN_0)}$, which satisfies the recoverability condition \eqref{eq:sk:recoverability}.

To obtain a function that makes this key satisfy the secrecy condition \eqref{eq:sk:secrecy}, we rely on the 
balanced coloring lemma~\cite[Lemma~B.3]{csiszar04}. We will apply this lemma  with $\RU=\RZ_V^{N_0}$, $\RV=(\RF^{(N_0)}, \RZ^{N_0}_{\opw})$ and $f(\RU^k)= f_k(\RZ_V^{kN_0})$, which satisfy the inequality 
\begin{align*}
    H(\RU|\RV)-R_f&=H(\RZ_V^{N_0}|\RF^{(N_0)}, \RZ^{N_0}_{\opw})-R_f \\
&> H(\RZ_V^{N_0}|\RF^{(N_0)}, \RZ^{N_0}_{\opw})-(|V|-1)\left[\epsilon N_0 \log|\mc{Z}_V|+h_2(\epsilon)\right].
\end{align*}
The balanced coloring lemma guarantees the existence of a key function $g:\mc{Z}_V^{kN_0} \to \{1, \ldots, 2^{kR_g}\}$ with $$R_g= H(\RZ_V^{N_0}|\RF^{(N_0)}, \RZ^{N_0}_{\opw})-(|V|-1)\left[\epsilon N_0 \log|\mc{Z}_V|+h_2(\epsilon)\right]$$ such that  $kR_g-H(g(\RU^k))+I(g(\RU^k) \wedge f(\RU^k), \RV^k) \longrightarrow 0$ exponentially quickly in $k$. Then, $\RK^{(kN_0)}=g(\RZ_V^{kN_0})$ has rate  $\frac{1}{kN_0}\log|\mc{K}^{(kN_0)}|=\frac{1}{kN_0}kR_g= \frac{1}{N_0} H(\RZ_V^{N_0}|\RF^{(N_0)}, \RZ^{N_0}_{\opw})-(|V|-1)\left[\epsilon \log|\mc{Z}_V|+\frac{h_2(\epsilon)}{N_0}\right]$, and by virtue of the balanced coloring lemma, it satisfies the secrecy condition. Therefore,
\begin{align*}
    \wskc &\geq \frac{1}{N_0} H(\RZ_V^{N_0}|\RF^{(N_0)}, \RZ^{N_0}_{\opw})-(|V|-1)\left[\epsilon \log|\mc{Z}_V|+\frac{h_2(\epsilon)}{N_0}\right]\\
    & = H(\RZ_V| \RZ_{\opw})- \frac{1}{N_0}I(\RZ_V^{N_0}\wedge \RF^{(N_0)}| \RZ^{N_0}_{\opw})-(|V|-1)\left[\epsilon \log|\mc{Z}_V|+\frac{h_2(\epsilon)}{N_0}\right]\\
    & >H(\RZ_V| \RZ_{\opw})- \rl -\epsilon-(|V|-1)\left[\epsilon \log|\mc{Z}_V|+\frac{h_2(\epsilon)}{N_0}\right].
\end{align*}
Since, for a fixed $\epsilon$, $N_0$ can be made arbitrarily large, and $\epsilon$ is also arbitrary, we have $\wskc \geq H(\RZ_V| \RZ_{\opw})- \rl$.

\section{Proof of Theorem~\ref{thm:linearscheme}}
\label{app:thm:proof_linearscheme}
It suffices to show that $\wskc$ can be achieved through omniscience because then
\begin{align*}
    nH(\RZ_V|\RZ_{\opw})&\geq I(\RK^{(n)},\RF^{(n)} \wedge \RZ_V^n \mid \RZ_{\opw}^n)\\
                        & = I(\RF^{(n)} \wedge \RZ_V^n \mid \RZ_{\opw}^n)+I(\RK^{(n)} \wedge \RZ_V^n \mid \RZ_{\opw}^n,\RF^{(n)})\\
                        & \geq I(\RF^{(n)} \wedge \RZ_V^n \mid \RZ_{\opw}^n) + n(\wskc-\delta_n) 
\end{align*}
for some $\delta_n \to 0$, where the last inequality follows from the fact an optimal key is recoverable from $\RZ_V^n$. By taking limsup on both side of the above inequality after normalizing by $n$, we get $ H(\RZ_V|\RZ_{\opw})\geq \limsup_{n \to \infty}\frac{1}{n}I(\RF^{(n)} \wedge \RZ_V^n \mid \RZ_{\opw}^n)  - \wskc \geq \rl- \wskc$. Therefore, $\rl \leq H(\RZ_V|\RZ_{\opw}) -\wskc$.

Let $(\RF^{(n)}, \RK^{(n)})$ be a communication-key pair of a linear perfect SKA scheme that achieves $\wskc$, but $\RF^{(n)}$ need not achieve omniscience. By  \cite[Theorem~1]{chan19oneshot}, we can assume that $\RF^{(n)}$ is a linear function of $\RZ_V^n$ alone (additional randomization by any user is not needed) and the key is also a linear function of $\RZ_V^n$. 

If $\RF^{(n)}$ already attains omniscience, then we are done. If not, for some $i,j \in V$, $i \ne j$, we have a component $\RX \in \Fq$ of random vector $\RZ_i^{n}$ such that
\begin{align*}
    H(\RX \mid \RF^{(n)}, \RZ_j^n) &\neq 0.
\end{align*}
We will show that there exists an additional discussion $\RF'^{(n)}$ such that
\begin{align}
    H(\RX \mid \RF^{(n)},\RF'^{(n)}, \RZ_j^n) &= 0 \label{eq:additional_recov}
\end{align}  and 
\begin{align}
 I(\RK^{(n)} \wedge \RF^{(n)},\RF'^{(n)}, \RZ_{\opw}^n) = 0.\label{eq:additional_secrecy}
\end{align}
If $(\RF^{(n)},\RF'^{(n)})$ achieves omniscience, we are done; else, we repeat the construction in our argument till we obtain the desired omniscience-achieving communication.

So, consider the non-trivial case where $H(\RX\mid \RF^{(n)}, \RZ_j^n) \neq 0$ and $I(\RK^{(n)} \wedge \RF^{(n)}, \RX, \RZ_{\opw}^n) \neq 0$. (If $I(\RK^{(n)} \wedge \RF^{(n)}, \RX, \RZ_{\opw}^n)=0$, then user $i$ transmits $\RF'^{(n)}:= \RX$ which satisfies \eqref{eq:additional_recov} and \eqref{eq:additional_secrecy}.)
%(Since $\RK^{(n)}$ must be recoverable from $(\RF^{(n)}, \RZ_i^n)$, the alternative assumption $I(\RK^{(n)} \wedge \RF^{(n)}, \RZ_i^{n}, \RZ_{\opw}^n) = 0$ yields $H(\RK^{(n)}) = 0.$) 
Let $\RL^{(n)}$ be a common linear function, not identically $0$, of $\RK^{(n)}$ and $(\RF^{(n)}, \RX, \RZ_{\opw}^n))$ taking values in $\Fq$. Such a function exists since $I(\RK^{(n)} \wedge \RF^{(n)}, \RX, \RZ_{\opw}^n) \neq 0$. So, we can write \begin{align}
    \RL^{(n)}=\RK^{(n)}\MM_K= a\RX + \RF^{(n)}\MM_{F}+\RZ_{\opw}^n \MM_{\opw} \label{eq:mcf_lin_perfect}
\end{align}
for some non-zero element $a \in \Fq$, and some  column vectors $\MM_K \neq \M0, \MM_{F},$ and $\MM_{\opw}$ over $\Fq$. (Here, $\RL^{(n)},\RK^{(n)},\RF^{(n)}$ and $\RZ_{\opw}^n$ are the random row vectors with entries uniformly distributed over $\Fq$.) Note  the coefficient $a$ in the above linear combination must be a non-zero element in $\Fq$. If not, then $\RL^{(n)}(=\RK^{(n)}\MM_K= \RF^{(n)}\MM_{F}+\RZ_{\opw}^n \MM_{\opw})$ is a non-constant common function of $\RK^{(n)}$ and $(\RF^{(n)}, \RZ_{\opw}^n)$. This contradicts the secrecy condition $I(\RK^{(n)} \wedge \RF^{(n)},  \RZ_{\opw}^n) = 0$.

Define $\RF'^{(n)}:= \RK^{(n)}\MM_K - a\RX$. User $i$ can compute $\RF'^{(n)}$, as it is a function of  $\RK^{(n)}$ and $\RZ_i^n$, and transmit it publicly. Let us verify that $\RF'^{(n)}$ satisfies \eqref{eq:additional_recov} and \eqref{eq:additional_secrecy}. For \eqref{eq:additional_recov}, observe that  $H(\RX\mid \RF^{(n)},\RF'^{(n)}, \RZ_j^n) \leq H(\RX\mid \RF'^{(n)}, \RK^{(n)})=0$, the inequality following from $H(\RK^{(n)}\mid\RF^{(n)},\RZ_j^n)=0$, and the  equality from the fact that $\RX$ is recoverable from $(\RF'^{(n)}, \RK^{(n)})$. For \eqref{eq:additional_secrecy}, $I(\RK^{(n)}\wedge \RF^{(n)},\RF'^{(n)}, \RZ_{\opw}^n)= I(\RK^{(n)}\wedge \RF^{(n)}, \RZ_{\opw}^n) = 0$,  the first equality being a consequence of $\RF'^{(n)}$ also being expressible as $\RF^{(n)}\MM_{F}+\RZ_{\opw}^n \MM_{\opw}$, and the last equality from the secrecy condition of the key, i.e., $I(\RK^{(n)} \wedge \RF^{(n)},  \RZ_{\opw}^n) = 0$. This completes the proof.

\section{Proof of Theorem~\ref{thm:fls}} \label{app:thm:fls}

Before presenting the proof, we introduce some notation that will be used in the achievability part of our argument. It is known, from the proof of Lemma~5.2 of \cite{chan18zero}, that a finite linear source $(\RX, \RY)$ can be decomposed as
\begin{align}
        \RX &= (\RX',\RC), \label{eq:X_X'C}\\
        \RY &= (\RY',\RC), \label{eq:Y_Y'C}
\end{align}
where  $\RX'$ (resp. $\RY'$) is a linear function of $\RX$ (resp. $\RY$) and $\RC=\op{mcf}(\RX, \RY)$ is a linear function of each of $\RX$ and $\RY$; altogether, they satisfy the independence relation
\begin{align}
    H(\RX',\RC,\RY') = H(\RX')+H(\RC)+H(\RY'). \label{eq:XCY_independence}
\end{align}
We use the notation $\RX \setminus \RY$ (resp. $\RY \setminus \RX$) to denote $\RX'$ (resp. $\RY'$).

\begin{proof}
\emph{Converse part.} Note that $\RG$ satisfies the Markov condition $\RG-\RZ_{\opw}-\RZ_V$ because $\RG$ is a function of $\RZ_{\opw}$ whether it is chosen to be $\RG_1$, $\RG_2$ or both. By \eqref{eq:RL:lb}, we have
\begin{align*}
\rl &\geq H(\RZ_V|\RZ_{\opw}) - \wskc(\RZ_V\|\RZ_{\opw})\\
&\utag{a}\geq H(\RZ_V|\RZ_{\opw}) - \pkc(\RZ_V|\RG)\\
&\utag{b}=H(\RZ_V|\RZ_{\opw}) - I(\RZ_1\wedge \RZ_2 | \RG)
\end{align*}
where \uref{a} is because for $\RW-\RZ_{\opw}-\RZ_V$, $\wskc(\RZ_V\|\RZ_{\opw}) \leq \pkc(\RZ_V|\RW)$ \cite[Theorem~4]{csiszar04} and $\RG$ forms the Markov condition $\RG-\RZ_{\opw}-\RZ_V$, and we have used the fact that $\pkc(\RZ_V|\RG)=I(\RZ_1\wedge \RZ_2 | \RG)$ \cite[Theorem~2]{csiszar04} in \uref{b} .\\

\emph{Achievability part.}
It suffices to prove the reverse inequality for $\RG=\RG_1$, i.e.,
\begin{align}
    \rl \leq H(\RZ_V|\RZ_{\opw}) - \underbrace{I(\RZ_1\wedge \RZ_2|\RG_1)}_{`(1)} \label{eq:fls:ub}
\end{align}
because then the reverse inequality will also hold for $\RG=\RG_2$ by symmetry, and for $\RG=(\RG_1,\RG_2)$ since 
\[I(\RZ_1\wedge \RZ_2|\RG_1,\RG_2)\leq I(\RZ_1\wedge \RZ_2,\RG_2 \mid \RG_1) = I(\RZ_1\wedge \RZ_2|\RG_1)\]
by the assumption that $\RG_2$ is a function of $\RZ_2$.

The desired reverse inequality~\eqref{eq:fls:ub} will follow from the following upper bound with an appropriate choice of public discussion $\RF'$ of block length $1$, i.e.,
\begin{align*}
    \rl &\leq \underbrace{\rco(\RZ_V|\RF')}_{=H(\RZ_V|\RF')-I(\RZ_1\wedge \RZ_2|\RF')} + \underbrace{I(\RZ_V\wedge \RF'|\RZ_{\opw})}_{=H(\RZ_V|\RZ_{\opw})-H(\RZ_V|\RZ_{\opw},\RF')}\\
    &= H(\RZ_V|\RZ_{\opw})+ \underbrace{I(\RZ_V\wedge \RZ_{\opw}|\RF')}_{`(2)} -\underbrace{I(\RZ_1\wedge \RZ_2|\RF')}_{`(3)}.
\end{align*}
The idea behind this upper bound involves splitting of the leakage rate into two components after a discussion $\RF'$: one component is the leakage rate due to  $\RF'$, and the other one is the residual leakage rate for subsequent omniscience, which is upper bounded by $\rco(\RZ_V|\RF')$. Note that the equality $\rco(\RZ_V|\RF')=H(\RZ_V|\RF')-I(\RZ_1\wedge \RZ_2|\RF')$ follows from the characterization of the conditional $\rco$ given in \cite[Proposition~1]{csiszar04}.
It suffices to give a feasible $\RF'$ with $`(2)=0$ and $`(3)=`(1)$. We will construct this $\RF'$ by decomposing the source $(\RZ_V, \RZ_{\opw})$.

For the source $(\RX, \RY)$ with $\RX=(\RZ_1, \RG_1)$ and $\RY=(\RZ_2, \RG_1)$, the decomposition is as follows: 
\begin{align}
        (\RZ_1, \RG_1) &:= (\RX_a ,\RX_c), \label{eq:Z'1}\\
        (\RZ_2, \RG_1) &:= (\RX_b ,\RX_c),\label{eq:Z'2}
\end{align}
where $\RX_a$, $\RX_b$, and $\RX_c = \op{mcf}((\RZ_1, \RG_1), (\RZ_2, \RG_1))$ are uniformly random row vectors over some finite field, say $\bbF_q$, satisfying the independence relation
\begin{align}
      H(\RX_a,\RX_b,\RX_c) &= H(\RX_a)+H(\RX_b)+H(\RX_c)\label{eq:XG}.
\end{align}

Observe that $\RG_1$ is a linear common function of $(\RZ_1, \RG_1)$ and $(\RZ_2, \RG_1)$. Using the decomposition \eqref{eq:Z'1} and \eqref{eq:Z'2}, we can write $\RG_1= \RX_a\MM_a+\RX_c\MM_c =\RX_b\MM_b+\RX_c\tMM_c$ for some matrices $\MM_a,\MM_b,\MM_c$ and $\tMM_c$. Therefore, we have $\RX_a\MM_a-\RX_b\MM_b +\RX_c(\MM_c -\tMM_c)=0$. But, $\RX_a$, $\RX_b$, and $\RX_c$ are mutually independent, which implies (for finite linear sources) that $\MM_a=\MM_b=\MM_c -\tMM_c=0$ and $\RG_1= \RX_c\MM_c$. This shows that $\RG_1$ is a linear function of $\RX_c$.
% Since any linear common function is a linear function of the m.c.f. of an FLS, we have that $\RG_1$ is a linear function of $\RX_c$.
% \begin{align}
%      H(\RG_1|\RX_c)=0. \label{eq:g1_xc}
% \end{align}
Let $\RX'_c := \RX_c \setminus \RG_1$. So, we can write $\RX_c=(\RX'_c, \RG_1)$, where $\RX'_c$ is independent of $\RG_1$, and both are linear functions of $\RX_c$. Therefore, we can further decompose the source in \eqref{eq:Z'1} and \eqref{eq:Z'2} as
\begin{align}
        (\RZ_1, \RG_1) &= (\RX_a ,\RX'_c,\RG_1), \label{eq:Z'1_decomp}\\
        (\RZ_2, \RG_1) &= (\RX_b ,\RX'_c, \RG_1),\label{eq:Z'2_decomp}
\end{align}
where $\RX_a$, $\RX_b$, and $\RX'_c$ are uniformly random row vectors such that 
\begin{align}
      H(\RX_a,\RX_b,\RX'_c,\RG_1) &= H(\RX_a)+H(\RX_b)+H(\RX'_c)+H(\RG_1)\label{eq:XG_decomp}.
\end{align}
Note that $(\RX_b ,\RX'_c)=\RZ_2 \setminus \RG_1$ which is a linear function of $\RZ_2$.

Now consider the decomposition of the form \eqref{eq:X_X'C} and \eqref{eq:Y_Y'C} for the source  $(\RZ_V,\RZ_{\opw})$: 
\begin{align}
        \RZ_V &:= (\RZ'_V, \RG_{\opw}), \label{eq:ZV_decomp}\\
        \RZ_{\opw} &:= (\RZ'_{\opw} ,\RG_{\opw}),\label{eq:ZW_decomp}
\end{align}
where $\RG_{\opw}$ is the m.c.f. of $\RZ_V$ and $\RZ_{\opw}$. As the components $(\RZ'_V, \RG_{\opw}, \RZ'_{\opw})$ are mutually independent by \eqref{eq:XCY_independence}, we have 
\begin{align}
    I(\RZ_V \wedge \RZ_{\opw}|\RG_{\opw}) =0.\label{eq:Z'Zw'}
\end{align}
Moreover, using the fact that the m.c.f. $\RG_{\opw}$ is a linear function of $\RZ_V$, and $(\RX_a,\RX_b,\RX'_c,\RG_1)$ is an invertible linear transformation of $\RZ_V$ (by \eqref{eq:Z'1_decomp} and \eqref{eq:Z'2_decomp}),  we can  write $\RG_{\opw}$ as
\begin{align}
    \RG_{\opw} = \RX_a \tMA + \RX_b \tMB + \RX'_c \tMC +\RG_1\tMD \label{eq:Gw}
\end{align}
for some deterministic matrices $\tMA$, $\tMB$, $\tMC$ and $\tMD$ over $\bbF_q$ such that $[\tMA^T \; \tMB^T \; \tMC^T \; \tMD^T]^T$ is a full column-rank matrix.  Since $\RG_1$ is a m.c.f. of $\RZ_1$ and $\RZ_{\opw}$, it is a linear function of $\RG_{\opw}$, which can also be argued along the same lines as the proof of $\RG_1$ is a linear function of $\RX_c$. So we can write \eqref{eq:Gw} as 
\begin{align}
    \RG_{\opw} = (\RX_a \MA + \RX_b \MB + \RX'_c \MC, \RG_1) \label{eq:Gw_decompose}
\end{align}
for some deterministic matrices $\MA$, $\MB$, and $\MC$ over $\bbF_q$ such that $\RX_a \MA + \RX_b \MB + \RX'_c \MC = \RG_{\opw} \setminus \RG_1$. 

Finally, by \eqref{eq:Z'1_decomp}, \eqref{eq:Z'2_decomp}, \eqref{eq:ZW_decomp}, \eqref{eq:Z'Zw'} and \eqref{eq:Gw_decompose}, we can write the decomposition of the source $(\RZ_V, \RZ_{\opw})$ as
\begin{align}
    \RZ_1 &= (\RX_a, \RX'_c,\RG_1),\label{eq:final_z1}\\
    (\RZ_2, \RG_1) &= (\RX_b ,\RX'_c, \RG_1)\label{eq:final_z2}\\
    \RZ_{\opw} &= (\RZ'_{\opw} ,\RX_a \MA + \RX_b \MB + \RX'_c \MC, \RG_1),\label{eq:final_zw}
\end{align}
where the components $\RX_a,\RX_b,\RX'_c,\RG_1,\RZ'_{\opw}$ and $\RX_a \MA + \RX_b \MB + \RX'_c \MC)$ satisfy the following independence relations:
\begin{enumerate}
    \item \eqref{eq:XG_decomp} holds, i.e., $\RX_a,\RX_b,\RX'_c$ and $\RG_1$ are mutually independent;
    \item $\RX_a,\RX'_c,\RG_1,\RZ'_{\opw}$ and $\RX_a \MA + \RX_b \MB + \RX'_c \MC$ are mutually independent.
\end{enumerate}
To verify the second independence relation above, it is enough to show that $I(\RX_a,\RX'_c\wedge \RG_1, \RZ'_{\opw}, \RX_a \MA + \RX_b \MB + \RX'_c \MC)=0$ because of \eqref{eq:XG_decomp},\eqref{eq:ZW_decomp}, and \eqref{eq:Gw_decompose}, which is equivalent to showing  $I(\RX_a,\RX'_c\wedge  \RZ'_{\opw}, \RX_a \MA + \RX_b \MB + \RX'_c \MC \mid \RG_1)=0$ by \eqref{eq:XG_decomp}. Note that  by \eqref{eq:Z'Zw'},  $0=I(\RZ_1\wedge \RZ_{\opw}|\RG_1)=I(\RZ_1, \RG_1\wedge \RZ_{\opw}|\RG_1)= I(\RX_a,\RX'_c,\RG_1\wedge \RG_1, \RZ'_{\opw}, \RX_a \MA + \RX_b \MB + \RX'_c \MC \mid \RG_1) = I(\RX_a,\RX'_c \wedge \RZ'_{\opw}, \RX_a \MA + \RX_b \MB + \RX'_c \MC \mid \RG_1)$.

%  Therefore, we have 
% \begin{align}
%     I(\RX_a, \RX'_c \wedge \RZ_{\opw})=0.
% \end{align}

Let us construct a linear communication using the components from the above decomposition. User $1$ transmits $\RF'_1 := (\RX_a \MA, \RG_1)$ using his source $\RZ_1= (\RX_a, \RX'_c,\RG_1)$. User $2$ communicates $\RF'_2 := \RX_b \MB + \RX'_c \MC$ using the source $(\RX_b ,\RX'_c)$ which is a function of $\RZ_2$. Define $\RF':=(\RF'_1, \RF'_2)$,  a valid  discussion of block length $n=1$.

By \eqref{eq:Z'Zw'}, we have  $0= I(\RZ_V \wedge \RZ_{\opw}|\RG_{\opw}) = I(\RZ_V, \RF' \wedge \RZ_{\opw}|\RG_{\opw})= I(\RF' \wedge \RZ_{\opw}|\RG_{\opw})+I(\RZ_V \wedge \RZ_{\opw}|\RG_{\opw}, \RF')=I(\RZ_V \wedge \RZ_{\opw}| \RF')$, where the last equality follows from $I(\RF' \wedge \RZ_{\opw}|\RG_{\opw})\leq I(\RZ_V \wedge \RZ_{\opw}|\RG_{\opw}) \stackrel{\eqref{eq:Z'Zw'}}{=}0$, and $H(\RG_{\opw}|\RF')=0$.  Hence we conclude that 
\begin{align}
    `(2) = I(\RZ_V \wedge \RZ_{\opw} \mid \RF')=0 \label{eq:Z'wF'}
\end{align}

Let us show the remaining inequality $`(3)=`(1)$. By the independence relation 1), we evidently have
\begin{align}
I(\RX_a\MA \wedge \RX_b, \RX'_c \mid \RG_1) &= 0 \label{eq:aux1}
    % &I(\RX_a\MA \wedge \RX_b, \RX'_c|\RG_1) \notag\\
    % &\mkern 30mu= H(\RX_a\MA|\RG_1)- H(\RX_a\MA|\RG_1, \RX_b, \RX'_c) \notag \\ &\mkern 30mu= H(\RX_a\MA)-H(\RX_a\MA)=0, \label{eq:aux1}
\end{align}
Using the independence condition 2), we also obtain
\begin{align}
    &I(\RX_a, \RX'_c \wedge \RX_b \MB + \RX'_c \MC \mid \RG_1, \RX_a\MA) \notag\\ &=I(\RX_a, \RX'_c \wedge \RX_a\MA+\RX_b \MB + \RX'_c \MC \mid \RG_1, \RX_a\MA)\notag\\
    &=H(\RX_a\MA+\RX_b \MB + \RX'_c \MC \mid \RG_1, \RX_a\MA)%\notag\\ &\mkern 30mu 
    - H(\RX_a\MA+\RX_b \MB + \RX'_c \MC \mid \RG_1, \RX_a\MA,\RX_a, \RX'_c)\notag\\
    &=H(\RX_a\MA+\RX_b \MB + \RX'_c \MC \mid \RG_1, \RX_a\MA)%\notag\\ &\mkern 30mu  
    - H(\RX_a\MA+\RX_b \MB + \RX'_c \MC \mid \RG_1,\RX_a, \RX'_c)\notag\\
    &=H(\RX_a\MA+\RX_b \MB + \RX'_c \MC)%\notag\\ &\mkern 30mu 
    \ \textcolor{blue}{ - }\ H(\RX_a\MA+\RX_b \MB + \RX'_c \MC)\notag\\
    &=0.\label{eq:aux2}
\end{align}
It follows from \eqref{eq:final_z1} and \eqref{eq:final_z2} that
\begin{align*}
    `(1)&=I(\RZ_1\wedge \RZ_2|\RG_1)\\
        &=I(\RZ_1 \wedge \RZ_2, \RG_1 \mid \RG_1)\\
        &=I(\RX_a, \RX'_c, \RG_1 \wedge \RX_b, \RX'_c, \RG_1 \mid \RG_1)\\
        &=I(\RX_a, \RX'_c \wedge \RX_b, \RX'_c \mid \RG_1)\\
        &=I(\RX_a, \RX'_c, \RX_a\MA \wedge \RX_b, \RX'_c \mid \RG_1)\\
        &=I(\RX_a\MA \wedge \RX_b, \RX'_c \mid \RG_1)+I(\RX_a, \RX'_c \wedge \RX_b, \RX'_c \mid \RG_1, \RX_a\MA)\\
        &\stackrel{\eqref{eq:aux1}}{=}I(\RX_a, \RX'_c \wedge \RX_b, \RX'_c \mid \RG_1, \RX_a\MA)\\
        &=I(\RX_a, \RX'_c \wedge \RX_b, \RX'_c,\RX_b \MB + \RX'_c \MC \mid \RG_1, \RX_a\MA)\\
        &=I(\RX_a, \RX'_c \wedge \RX_b \MB + \RX'_c \MC \mid \RG_1, \RX_a\MA)\\
        &\mkern 30mu+I(\RX_a, \RX'_c \wedge \RX_b, \RX'_c\mid \RG_1, \RX_a\MA, \RX_b \MB + \RX'_c \MC)\\
        &\stackrel{\eqref{eq:aux2}}{=}I(\RX_a, \RX'_c \wedge \RX_b, \RX'_c \mid \RG_1, \RX_a\MA, \RX_b \MB + \RX'_c \MC)\\
        &=I(\RZ_1\wedge \RZ_2 \mid \RF')=`(3)
\end{align*}
 This completes the proof.

\end{proof}

\section{Proofs from Section~\ref{sec:treepin}} \label{app:thm:cwsk:irred}
\subsection{Proof of Lemma~\ref{lem:irred}}\label{lem:proof:irred}
In this proof, we first identify an edge whose m.c.f. with the wiretapper's observations is a non-constant function. Then, by appropriately transforming the source, we separate out the m.c.f. from the random variables corresponding to the edge and the wiretapper. Later we argue that the source can be reduced by removing the m.c.f. component entirely without affecting  $\wskc$ and $\rl$. And we repeat this process until the source becomes irreducible. At each stage, to show that the reduction indeed leaves the m.c.f. related to the other edges  unchanged and makes the m.c.f. of the reduced edge a constant function, we  use  the following lemma which is proved in Appendix~\ref{lem:mcf}.
\begin{lemma}\label{lem:indgk}
 If $(\RX,\RY)$ is independent of $\RZ$, then  $\op{mcf}(\RX, (\RY,\RZ)) = \op{mcf}(\RX,\RY)$ and $\op{mcf}((\RX,\RZ), (\RY,\RZ))  = \linebreak(\op{mcf}(\RX,\RY),\RZ)$.
\end{lemma}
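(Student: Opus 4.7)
The plan is to prove each equality by the standard double-inequality technique: I will exhibit a common function of the pair on the right-hand side matching the left-hand side in entropy, and then show that the m.c.f.\ on the left-hand side is itself expressible as a function on the right-hand side. Since the maximal common function of two random variables is unique up to a bijection on its support, matching entropies together with functional expressibility then give the claimed equality.

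For the first equality, I will set $\RG := \op{mcf}(\RX, \RY)$ and $\RG' := \op{mcf}(\RX, (\RY, \RZ))$. Since $\RG$ is a function of $\RX$ and of $\RY$, it is also a function of $(\RY, \RZ)$, so $\RG$ is a common function of $\RX$ and $(\RY, \RZ)$; this yields $H(\RG) \le H(\RG')$. For the reverse inequality, I will argue that $\RG'$ is already a function of $\RY$. Because $\RG'$ is a function of $\RX$, the pair $(\RG', \RY)$ is a function of $(\RX, \RY)$, which by hypothesis is independent of $\RZ$; hence $(\RG', \RY) \perp \RZ$, and in particular $I(\RG' ; \RZ \mid \RY) = 0$. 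Combining this with $H(\RG' \mid \RY, \RZ) = 0$ (which holds because $\RG'$ is a function of $(\RY, \RZ)$) gives $H(\RG' \mid \RY) = H(\RG' \mid \RY, \RZ) + I(\RG' ; \RZ \mid \RY) = 0$. Hence $\RG'$ is a common function of $\RX$ and $\RY$, so $H(\RG') \le H(\RG)$, and the two m.c.f.'s coincide up to the standard equivalence.

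For the second equality, I will set $\RG := \op{mcf}(\RX, \RY)$ and $\RG' := \op{mcf}((\RX, \RZ), (\RY, \RZ))$. The pair $(\RG, \RZ)$ is plainly a function of both $(\RX, \RZ)$ and $(\RY, \RZ)$; since $(\RX, \RY) \perp \RZ$ the same is true of $\RG$, so $H(\RG, \RZ) = H(\RG) + H(\RZ)$ and hence $H(\RG') \ge H(\RG) + H(\RZ)$. For the reverse inequality, I will write $\RG' = f(\RX, \RZ) = g(\RY, \RZ)$ almost surely. By independence, the support of $P_{\RX, \RY, \RZ}$ factors as $\supp(P_{\RX, \RY}) \times \supp(P_{\RZ})$, so for every fixed $z \in \supp(P_{\RZ})$ the slice identity $f(x, z) = g(y, z)$ holds for all $(x, y) \in \supp(P_{\RX, \RY})$. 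Thus $f(\cdot, z)(\RX) = g(\cdot, z)(\RY)$ almost surely under $P_{\RX, \RY}$, making $f(\cdot, z)(\RX)$ a common function of $\RX$ and $\RY$ and therefore a function of $\RG$; write $f(\RX, z) = h_z(\RG)$. This expresses $\RG' = h_{\RZ}(\RG)$ as a function of $(\RG, \RZ)$, giving $H(\RG') \le H(\RG, \RZ) = H(\RG) + H(\RZ)$.

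The key obstacle is the slice-wise reduction in the second equality: only after invoking independence to identify the conditional support at $\RZ = z$ with the full $\supp(P_{\RX, \RY})$ can I upgrade the pointwise coincidence $f(x, z) = g(y, z)$ to the statement that $f(\cdot, z)(\RX)$ is a common function of $\RX$ and $\RY$ under the unconditional law, thereby invoking the m.c.f.\ of the unperturbed pair $(\RX, \RY)$. Without independence, conditioning on $\RZ = z$ would generally alter the joint distribution of $(\RX, \RY)$ and hence the m.c.f.\ structure itself, and the argument would break. Once this slice step is secured, the remaining manipulations are routine entropy bookkeeping.
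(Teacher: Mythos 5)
Your proof is correct and follows essentially the same route as the paper's: for the first equality you show any common function of $\RX$ and $(\RY,\RZ)$ is already a function of $\RY$ via an entropy chain using independence, and for the second you condition on $\RZ=\Rz$ and use $P_{\RX,\RY\mid\RZ=\Rz}=P_{\RX,\RY}$ to invoke the m.c.f.\ of the unconditional pair $(\RX,\RY)$. The only cosmetic difference is that in the second part you phrase the bound as an explicit functional factorization $\RG' = h_{\RZ}(\RG)$ (which tacitly uses the standard fact that any common function factors through the m.c.f.), whereas the paper works purely with the entropy inequality $H(\RG'\mid\RZ=\Rz)\le H(\op{mcf}(\RX,\RY))$ and then averages; both are sound and yield the same conclusion.
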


 Since $(\RZ_V, \RZ_{\opw})$ is not irreducible, there exists an edge $e \in E$ such that $\RG_e := \op{mcf}(\RY_e, \RZ_{\opw})$ is a non-constant function. By using the result that the m.c.f. of a finite linear source is a linear function~\cite{chan18zero}, we can write $\RG_e =\RY_e \MM_e  =\RZ_{\opw} \MM_{\opw}$ for some full column-rank matrices, $\MM_e$ and $\MM_{\opw}$ over $\Fq$. 

We will appropriately transform the random vector $\RY_e$. Let $\MN_e$ be any matrix with full column-rank such that $\bM \MM_e \mid  \MN_e \eM$ is invertible. Define $\tRY_e := \RY_e \MN_e$, then
\begin{align*}
  \bM \RX_{e,1},\ldots,\RX_{e,n_e}\eM \bM \MM_e \mid  \MN_e \eM &= \RY_e \bM \MM_e \mid  \MN_e \eM \\
  &=\bM \RG_e, \tRY_e \eM\\
  &= \bM \RG_{e,1},\ldots,\RG_{e,\ell}, \tRX_{e,1},\ldots,\tRX_{e,\tilde{n}_e} \eM
\end{align*}
where $\tRY_e = [\tRX_{e,1},\ldots,\tRX_{e,\tilde{n}_e}]$, $\RG_e = [\RG_{e,1},\ldots, \RG_{e,\ell}]$, $\ell$ is the length of the vector $\RG_e$ and $\tilde{n}_e = n_e -\ell$. Therefore, we can obtain $(\RG_e, \tRY_e)$ by an invertible linear transformation of $\RY_e$. Note that the components $ \RG_{e,1},\ldots,\RG_{e,\ell}, \tRX_{e,1},\ldots,\linebreak \tRX_{e,\tilde{n}_e}$ are also  i.i.d. random variables that are uniformly distributed over $\Fq$, and they are independent of $ \RY_{E \setminus \{e\}}:=(\RY_b: b \in E \setminus \{e\}))$. Hence $\RG_e$ is independent of $\tRY_e$ and $\RY_{E \setminus \{e\}}$.

Now we will express $\RZ_{\opw}$ in terms of  $\RG_e$ and $\tRY_e$.
\begin{align*}
 \RZ_{\opw} &= \RX \MW\\
 & =\RY_e\MW_e + \RY_{E \setminus \{e\}} \MW_{E \setminus \{e\}}\\
 &= \bM \RG_e & \tRY_e\eM\bM \MM_e & \MN_e \eM^{-1}\MW_e + \RY_{E \setminus \{e\}} \MW_{E \setminus \{e\}}\\
 &= \RG_e \MW^{'}_e +\tRY_e \MW^{''}_e + \RY_{E \setminus \{e\}} \MW_{E \setminus \{e\}}
\end{align*}
where the  matrices $\MW_e$ and $\MW_{E \setminus \{e\}}$ are sub-matrices  of $\MW$ formed by rows corresponding to $e$ and $E \setminus \{e\}$ respectively. Also, the matrices $\MW^{'}_e$ and $\MW^{''}_e$ are sub-matrices  of $\bM \MM_e &  \MN_e \eM^{-1}\MW_e$ formed  by first $\ell$ rows and last $\tilde{n}_e$ rows respectively. Define $\tRZ_{\opw}:=\tRY_e \MW^{''}_e + \RY_{E \setminus \{e\}} \MW_{E \setminus \{e\}}$. Since $\RZ_{\opw}= \bM\RG_e & \tRZ_{\opw}\eM\bM  \MW^{'}_e \\ \MI \eM$ and $\bM\RG_e & \tRZ_{\opw}\eM = \RZ_{\opw}\bM \MM_{\opw} & \MI -\MM_{\opw}\MW^{'}_e \eM$, $\bM \RG_e & \tRZ_{\opw} \eM$ can be obtained by an invertible linear transformation of $\RZ_{\opw}$.

Since the transformations are invertible, $\RY_e$ and $\RZ_{\opw}$ can equivalently be written as $(\RG_e, \tRY_e)$ and $(\RG_e, \tRZ_{\opw} )$ respectively. We will see that $\RG_e$ can be removed from the source without affecting $\wskc$ and $\rl$.  Let us consider a  new tree-PIN  source $\tRZ_V$, which is the same as $\RZ_V$ except that  $\tRY_e$ and $\tilde{n}_e$ are associated to the edge $e$, and the wiretapper side information is  $\tRZ_{\opw}$. Note that $(\tRZ_V, \tRZ_{\opw})$ is also a tree-PIN source with linear wiretapper, and $\RG_e$ is independent of $(\tRZ_V, \tRZ_{\opw})$.

 For the edge $e$, $\op{mcf}(\tRY_e, \tRZ_{\opw})$ is a  constant function. Suppose if it were a non-constant function $\tRG_e$ w.p. 1, which  is indeed independent of $\RG_e$, then $\op{mcf}(\RY_e, \RZ_{\opw}) = \op{mcf}((\RG_e, \tRY_e), (\RG_e,\tRZ_{\opw}))= (\RG_e, \tRG_e)$. The last equality uses Lemma~\ref{lem:indgk}. Therefore, $H(\RG_e) =H(\op{mcf}(\RY_e, \RZ_{\opw}))=H(\RG_e, \tRG_e) >  H(\RG_e)$, which is a contradiction.  Moreover $H(\RY_e|\op{mcf}(\RY_e, \RZ_{\opw}))= H(\RY_e|\RG_e) =H(\tRY_e, \RG_e|\RG_e) = H(\tRY_e)$. For the other edges $b \neq e$, $\tRY_b = \RY_b$ and $\op{mcf}(\tRY_b,\tRZ_{\opw})= \op{mcf}(\RY_b,\tRZ_{\opw})= \op{mcf}(\RY_b, (\RG_e,\tRZ_{\opw})) = \op{mcf}(\RY_b, \RZ_{\opw})$, which follows from Lemma~\ref{lem:indgk}.
 
Now we will verify that $\wskc$ and $\rl$ do not change.  First let us show that $\rl(\RZ_V\|\RZ_{\opw}) \leq \rl(\tRZ_V\|\tRZ_{\opw})$ and $\wskc(\RZ_V\| \RZ_{\opw}) \geq \wskc(\tRZ_V\|\tRZ_{\opw})$.
Let $\tRF^{(n)}$  be  an optimal communication  for $\rl(\tRZ_V\|\tRZ_{\opw})$. We can make use of $\tRF^{(n)}$ to construct an omniscience communication for the source $(\RZ_V,\RZ_{\opw})$. Set  $\RF^{(n)}= (\RG_e^n, \tRF^{(n)})$. This communication is made as follows. Both the terminals incident on the edge $e$  have $\RY_e^n$ or equivalently $(\RG_e^n, \tRY_e^n)$. One of them  communicates $\RG_e^n$. In addition, all the terminals communicate according to $\tRF^{(n)}$ because for every user $i$,  $\tRZ_i^n$ is recoverable from $\RZ_i^n$. It is easy to verify that this is an omniscience communication for $(\RZ_V,\RZ_{\opw})$.
The minimum rate of leakage for omniscience 
\begin{align*}
\rl(\RZ_V\|\RZ_{\opw})&\leq \frac{1}{n}I(\RZ_V^n\wedge  \RF^{(n)}|\RZ_{\opw}^n)\\ &= \frac{1}{n}I(\RZ_V^n\wedge  \RG_e^n, \tRF^{(n)}|\RZ_{\opw}^n)\\
&\utag{a}= \frac{1}{n}I(\tRZ_V^n,\RG_e^n\wedge  \RG_e^n, \tRF^{(n)}|\tRZ_{\opw}^n, \RG_e^n) \\ &= \frac{1}{n}I(\tRZ_V^n\wedge \tRF^{(n)}|\tRZ_{\opw}^n, \RG_e^n) \\
&\utag{b}= \frac{1}{n}I(\tRZ_V^n\wedge \tRF^{(n)}|\tRZ_{\opw}^n) \utag{c}\leq \rl(\tRZ_V\|\tRZ_{\opw}) + \delta_n,
\end{align*}
for some $\delta_n \to 0$. Here, \uref{a} is due to the fact that $(\RG_e, \tRZ_{\opw})$ is obtained by a linear invertible transformation of $\RZ_{\opw}$, \uref{b} follows from the independence of $\RG_e$ and $(\tRZ_V, \tRZ_{\opw})$, and (c) uses the fact that $\tRF^{(n)}$ is an $\rl(\tRZ_V\|\tRZ_{\opw})-$achieving communication. It shows that  $\rl(\RZ_V\|\RZ_{\opw}) \leq \rl(\tRZ_V\|\tRZ_{\opw})$. Similarly, let $(\tRF^{(n)},\tRK^{(n)})$ be a communication and key pair  which is optimal  for  $\wskc(\tRZ_V\|\tRZ_{\opw})$. By letting $(\RF^{(n)},\RK^{(n)})=( \tRF^{(n)}, \tRK^{(n)})$ for the source $(\RZ_V, \RZ_{\opw})$, we can see that the key recoverability condition is satisfied. Thus $(\RF^{(n)},\RK^{(n)})$ constitute a valid SKA scheme for $(\RZ_V, \RZ_{\opw})$ which implies that $\wskc(\RZ_V\| \RZ_{\opw}) \geq \wskc(\tRZ_V\|\tRZ_{\opw})$. 

To prove the  reverse inequalities, $\rl(\RZ_V\|\RZ_{\opw}) \geq \rl(\tRZ_V\|\tRZ_{\opw})$ and $\wskc(\RZ_V\| \RZ_{\opw}) \leq \wskc(\tRZ_V\|\tRZ_{\opw})$, we use the idea of simulating  source $(\RZ_V, \RZ_{\opw})$ from  $(\tRZ_V,\tRZ_{\opw})$. Consider the source $(\tRZ_V,\tRZ_{\opw})$ in which one of the terminals $i$ incident on the edge $e$ generates an independent randomness $\tRG_e$ that has the same distribution as $\RG_e$. Then, terminal $i$ reveals $\tRG_e$ in public, from which the other terminal $j$ incident on $e$ and the wiretapper gain access to $\tRG_e$. The two terminals $i$ and $j$ simulate $\RY_e$ from $\tRY_e$ and $\tRG_e$, whereas the other terminals' observations, besides $\tRG_e$, are the same as those of $\RZ_V$.  Hence  they can communicate according to $\RF^{(n)}$ on the simulated source $\RZ_V$.  If  $\RF^{(n)}$  achieves omniscience for $\RZ_V^n$ then so does $\tRF^{(n)}=(\tRG_e^n, \RF^{(n)})$ for $\tRZ_V^n$ . Therefore the omniscience recoverability condition is satisfied. Furthermore, if we choose $\RF^{(n)}$ to be an $\rl(\RZ_V\|\RZ_{\opw})$-achieving communication, then the minimum  rate of leakage for omniscience,
\begin{align*}
\rl(\tRZ_V\|\tRZ_{\opw})&\leq \frac{1}{n}I(\tRZ_V^n\wedge  \tRF^{(n)}|\tRZ_{\opw}^n)\\
&= \frac{1}{n}I(\tRZ_V^n\wedge  \tRG_e^n, \RF^{(n)}|\tRZ_{\opw}^n)\\
&= \frac{1}{n}I(\tRZ_V^n\wedge  \tRG_e^n|\tRZ_{\opw}^n)+\frac{1}{n}I(\tRZ_V^n\wedge  \RF^{(n)}|\tRZ_{\opw}^n,\tRG_e^n)\\
&\utag{a}= \frac{1}{n}I(\tRZ_V^n,\tRG_e^n\wedge  \RF^{(n)}|\tRZ_{\opw}^n,\tRG_e^n) \\
&\utag{b}= \frac{1}{n}I(\RZ_V^n\wedge  \RF^{(n)}|\RZ_{\opw}^n) \\
&\utag{c}\leq \rl(\RZ_V\|\RZ_{\opw})+\delta_n,
\end{align*}
for some $\delta_n \to 0$. Here, \uref{a} follows from the independence of $\tRG_e$ and $(\tRZ_V, \tRZ_{\opw})$, \uref{b} is because $(\tRG_e, \tRZ_{\opw})$ can be obtained by a linear invertible transformation of $\RZ_{\opw}$, and (c) uses the fact that $\RF^{(n)}$ is an $\rl(\RZ_V\|\RZ_{\opw})$-achieving communication.
This shows that $\rl(\RZ_V\|\RZ_{\opw}) \geq \rl(\tRZ_V\|\tRZ_{\opw})$. Similarly, if  $(\RF^{(n)}, \RK^{(n)})$ is a communication and key pair for $(\RZ_V, \RZ_{\opw})$ then terminals can communicate according to  $\tRF^{(n)}= (\tRG_e^n, \RF^{(n)})$ and agree upon the key $\tRK^{(n)}= \RK^{(n)}$, which is possible due to simulation. Hence the key recoverability is immediate. The secrecy condition is also satisfied because $ I(\tRK^{(n)}\wedge  \tRF^{(n)}, \tRZ_{\opw}^n) = I(\RK^{(n)}\wedge  \RF^{(n)}, \tRG_e^n, \tRZ_{\opw}^n) = I(\RK^{(n)}\wedge  \RF^{(n)}, \RZ_{\opw}^n) $. Hence $(\tRF^{(n)},\tRK^{(n)})$ forms a valid WSKA scheme for $(\tRZ_V, \tRZ_{\opw})$ which implies that $\wskc(\RZ_V\| \RZ_{\opw}) \geq \wskc(\tRZ_V\|\tRZ_{\opw})$.

We have shown that $\rl(\RZ_V\|\RZ_{\opw}) = \rl(\tRZ_V\|\tRZ_{\opw})$,  $\wskc(\RZ_V\| \RZ_{\opw}) = \wskc(\tRZ_V\|\tRZ_{\opw})$  and  for the edge $e$, $\op{mcf}(\tRY_e, \tRZ_{\opw})$ is a constant function and $H(\RY_e|\op{mcf}(\RY_e, \RZ_{\opw}))= H(\tRY_e)$. Furthermore, we have shown that this  reduction does not change the m.c.f. of  $\RY_b$  and  $\tRZ_{\opw}$, when $b \neq e$. If the source $(\tRZ_V, \tRZ_{\opw})$ is not irreducible, then we can apply the above reduction again on $(\tRZ_V, \tRZ_{\opw})$ without affecting $\wskc$ and $\rl$. Note that the cardinality of the set of all edges $b$ such that $\op{mcf}(\RY_b, \RZ_{\opw})$ is a non-constant function reduces by one after each reduction step. So, this process terminates after a finite number of steps at an irreducible source, which completes the proof.

\subsection{Proof of Theorem~\ref{thm:cwsk:irred}}\label{thm:proof:cwsk:irred}
\emph{Converse part.} An  upper bound on  $\wskc$  is $\skc$ because the key generation ability of the users can only increase if the wiretapper has no side information. It was shown in \cite[Example 5]{csiszar04} that if the random variables of a source form a Markov chain on a tree, then $\skc = \min_{(i,j) : \{i,j\} = \xi(e) } I(\RZ_i \wedge \RZ_j)$. In the tree-PIN case, which satisfies the Markov property, this turns out to be $\skc=\min_{e \in E} H(\RY_e)$. As a consequence, we have $\wskc \leq \min_{e \in E} H(\RY_e)$ and 
\begin{align}\label{eq:rl:conv}
\begin{split}
 \rl&\utag{a}\geq H(\RZ_V|\RZ_{\opw}) -\wskc \\ 
 &\utag{b}= \left(\sum_{e \in E}n_e -n_w\right)\log_2q -\wskc \\
 &\geq \left(\sum_{e \in E}n_e -n_w\right)\log_2q  - \min_{e \in E} H(\RY_e)
 \end{split}
\end{align}\\
where \uref{a} follows from Theorem~\ref{thm:RL:lb} and \uref{b} is due to the full column-rank assumption on $\MW$.

\emph{Achievability part.}  In this section, we will show the existence of an omniscience scheme with leakage rate $\left(\sum_{e \in E}n_e -n_w\right)\log_2q  - \min_{e \in E} H(\RY_e)$. Hence $\rl \leq \left(\sum_{e \in E}n_e -n_w\right)\log_2q  - \min_{e \in E} H(\RY_e)$,  which together with the chain of inequalities~\eqref{eq:rl:conv} imply that $\wskc = \min_{e \in E} H(\RY_e)=\skc$  and $\rl =\left(\sum_{e \in E}n_e -n_w\right)\log_2q- \skc$. In particular, for achieving a secret key of rate $\wskc = \min_{e \in E} H(\RY_e)$, the terminals use privacy amplification on the recovered source.

In fact, the existence of an omniscience scheme is shown by first constructing a template for the communication with desired properties and then showing the existence of an instance of it by a probabilistic argument. The  following are the key components involved in this construction.
\begin{enumerate}
\item \emph{Deterministic scheme:} A scheme is said to be \emph{deterministic} if  terminals are  not allowed to use any locally generated private randomness. 
 \item \emph{Perfect omniscience~\cite{sirinperfect}:} For a fixed $n \in \bb{N}$, $\RF^{(n)}$ is said to achieve \emph{perfect omniscience} if  terminals can recover the source $\RZ_V^n$ perfectly, i.e., $H(\RZ_V^n|\RF^{(n)}, \RZ_i^n) =0$ for all $i \in V$. If we do not allow any private randomness, then $H( \RF^{(n)} |  \RZ_V^n) = 0$, which implies
  \begin{align*}\label{eq:perfectomni}
  \begin{split}
   \frac{1}{n}  I(\RZ_V^n\wedge \RF^{(n)} | \RZ_{\opw}^n) &= \frac{1}{n}\left [H( \RF^{(n)} | \RZ_{\opw}^n) - H( \RF^{(n)} | \RZ_{\opw}^n, \RZ_V^n) \right] \\&= \frac{1}{n}H( \RF^{(n)} | \RZ_{\opw}^n).
  \end{split}
\end{align*} 
\item \emph{Perfect alignment:} For an $n \in \bb{N}$, we say that $\RF^{(n)}$ \emph{perfectly aligns} with $\RZ_{\opw}^n$ if $H( \RZ_{\opw}^n|\RF^{(n)} ) = 0$. Note that $\RZ_{\opw}^n$ is  recoverable from $\RF^{(n)}$ but not the other way around. In this case, $H( \RF^{(n)} | \RZ_{\opw}^n) =H( \RF^{(n)}) - H( \RZ_{\opw}^n)$. In an FLS, the wiretapper side information is $\RZ_{\opw}^n = \RX^n \MW^{(n)}$ where $\RX$ is the base vector. Suppose the communication is of the form $\RF^{(n)} = \RX^n \MF^{(n)}$, for some matrix $\MF^{(n)}$, then the condition of  perfect alignment is equivalent to the condition that the column space of $\MF^{(n)}$ contains the column space of $\MW^{(n)}$. This is in turn equivalent to the condition that the left nullspace of $\MW^{(n)}$ contains the left nullspace of $\MF^{(n)}$, i.e., if $\Ry \MF^{(n)}=\R0$ for some vector $\Ry$ then $\Ry \MW^{(n)}=\R0$.
\end{enumerate}
So we will construct a (deterministic) linear communication scheme, for some fixed $n$, achieving both perfect omniscience and perfect alignment. As a consequence,  the leakage rate for omniscience is equal to $\frac{1}{n}  I(\RZ_V^n\wedge \RF^{(n)} | \RZ_{\opw}^n) = \frac{1}{n}H( \RF^{(n)} | \RZ_{\opw}^n) = \frac{1}{n}[H( \RF^{(n)}) - H( \RZ_{\opw}^n)] = \frac{1}{n}H( \RF^{(n)}) - n_w\log_2q$. To show the desired rate, it is enough to have $\frac{1}{n}H( \RF^{(n)}) = \left(\sum_{e \in E}n_e\right) \log_2q  - \min_{e \in E} H(\RY_e) $.

We describe our construction first for the case of a PIN model on a path of length $L$, and $n_e = s$ for all edges $e \in E$. The essential ideas in this construction will serve as a road map for other, more general, cases. The construction is extended to the case of tree-PIN models, again with $n_e = s$ for all edges $e$, using the the fact that there exists a unique path from any vertex to a particular vertex designated as the root of the tree. Finally, for tree-PIN models in which $n_e$ can be different for distinct edges $e$, we give only a sketch of the proof; the technical details required to fill in the sketch can be found in \cite{treepin21arxiv}. 

% This construction has been separated into multiple cases for the ease of understanding:
% \begin{enumerate}
%    \item Path with length $L$ and $n_e=s$ for all $e \in E$,
%    \item Tree with $L$ edges and $n_e=s$ for all $e \in E$,
%    \item Path and tree with $L$ edges  and arbitrary $n_e$.
% \end{enumerate}
% We give detailed constructions only for the case $n_e=s$ for all $e \in E$. First we consider a PIN model defined on a path  graph and prove the result.  We then extend it to the tree-PIN case by using the fact that there exists a unique path from any vertex to the root of the tree. For the arbitrary $n_e$ case, we can extend the proof ideas for a slightly different communication. The technical details for these two cases can be found in \cite{treepin21arxiv}.

\subsubsection{Path of length $L$ and $n_e=s$ for all $e \in E$}
 Let $V= \{0,1,\ldots,L\}$ be the set of vertices and $E=\{1,\ldots,L\}$ be the edge set such that edge $i$ is incident on  vertices $i-1$ and $i$ (Fig.~\ref{fig:path_model}). Since $n_e =s$, $\min_{e \in E} H(\RY_e)=s \log_2q$. Fix a positive integer $n$,  such that $n > \log_q(sL)$. With $n$ i.i.d. realizations of the source, the vector corresponding to edge $i$ can be expressed as $\RY_i^{n} =[ \RX^n_{i,1} \ldots \RX^n_{i,s}]$ where $\RX^n_{i,j}$'s  can be viewed as elements in $\bb{F}_{q^n}$. Hence $\RY_i^{n} \in (\bb{F}_{q^n})^s$.  The goal is to construct a linear communication scheme $\RF^{(n)}$ that simultaneously achieves perfect omniscience and perfect alignment, such that $H( \RF^{(n)}) =n \left[ \left(\sum_{e \in E}n_e\right) \log_2q  - \min_{e \in E} H(\RY_e)\right] = n  \left(sL - s\right) \log_2q$.

 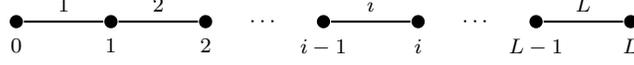
\begin{figure}[h]
\centering
\begin{tikzpicture}[-,>=stealth,thick, auto]
\tikzstyle{vertex}=[circle,fill=black,inner sep=0pt,minimum size=5pt];
\tikzstyle{every node}=[font= \fontsize{8pt}{10pt}\selectfont]
\node[vertex]      (0)        [label= below:{$0$}] {};
\node[vertex]      (1)  [right  = 3 em of 0,label= below:{$1$}] {};
\node[vertex]      (2)  [right  = 3 em of 1,label= below:{$2$}] {};
\node     (dots1)  [right  = 1 em of 2] {$\cdots$};
\node[vertex]      (i-1)  [right  = 1 em of dots1,label= below:{$i-1$}] {};
\node[vertex]      (i)  [right  = 3 em of i-1,label= below:{$i$}] {};
\node      (dots2)  [right  = 1 em of i] {$\cdots$};
\node[vertex]      (L-1)  [right  = 1 em of dots2,label= below:{$L-1$}] {};
\node[vertex]      (L)  [right  = 3 em of L-1,label= below:{$L$}] {};
% \node      (y1)  [right  = 1.3 em of 0,label= below:{$\RY_1$}] {};
% \node      (y2)  [right  = 1.3 em of 1,label= below:{$\RY_2$}] {};
% \node      (yi)  [right  = 1.3 em of i-1] {};
% \node      (x)  [below of  = yi] {$\RY_i=[ \RX_{i,1} \ldots \RX_{i,s}]$};
% \node      (yL)  [right  = 1.3 em of L-1,label= below:{$\RY_L$}] {};

\draw (0) -- (1) node [midway, above] {$1$};
\draw (1) -- (2) node [midway, above] {$2$};
\draw (i-1) -- (i) node [midway, above] {$i$};
\draw (L-1) -- (L) node [midway, above] {$L$};
\end{tikzpicture}
%\captionsetup{justification=centering}
\caption{Path of length $L$.}
\label{fig:path_model}
 \end{figure}
 
 Now we will construct the  communication as follows. Leaf nodes $0$ and $L$ do not communicate. The internal node $i$ communicates $\tRF_i^{(n)} = \RY^n_{i} + \RY^n_{i+1}\MA_{i}$, where $\MA_{i}$ is an $s \times s$ matrix with elements from $\bb{F}_{q^n}$. This  communication is of the form
\begin{align*}
\RF^{(n)} & = \begin{bmatrix}
\tRF_1^{(n)} \cdots \tRF_{L-1}^{(n)}
\end{bmatrix} = \begin{bmatrix}
\RY_1^{n}\cdots \RY_L^{n}
\end{bmatrix} \underbrace{\begin{bmatrix}
\MI & \M0& \cdots  &\M0&\M0\\
\MA_1&\MI &  \cdots &\M0&\M0\\
\M0&{\MA_2} & \cdots &\M0&\M0\\
\vdots&\vdots&\ddots&\vdots&\vdots\\
 \M0 &\M0&\cdots&\MA_{L-2}&\MI  \\
\M0 &\M0&\cdots& \M0&\MA_{L-1} \\
\end{bmatrix}}_{:=\MF^{(n)}}
\end{align*}
Here $\MF^{(n)}$ is an $sL \times s(L-1) $ matrix over $\bb{F}_{q^n}$. Observe that $\rank_{\bb{F}_{q^n}}(\MF^{(n)})= s(L-1)$, which implies that $H( \RF^{(n)}) =\left(sL - s\right) \log_2q^n$ and  the dimension of the left nullspace of $\MF^{(n)}$ is $s$. Now the communication coefficients, $(\MA_i : 1\leq i\leq L-1)$, have to be chosen such that $\RF^{(n)}$ achieves both perfect omniscience and perfect alignment. Let us derive some conditions on these matrices.

For  perfect  omniscience,  it  is  sufficient  for  the $\MA_i$'s  to  be invertible. 
%Perfect omniscience  is equivalent to the condition that the $\MA_i$'s  are invertible. The necessity of the invertibility condition is immediate since if $\MA_{L-1}$ were not invertible, then vector $\RY_L^n$ is not completely recoverable from the communication by some users, for instance, user $0$. Sufficiency 
This follows from the observation that for any $i \in V$, $[\MF^{(n)} \mid \MH_i]$  is  full rank,  where $\MH_i$ is a block-column vector with an $s \times s$ identity matrix at block-index $i$ and all-zero $s \times s$ matrix at the rest of the block-indices. In other words, $(\RY_1^{n}\cdots \RY_L^{n})$ is recoverable from $(\RF^{(n)},  \RY_i^n)$ for any $i \in E$, hence achieving omniscience. 

 For perfect alignment, we require that the left nullspace of $\MF^{(n)}$ is contained in  the left nullspace of $\MW^{(n)}$, which is the wiretapper matrix corresponding to $n$ i.i.d. realizations. Note that $\MW^{(n)}$ is a $\left(\sum_{e \in E} n_e\right) \times n_w$ matrix over $\bb{F}_{q^n}$ with entries $\MW^{(n)}(k,l) = \MW(k,l) \in  \bb{F}_{q}$; since $\bb{F}_{q} \subseteq \bb{F}_{q^n}$, $\MW^{(n)}(k,l) \in \bb{F}_{q^n}$. As pointed out before, the dimension of the  left nullspace of $\MF^{(n)}$ is $s$ whereas the dimension of the left nullspace of $\MW^{(n)}$ is $sL-n_w$. Since the source is irreducible, it follows from Lemma~\ref{lem:upbdirred} in Appendix~\ref{subsec:lemmas:irreducible} that $s \leq sL-n_w$. Since the dimensions are appropriate, the left nullspace inclusion condition is not impossible. Set $\MS := [\MS_1 \; \MS_2 \; \cdots \; \MS_L]$, where  $\MS_1$  is some invertible matrix (over $\bb{F}_{q^n}$) and $\MS_{i+1} :=(-1)^{i}\MS_1\MA_{1}^{-1}\cdots \MA_{i}^{-1}$ for $1 \leq i \leq L-1$. Observe that $\MS \MF^{(n)}=\M0$. Note that the $\MS_i$'s are also invertible, and $\MA_i= -\MS_{i+1}^{-1}\MS_i$ for $1 \leq i \leq L-1$.
%  Observe that 
% \begin{align*}
% \underbrace{\begin{bmatrix}
% \MS_1 & -\MS_1\MA_1^{-1} &
% \cdots&
% (-1)^{L-1}\MS_1\MA_{1}^{-1}\ldots \MA_{L-1}^{-1}
% \end{bmatrix}}_{:=\MS} \MF^{(n)}=\M0.
% \end{align*}
% where  $\MS_1$  is some invertible matrix. We write $\MS = [\MS_1 \ldots \MS_L]$ with $\MS_{i+1} :=(-1)^{i}\MS_1\MA_{1}^{-1}\ldots \MA_{i}^{-1}$ for $1 \leq i \leq L-1$. Notice that the $\MS_i$'s are invertible. We can also express the $\MA_i$'s in terms of the $\MS_i$'s as $\MA_i= -\MS_{i+1}^{-1}\MS_i$ for $1 \leq i \leq L-1$. 
The dimension of the left nullspace of $\MF^{(n)}$ is $s$, and all the $s$ rows of $\MS$ are independent,  so these rows span the left nullspace of $\MF^{(n)}$. Therefore for the inclusion, we must have $\MS\MW^{(n)} =\M0.$

Thus, proving the existence of communication coefficients $\MA_i$'s  that achieve perfect omniscience and perfect alignment is equivalent to proving the existence of  $\MS_i$'s  that are invertible and satisfy $[\MS_1\:\cdots\: \MS_L]\MW^{(n)} =\M0$. To do this, we use the probabilistic method.  Consider the system of equations $[\Ry_1\:\cdots\:\Ry_{sL}]\MW^{(n)} =\M0$ in $sL$ variables. Since the matrix $\MW^{(n)}$ has full column rank, the solutions can be described in terms of  $m:=sL- n_w$ free variables. As a result, any $\MS$ that satisfies $\MS\MW^{(n)}=\M0$ can be parametrized by $ms$ variables. Without loss of generality, we assume that the submatrix of $\MS$ formed by the first $m$ columns has these independent variables, $(\Rs_{i,j}: 1\leq i \leq s, 1 \leq j \leq m)$. Knowing these entries will determine the rest of the entries of $\MS$.  So we choose $\Rs_{i,j}$'s independently and uniformly from $\bb{F}_{q^n}$.  We would like to know if there is any realization such that all the $\MS_i$'s are invertible, which is equivalent to the condition $\prod_{i=1}^{L} \det(\MS_i)\neq 0$. Note that $\prod_{i=1}^{L} \det(\MS_i)$ is a multivariate polynomial in the variables $\Rs_{i,j}$, $1\leq i \leq s, 1 \leq j \leq m$, with degree at most $sL$. Furthermore the polynomial is not identically zero, which follows from the irreducibility of $\MW^{(n)}$. A proof of this fact is given in  Lemma~\ref{lem:nonzeropoly} in  Appendix~\ref{subsec:lemmas:irreducible}. Therefore, applying the  Schwartz-Zippel lemma (Lemma~\ref{lem:sz} in Appendix~\ref{subsec:lemmas:irreducible}), we have
\begin{align*}
 \Pr\left\lbrace \prod_{i=1}^{L} \det(\MS_i)\neq 0\right\rbrace \geq 1- \frac{sL}{q^n} \stackrel{(a)}{>} 0 \\
\end{align*}  
where $(a)$ follows from the choice $n > \log_q(sL)$.  Since the probability is strictly positive, there exists a realization of $\MS= [\MS_1 \;  \cdots \; \MS_L]$ such that $\MS \MW^{(n)} = 0$ and the $\MS_i$'s are invertible, which in turn shows the existence of a desired $\MF^{(n)}$.

\subsubsection{Tree with $L$ edges and $n_e=s$ for all $e \in E$}
For the tree-PIN model, we essentially use the same kind of communication construction as that of the path model.  Consider a PIN model on a tree with $L+1$ nodes and $L$ edges. To describe the linear communication, fix some leaf node as the root, $\rho$, of the tree. For any internal node $i$ of the tree, let $E_i$ denote the edges incident with $i$, and in particular, let $e^*(i)\in E_i$ denote the edge incident with $i$ that is on the unique path between $i$ and $\rho$ --- see Fig.~\ref{fig:uniquepath}. Fix a positive integer $n$,  such that $n > \log_q(sL)$. The communication from an internal node $i$ is  $(  \RY^n_{e^*(i)} + \RY^n_{e}\MA_{i,e}: e \in E_i \setminus \{e^*(i)\})$, where $\MA_{i,e}$ is an $s \times s$ matrix.  Each internal node communicates $s(d_i - 1)$ symbols from $\bb{F}_{q^n}$, where $d_i$ is the degree of the node $i$. Leaf nodes do not communicate. The total number of $\bb{F}_{q^n}$-symbols communicated is $\sum_i s(d_i-1)$, where the sum is over all nodes, including leaf nodes. The contribution to the sum from leaf nodes is in fact $0$, but including all nodes in the sum allows us to evaluate the sum as $s[2 \times(\text{number of edges}) - (\text{number of nodes})] = s(L-1)$.
 Thus, we have the overall communication of the form 
\begin{align*}
 \RF^{(n)} = \RY^n \MF^{(n)} 
\end{align*}
 where $\MF^{(n)} $ is a $sL \times s(L-1)$ matrix over $\bb{F}_{q^n}$ and $\RY^n = (\RY^n_e)$. The rows of $\MF^{(n)}$  correspond to the edges  of the tree.  The aim is to  choose the matrices $\MA_{i,e}$ in a way that simultaneously achieves perfect omniscience and perfect alignment.
%  such that $H( \RF^{(n)}) =n \left[ \left(\sum_{e \in E}n_e\right) \log_2q  - \min_{e \in E} H(\RY_e)\right] = n  \left(sL - s\right) \log_2q$.  
 
For perfect omniscience,  it is sufficient for the $\MA_{i,e}$'s to be     invertible. First observe that all the leaf nodes are connected to  the root node $\rho$  via paths. On each of these paths the communication has exactly the same form as that of the path model considered before. So when the $\MA_{i,e}$'s are invertible, the root node can recover the entire source using  $\RY_{e_\rho}^n$, where $e_\rho$ is the edge incident on $\rho$. Now fix a node $i \neq \rho$.  It follows from a property of trees that there is a unique path from $\rho$ to $i$. Again the form of the communication restricted to this path is the same as that of the path model.
Therefore, if the $\MA_{i,e}$'s are invertible, then node $i$ can recover the observation, $\RY_{e_\rho}^n$, of node $\rho$, using the communication along the unique path. Since the root node is able to recover the entire source using $\RY_{e_\rho}^n$ and the overall communication, node $i$ can also recover the entire source using the recovered observation $\RY_{e_\rho}^n$ and the overall communication.

Because $\RY^n$ is recoverable from $(\RF^{(n)},  \RY_e^n)$ for any $e \in E$, $[\MF^{(n)} \mid \MH_e]$ is an invertible $sL \times sL$ matrix, where $\MH_e$ is a block-column vector with $\MI$ at the location corresponding to edge e and zero matrices in the rest of the locations. Therefore $\MF^{(n)}$ is a full column-rank matrix, i.e., $\rank_{\bb{F}_{q^n}}(\MF^{(n)})= s(L-1)$, which implies that $H( \RF^{(n)}) =\left(sL - s\right) \log_2q^n$ and  the dimension of the left nullspace of $\MF^{(n)}$ is $s$.

For perfect alignment, we require that the left nullspace of $\MF^{(n)}$ is contained in  the left nullspace of $\MW^{(n)}$. So, let us construct an $\MS = (\MS_e)$ such that  $ \MS\MF^{(n)}=\M0$ as follows. Let $\MS_1$ be an invertible matrix. Each edge $e$ has two nodes incident with it; let $i^*(e)$ denote the node that is closer to the root $\rho$. There is a unique path $i^*(e) = i_1 \longrightarrow i_2 \longrightarrow  \cdots \longrightarrow i_{\ell} = \rho$ that connects $i^*(e)$ to $\rho$ and let the edges along the path in this order be $(e=e_1, e_2,\ldots, e_{\ell})$ --- see Fig.~\ref{fig:uniquepath}.
 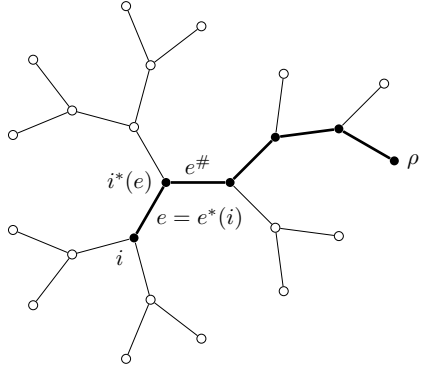
\begin{figure}[h]
\centering
\resizebox{\totalheight}{!}{% \begin{tikzpicture}[-,>=stealth,thick, auto]
% \tikzstyle{vertex}=[circle,fill=black,inner sep=0pt,minimum size=5pt];
% \tikzstyle{every node}=[font= \fontsize{8pt}{10pt}\selectfont]
% \node[vertex]      (0)        [label= below:{$\rho=i_l$}] {};
% \node[vertex]      (1)  [right  = 3 em of 0,label= below:{$i_{l-1}$}] {};
% \node[vertex]      (2)  [right  = 3 em of 1,label= below:{$i_{l-2}$}] {};
% \node     (dots1)  [right  = 1 em of 2] {$\cdots$};
% \node[vertex]      (3)  [right  = 1 em of dots1,label= below:{$i_2$}] {};
% \node[vertex]      (4)  [right  = 3 em of 3,label= below:{\begin{tabular}{l}
%     $i^*(e)$\\
%     $=i_1$ 
% \end{tabular}}] {};
% \node[vertex]      (5)  [right  = 3 em of 4] {};
% \node      (dots2)  [right  = 1 em of 5] {$\cdots$};

% % \node      (y1)  [right  = 1.3 em of 0,label= below:{$\RY_1$}] {};
% % \node      (y2)  [right  = 1.3 em of 1,label= below:{$\RY_2$}] {};
% % \node      (yi)  [right  = 1.3 em of i-1] {};
% % \node      (x)  [below of  = yi] {$\RY_i=[ \RX_{i,1} \ldots \RX_{i,s}]$};
% % \node      (yL)  [right  = 1.3 em of L-1,label= below:{$\RY_L$}] {};

% \draw (0) -- (1) node [midway, above] {$e_l$};
% \draw (1) -- (2) node [midway, above] {$e_{l-1}$};
% \draw (3) -- (4) node [midway, above] {$e_2$};
% \draw (4) -- (5) node [midway, above] {$e_1=e$};
% % \draw[thin] (1) -- ($ (1) + (0.5,-0.5) $);
% % \draw[thin] (2) -- ($ (2) + (0.5,-0.5) $);
% % \draw[thin] (2) -- ($ (2) + (0.5,0.5) $);
% % \draw[thin] (5) -- ($ (5) + (0.5,-0.5) $);
% % \draw[thin] (5) -- ($ (5) + (0.5,0.5) $);
% \end{tikzpicture}

\begin{tikzpicture}
[   level distance=1 cm,
    level 1/.style = {sibling angle =120},
    level 2/.style = {sibling angle = 90},
    level 3/.style = {sibling angle = 75},
    level 4/.style = {sibling angle = 75},
    grow cyclic
]
\tikzstyle{vbold}=[circle,fill=black,inner sep=0pt,minimum size=4pt];
\tikzstyle{v}=[draw, circle,inner sep=0pt,minimum size=4pt];
%\tikzstyle{every node}=[font=\fontsize{8pt}{10pt}\selectfont]

\node [vbold] [label= left:{$i^*(e)$}] (i*) {}
    child{node[vbold] [label=-120:{$i$}](i){} child{node[v]{} child{node[v]{}} child{node[v]{}}} child{node[v]{} child{node[v]{}} child{node[v]{}}}}   
    child{node [vbold](i3) {}child{node [v] {} child{node [v] {}} child{node [v] {}}} child{node [vbold](i4) {} child{node [vbold](i5) {}child{node[vbold][label= right:{$\rho$}](i6){}} child{node[v] {}}} child{node [v] {}}}}    
    child{node [v] {} child{node [v] {} child{node [v] {} } child{node [v] {}}} child{node [v] {} child{node [v] {}} child{node [v] {}}}};
    \draw[very thick] (i*) -- (i) node [midway, right, yshift=-0.35 em, xshift=-0.1 em] {$e=e^*(i)$};
    \draw[very thick] (i*) -- (i3) node [midway, above] {$e^\#$};
    \draw[very thick] (i3) -- (i4)--(i5)--(i6);

 \end{tikzpicture}}
\caption{Unique path between an internal node $i$ and the root $\rho$}
\label{fig:uniquepath}
 \end{figure}
We set $\MS_e := (-1)^{\ell-1} \MS_1 \MA^{-1}_{i_{\ell -1}, e_{\ell-1}} \cdots \MA^{-1}_{i_{1}, e_{1}} $ for all edges $e$ except for the edge incident with $\rho$, to which we associate $\MS_1$. Note that the $\MS_e$'s are invertible and $\MS_e= - \MS_{e^{\#}}\MA^{-1}_{i^*(e), e}$, where  $e^{\#}$ is the edge adjacent to $e$ on the unique path from $i^*(e)$ to $\rho$. Let us now verify that $\MS \MF^{(n)} = \M0$. The  component corresponding to the internal node $i$ in $\MS \MF^{(n)}$ is of the form $(\MS_{e^*(i)} + \MS_{e}\MA_{i,e}: e \in E_i \setminus \{e^*(i)\})$. But for an  $e \in E_i \setminus \{e^*(i)\}$, $i^{*}(e) = i$ and $e^{\#} = e^*(i)$, thus $\MS_{e}\MA_{i,e} = - \MS_{e^{\#}}\MA^{-1}_{i^*(e), e}\MA_{i,e}= - \MS_{e^*(i)}\MA^{-1}_{i, e}\MA_{i,e} =- \MS_{e^*(i)}$. Hence we have  $\MS_{e^*(i)} + \MS_{e}\MA_{i,e}=\M0$ which implies $\MS \MF^{(n)} = \M0$.
The dimension of the left nullspace of $\MF^{(n)}$ is $s$ and all the $s$ rows of $\MS$ are independent,  so these rows span the left nullspace of $\MF^{(n)}$. Therefore, for the inclusion of one nullspace within the other, we must have $\MS\MW^{(n)} =\M0$.

Finally, we can prove the existence of an $\MS=(\MS_e)$ such that $\MS\MW^{(n)} =\M0$ and the $\MS_e$'s are invertible, using the probabilistic method exactly as before. The details are omitted.  This shows the existence of a desired $\MF^{(n)}$.

\subsubsection{Path and tree with $L$ edges  and arbitrary $n_e$}
In this case, we define $s := \min\{n_e: e \in E\}$. We consider a communication $\RF^{(n)}$ that consists of two parts. One part involves the communication that is similar to that of the $n_e =s$ case, where we use the first $s$ random variables associated to each edge $e$. And the other part involves revealing the rest of the random variables on each edge, but this is done by linearly combining them with the first $s$ rvs.

For this kind of a communication structure, we can in fact show, in a similar way as in the $n_e =s$ case, the existence of an $\MF^{(n)}$ with the desired properties. The technical details are omitted but they can be found in \cite{treepin21arxiv}.

\subsection{Proof of Lemma~\ref{lem:indgk}}\label{lem:mcf}
Recall that we assume that $\RZ$ is independent of  $(\RX,\RY)$.  Any common function (c.f.) of $\RX$ and $\RY$ is also a common function of $\RX$ and $(\RY,\RZ)$. Let $\RF$ be a c.f. of $\RX$ and $(\RY,\RZ)$ which means that $H(\RF|\RX)=0=H(\RF|\RY,\RZ)$. Note that $H(\RF|\RY)=H(\RZ|\RY)+H(\RF|\RY,\RZ)-H(\RZ|\RF,\RY)=H(\RZ)-H(\RZ|\RF,\RY)$. Also we have $H(\RZ|\RF,\RY) \geq H(\RZ|\RX,\RY)$ which follows from the  fact that $\RF$ is a function of $\RX$. Both these inequalities together imply that $0 \le H(\RF|\RY) \leq H(\RZ)-H(\RZ|\RX,\RY) =0$. So any c.f. of $\RX$ and $(\RY,\RZ)$ is also a c.f. of $\RX$ and $\RY$.  Therefore $\op{mcf}(\RX, (\RY,\RZ)) = \op{mcf}(\RX,\RY)$. 
 
 We can see that $ (\op{mcf}(\RX,\RY) ,\RZ)$ is a c.f. of $(\RX,\RZ)$ and $(\RY,\RZ)$. To show that $\op{mcf}((\RX,\RZ), (\RY,\RZ)) = (\op{mcf}(\RX,\RY), \RZ)$, it is enough to show that $H(\op{mcf}(\RX,\RY) ,\RZ) \geq H(\RG)$ for any $\RG$ satisfying $H(\RG|\RX,\RZ)=0=H(\RG|\RY,\RZ)$. Since $\sum_{\Rz \in \mc{Z}}P_{\RZ}(\Rz) H(\RG|\RX,\RZ=\Rz)=H(\RG|\RX,\RZ)=0$, for a $\Rz \in \op{supp}(P_{\RZ})$,  we have $H(\RG|\RX,\RZ=\Rz)=0$. Similarly, $H(\RG|\RY,\RZ=\Rz)=0$. Thus, for a fixed $\RZ =\Rz$, $\RG$ is a c.f.\ of rvs $\RX$ and $\RY$ jointly distributed according to $P_{\RX, \RY \mid \RZ=\Rz}$. In this case, let $\op{mcf}(\RX,\RY)_{\RZ=\Rz}$ denote the m.c.f. which indeed depends on the conditional distribution.  However, because of the independence $P_{\RX, \RY \mid \RZ=\Rz} =P_{\RX, \RY}$,  the $\op{mcf}(\RX,\RY)_{\RZ=\Rz}$ remains the same across all $\Rz$, and is equal to  $\op{mcf}(\RX,\RY)$. Therefore, from the optimality of m.c.f., we have $H(\RG |\RZ=\Rz) \leq  H(\op{mcf}(\RX,\RY)_{\RZ=\Rz} |\RZ=\Rz)=H(\op{mcf}(\RX,\RY) |\RZ=\Rz)= H(\op{mcf}(\RX,\RY))$, where the last equality follows from the independence of $\RZ$ and $(\RX,\RY)$. As a consequence, we have $H(\RG |\RZ) =\sum_{\Rz \in \mc{Z}}P_{\RZ}(\Rz) H(\RG|\RZ=\Rz)\leq H(\op{mcf}(\RX,\RY))$. The desired inequality follows from $H(\RG) \leq H(\RG,\RZ) =H(\RG |\RZ) + H(\RZ) \leq  H(\op{mcf}(\RX,\RY)) + H(\RZ) =H(\op{mcf}(\RX,\RY),\RZ)$.  This proves that $\op{mcf}((\RX,\RZ), (\RY,\RZ)) = (\op{mcf}(\RX,\RY), \RZ)$.

 \subsection{Useful Lemmas related to the proof of Theorem~\ref{thm:cwsk:irred}} \label{subsec:lemmas:irreducible}
\begin{lemma}[Schwartz-Zippel lemma]\label{lem:sz}
Let $\op{P}(\RX_1,\ldots,\RX_n)$ be a non-zero polynomial in $n$ variables with degree $d$ and coefficients from a finite field $\Fq$. Given a non-empty set $S \subseteq \Fq$, if we choose the $n$-tuple $(\RMx_1, \ldots, \RMx_n)$ uniformly from $S^n$, then
\begin{align*}
\Pr \{(\RMx_1, \ldots, \RMx_n)\in S^n: \op{P}(\RMx_1, \ldots, \RMx_n) = 0\} \leq \frac{d}{|S|}.
\end{align*}
\end{lemma}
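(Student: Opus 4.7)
The plan is to prove the Schwartz--Zippel lemma by induction on the number of variables $n$. For the base case $n=1$, a nonzero univariate polynomial of degree $d$ over a field has at most $d$ roots, so the probability that a uniformly chosen element of $S$ is a root is at most $d/|S|$, giving the bound immediately.

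For the inductive step, assume the bound holds for all nonzero polynomials in fewer than $n$ variables. Given a nonzero $P(X_1,\ldots,X_n)$ of degree $d$, I would decompose it as a polynomial in $X_n$ with coefficients in $\mathbb{F}_q[X_1,\ldots,X_{n-1}]$, writing
\[
P(X_1,\ldots,X_n) \;=\; \sum_{i=0}^{k} X_n^{\,i}\, Q_i(X_1,\ldots,X_{n-1}),
\]
where $k$ is the largest index for which $Q_k$ is not the zero polynomial. Since each monomial $X_n^i$ of $X_n^i Q_i$ contributes $i$ to the total degree, we have $\deg(Q_k) \le d-k$.

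Now I would bound the probability of the event $\{P(x_1,\ldots,x_n)=0\}$ by splitting it according to whether $Q_k(x_1,\ldots,x_{n-1})$ vanishes. By the induction hypothesis applied to $Q_k$, the event $\{Q_k(x_1,\ldots,x_{n-1})=0\}$ has probability at most $(d-k)/|S|$. Conditioned on any fixed choice of $(x_1,\ldots,x_{n-1})$ for which $Q_k(x_1,\ldots,x_{n-1})\neq 0$, the polynomial $P(x_1,\ldots,x_{n-1},X_n)$ is a nonzero univariate polynomial in $X_n$ of degree exactly $k$, and hence has at most $k$ roots in $S$, so the conditional probability that $x_n$ is one of them is at most $k/|S|$. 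Combining via the union bound gives
\[
\Pr\{P=0\} \;\le\; \frac{d-k}{|S|} + \frac{k}{|S|} \;=\; \frac{d}{|S|},
\]
completing the induction.

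This proof is standard; the only subtlety is choosing the right decomposition (in the variable $X_n$) so that the degree budget splits cleanly between the two cases, and being careful that the induction hypothesis applies to $Q_k$ which has strictly fewer variables. There is no real obstacle here, as the argument is essentially textbook.
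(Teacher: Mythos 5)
The paper does not actually prove Lemma~\ref{lem:sz}; it simply states the Schwartz--Zippel lemma as a well-known result to be used as a black box in the probabilistic existence argument for the omniscience communication matrix. Your proof is the standard textbook induction on the number of variables (decompose in $X_n$, isolate the leading coefficient $Q_k$, split according to whether $Q_k$ vanishes, and combine the two bounds $(d-k)/|S|$ and $k/|S|$), and it is correct. The one presentational nit is that the final step is really a decomposition over the event $\{Q_k=0\}$ and its complement together with the law of total probability, rather than a union bound in the strict sense, but the inequality $\Pr\{P=0\}\le \Pr\{Q_k=0\} + \Pr\{P=0 \mid Q_k\neq 0\}$ is exactly what you use and is valid; so the argument goes through as written.
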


Fix two positive integers $m$ and $s$ such that $s\leq m$. Consider the integral domain $\Fq\left[\RX_{11}, \ldots ,\RX_{1m},\ldots, \RX_{s1}, \ldots ,\RX_{sm}\right]$, which is the set of all multivariate polynomials in indeterminates $ \RX_{11}, \ldots ,\RX_{1m},\ldots, \RX_{s1}, \ldots ,\RX_{sm}$ with coefficients from a finite field $\Fq$. Let us consider a matrix of the form
\begin{align}
\MM=\begin{bmatrix}
\op{L}_1(\RY_1)&\op{L}_2(\RY_1)&\cdots &\op{L}_s(\RY_1) \\
\op{L}_1(\RY_2)&\op{L}_2(\RY_2)&\cdots &\op{L}_s(\RY_2) \\
\vdots & \vdots & \ddots & \vdots\\
\op{L}_1(\RY_s)&\op{L}_2(\RY_s)&\cdots &\op{L}_s(\RY_s)
\end{bmatrix}_{s \times s}, \label{eqn:detmatrix}
\end{align}
where $\RY_k:=[\RX_{k1}, \ldots ,\RX_{km}]$ for $1 \leq k \leq s$ and $\op{L}_{i}(\RY_k)$  denotes a linear combination over $\Fq $ of the indeterminates $ \RX_{k1}, \ldots ,\RX_{km}$. Note that row $k$ depends only on $\RY_k$.  Let  $\RX := [\RY^T_1, \ldots, \RY^T_s]^T$, and let $\op{P}(\RX)$ denote a polynomial in the indeterminates $ \RX_{11}, \ldots ,\RX_{1m},\ldots, \RX_{s1}, \ldots ,\RX_{sm}$, with coefficients from $\Fq$. 
 
It is a fact \cite[p.~528]{bourbaki1989algebra} that  for a general matrix $\MM$ with entries from $\Fq\left[\RX\right]$, $\det(\MM)=0$ if and only if  there exist polynomials $\op{P}_k (\RX)$, $1 \leq k \leq s$, not all zero,  such that
\begin{align*}
\MM \begin{bmatrix} \op{P}_1(\RX)  , \ldots ,  {\op{P}_s} (\RX) \end{bmatrix}^T= \M0.
\end{align*}
But this does not guarantee a non-zero $\lambda = [\lambda_1, \ldots, \lambda_s]  \in \Fq^s$ such that $ \MM \lambda^T= 0$.  However, the following lemma shows that if the matrix is of the form  (\ref{eqn:detmatrix}), then this is the case.

\begin{lemma} \label{lem:det}
Let $\MM$ be a matrix of the form (\ref{eqn:detmatrix}). Then  $\det(\MM)=0$ iff  there exists a  non-zero $\lambda = [\lambda_1, \ldots, \lambda_s]  \in \Fq^s$ such that $\MM \lambda^T= 0$.
\end{lemma}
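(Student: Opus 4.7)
My plan is to recast the problem in terms of a factorization $\MM = \RY\MA$, where $\RY$ is a formal matrix of indeterminates and $\MA$ is a constant matrix over $\Fq$, and then reduce both sides of the equivalence to the same condition on $\rank_\Fq(\MA)$.

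First I would set up this factorization. Each linear form $\op{L}_i$ is determined by a column vector of $m$ coefficients in $\Fq$; assembling these columns into an $m \times s$ matrix $\MA$ over $\Fq$, we get
\begin{align*}
\bM \op{L}_1(\RY_k) & \op{L}_2(\RY_k) & \cdots & \op{L}_s(\RY_k) \eM = \RY_k\,\MA
\end{align*}
for every $k$. Stacking these rows yields $\MM = \RY\,\MA$, where $\RY$ is the $s\times m$ matrix of algebraically independent indeterminates whose $k$-th row is $\RY_k$.

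The ``if'' direction is immediate: a nonzero $\lambda\in\Fq^s$ with $\MM\lambda^T = 0$ gives a nontrivial $\Fq[\RX]$-linear relation among the columns of $\MM$, hence $\det(\MM)=0$. For the ``only if'' direction I would argue via two equivalences, both reducing to the condition $\rank_\Fq(\MA) < s$. On the one hand, since the $sm$ entries of $\RY$ are algebraically independent indeterminates, the equation $\RY\,(\MA\lambda^T) = 0$ forces $\MA\lambda^T = 0$ as a vector in $\Fq^m$; so existence of a nonzero $\lambda\in\Fq^s$ with $\MM\lambda^T=0$ is equivalent to $\rank_\Fq(\MA) < s$. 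On the other hand, I would apply the Cauchy--Binet formula to the factorization $\MM = \RY\,\MA$:
\begin{align*}
\det(\MM) \;=\; \sum_{S\subseteq[m],\,|S|=s} \det(\RY_{*,S})\,\det(\MA_{S,*}),
\end{align*}
where $\RY_{*,S}$ is the $s\times s$ submatrix of $\RY$ with columns in $S$ and $\MA_{S,*}$ is the $s\times s$ submatrix of $\MA$ with rows in $S$.

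The key step is to observe that the family $\{\det(\RY_{*,S}): |S|=s\}$ is linearly independent in $\Fq[\RX]$. Indeed, for distinct $s$-subsets $S\neq S'$ there is some $j\in S\setminus S'$, and every monomial of $\det(\RY_{*,S})$ uses a variable from the column $j$ of $\RY$ whereas no monomial of $\det(\RY_{*,S'})$ does; equivalently, under any lex monomial order the leading monomials $\prod_{k=1}^s \RX_{k,\sigma(k)}$ (for the identity permutation on $S$) are distinct across $S$. Hence $\det(\MM)=0$ if and only if $\det(\MA_{S,*}) = 0$ for every $s$-subset $S$, which is exactly $\rank_\Fq(\MA) < s$. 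Combining the two equivalences, both sides of the lemma are equivalent to $\rank_\Fq(\MA) < s$, which completes the proof. The only nonroutine point to verify carefully is the linear independence of the $\det(\RY_{*,S})$, and the argument above via disjoint supports or via distinct leading monomials handles it cleanly.
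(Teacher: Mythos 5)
Your proof is correct and takes essentially the same route as the paper: both factor $\MM$ as an indeterminate matrix times a constant coefficient matrix $\MA$ over $\Fq$ and reduce both sides of the equivalence to $\rank_\Fq(\MA)<s$. The paper derives the Cauchy--Binet identity by hand via the Leibniz permutation expansion, whereas you invoke it directly and supply the linear-independence observation for the minors of the indeterminate matrix; the content is the same.
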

\begin{proof}
The ``if" part holds for any matrix $\MM$ by the fact stated above. 
%If $\MM \lambda^T= 0$ for some non-zero $\lambda = [\lambda_1, \ldots, \lambda_s]  \in \Fq^n$, then columns are linearly dependent over $\Fq\left[X\right]$ which implies that $\det(\MM)=0$.
For the ``only if" part, suppose that $\det(\MM)=0$.  We can write $\MM$ as follows
\[\MM=\underbrace{\begin{bmatrix}
\RX_{11}&\RX_{12}&\cdots &\RX_{1m} \\
\RX_{21}&\RX_{22}&\cdots &\RX_{2m} \\
\vdots & \vdots & \ddots & \vdots\\
\RX_{s1}&\RX_{s2}&\cdots &\RX_{sm} 
\end{bmatrix}}_{=\MX}\underbrace{\begin{bmatrix}
a_{11}&a_{21}&\cdots &a_{s1} \\
a_{12}&a_{22}&\cdots &a_{s2} \\
a_{13}&a_{23}&\cdots &a_{s3} \\
\vdots & \vdots & \ddots & \vdots\\
a_{1m}&a_{2m}&\cdots &a_{sm} 
\end{bmatrix}}_{:=\MA}.\]
 for some $\MA \in \Fq^{m \times s}$.  
Now consider the determinant of the matrix $\MM$,
\begin{align*}
\det(\MM) &= \sum_{\sigma \in S_s} \sgn(\sigma )\op{L}\nolimits_{\sigma(1)}(\RY_1)\cdots \op{L}\nolimits_{\sigma(s)}(\RY_s)\\
&=\sum_{\sigma \in S_s}\sgn(\sigma ) \left( \sum_{j_1=1}^{m}a_{\sigma(1)j_1} \RX_{1j_1}\right)\cdots \left( \sum_{j_s=1}^{m}a_{\sigma(s)j_s} \RX_{sj_s}\right)\\
&= \sum_{\sigma \in S_s}\sgn(\sigma )\sum_{j_1,\dots,j_s \in [m]^s}\left(a_{\sigma(1)j_1}\cdots a_{\sigma(s)j_s}\right)\RX_{1j_1}\cdots \RX_{sj_s}\\
& = \sum_{j_1,\ldots,j_s\in [m]^s}\left(\sum_{\sigma \in S_s}\sgn(\sigma )a_{\sigma(1)j_1}\cdots a_{\sigma(s)j_s}\right)\RX_{1j_1}\cdots \RX_{sj_s}\\
&= \sum_{j_1,\ldots,j_s\in [m]^s} \det(A_{j_1\ldots j_s})\RX_{1j_1}\cdots\RX_{sj_s} 
\end{align*}
where $\MA_{j_1j_2\ldots j_s}$ is the $s\times s$ submatrix of $\MA$ formed by the rows $j_1, j_2, \dots ,j_s$. 
%$(a)$ follows from the fact that the monomials $\RX_{1j_1}\RX_{2j_2}\ldots \RX_{sj_s}$, for $j_1,j_2,\ldots,j_s\in [m]^s$, are distinct. $(b)$ holds because the inner sum is just the determinant of $\MA_{j_1j_2\ldots j_s}$. 
Since $\det(\MM)=0$, $\det(\MA_{j_1j_2\ldots j_s})= 0$ for every collection of distinct indices $j_1,j_2,\dots,j_s$, which implies that  any $s$ rows of $\MA$ are linearly dependent over $\Fq$. This shows that the rank$_{ \Fq}(\MA) < s$, therefore the columns of $\MA$ are linearly dependent over $\Fq$. Hence there exists a  non-zero $\lambda = [\lambda_1, \ldots, \lambda_s]  \in \Fq^n$ such that $ \MA\lambda^T= 0 \Rightarrow \MM\lambda ^T= 0$.
\end{proof}

\begin{definition}
 Let $\MW$ be a row-partitioned matrix of the form  
 \begin{align} \label{blockcolumn}
 \renewcommand{\arraystretch}{1.5}
 \begin{bmatrix}
 \begin{array}{c}
  \MW_1\\ \hline
  \MW_2\\ \hline
  \vdots\\ \hline
  \MW_{|E|}
  \end{array}
  \end{bmatrix}
 \end{align}
where $\MW_i$ is an $n_i \times n_w$ matrix over $\Fq$. We say that the matrix $\MW$ is \emph{reducible} if there exist an index $i$ and a non-zero row vector $r_i$ in $\Fq^{n_i}$ such that the column span of $\MW$ contains the column vector $[-0- \mid  \cdots \mid - r_i-\mid \cdots \mid -0-]^T$. If the matrix $\MW$ is not reducible then, we say it is \emph{irreducible}.
\end{definition}
A tree-PIN source with linear wiretapper is irreducible iff  the wiretapper matrix $\MW$ is irreducible.
\begin{lemma} \label{lem:upbdirred}
 Let $\MW$ be a $(\sum_{e \in E} n_e) \times  n_w$ wiretapper  matrix  in the row-partitioned form \eqref{blockcolumn}. If the matrix $\MW$ is irreducible then $n_w  \leq (\sum_{e \in E}n_e)-s$ where $s=\min\{n_e: e \in E\}$. 
\end{lemma}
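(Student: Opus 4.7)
The plan is to prove the lemma by a direct dimension-counting argument, translating the combinatorial/algebraic notion of irreducibility into a statement about subspace intersections in $\Fq^{\sum_e n_e}$.

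First, I would recall from the definition of a tree-PIN source with linear wiretapper that $\MW$ has full column rank, so the column span $C := \operatorname{colspan}(\MW) \subseteq \Fq^{\sum_e n_e}$ satisfies $\dim C = n_w$. Next, for each edge $i \in E$, I would introduce the ``block-$i$ coordinate subspace'' $U_i := \{[0 \mid \cdots \mid v_i^T \mid \cdots \mid 0]^T : v_i \in \Fq^{n_i}\} \subseteq \Fq^{\sum_e n_e}$, which has dimension $n_i$ and consists precisely of the column vectors that are zero outside the rows corresponding to edge $i$.

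The key translation step is to observe that the irreducibility of $\MW$ is equivalent to the assertion that $U_i \cap C = \{0\}$ for every edge $i$. Indeed, a non-zero vector in $U_i \cap C$ is exactly a non-zero vector of the form $[0 \mid \cdots \mid r_i^T \mid \cdots \mid 0]^T$ lying in the column span of $\MW$, which is precisely what irreducibility forbids. Then I would apply the standard dimension formula $\dim(U_i + C) = \dim U_i + \dim C - \dim(U_i \cap C)$ to get $\dim(U_i + C) = n_i + n_w$, and combine this with the containment $U_i + C \subseteq \Fq^{\sum_e n_e}$ to conclude $n_i + n_w \leq \sum_{e \in E} n_e$, i.e.,
\begin{equation*}
n_w \leq \Bigl(\sum_{e \in E} n_e\Bigr) - n_i \qquad \text{for every } i \in E.
\end{equation*}
Taking the tightest bound over $i$ gives $n_w \leq \sum_e n_e - \max_{i \in E} n_i$, which is in fact stronger than the claimed $n_w \leq \sum_e n_e - s$, since $s = \min_e n_e \leq \max_e n_e$.

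I do not anticipate a genuine obstacle here: the proof is essentially a one-line linear-algebra computation once the correct reformulation of irreducibility is in place. The only point worth checking carefully is the equivalence in the previous paragraph, namely that ``reducibility'' in the sense of the definition is exactly the existence of a non-zero vector in some $U_i \cap C$; this just unpacks the definition. The full-column-rank hypothesis on $\MW$ is used in identifying $\dim C$ with $n_w$, and without it one would only get $\operatorname{rank}(\MW) \leq \sum_e n_e - s$, so it is essential to invoke that assumption from the tree-PIN setup.
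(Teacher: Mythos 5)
Your proof is correct, and it takes a genuinely different route from the paper's. The paper reduces $\MW$ by elementary column operations and block-row swaps to a block lower-triangular echelon form with full-column-rank diagonal blocks $\MW_{jj}$, argues from irreducibility that there can be at most $|E|-1$ such blocks (a $k$-th block with $k=|E|$ would produce a column supported on a single edge block, contradicting irreducibility), and then bounds $n_w$ by the worst-case sum of $|E|-1$ of the $n_e$'s. Your argument instead translates irreducibility directly into the linear-algebra statement $U_i \cap \operatorname{colspan}(\MW) = \{0\}$ for every edge $i$ and applies the dimension formula for a sum of subspaces. This is shorter, avoids the echelon-form machinery entirely, and in fact proves the strictly stronger bound $n_w \leq \bigl(\sum_{e\in E} n_e\bigr) - \max_{e\in E} n_e$, from which the stated inequality $n_w \leq \bigl(\sum_{e\in E} n_e\bigr) - s$ follows because $s = \min_e n_e \leq \max_e n_e$. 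The only thing worth making explicit when writing this up is the reformulation of reducibility as a non-trivial intersection of $C$ with some $U_i$; as you note, this is a direct unpacking of the definition, and the full-column-rank hypothesis is used exactly once, to identify $\dim C$ with $n_w$.
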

\begin{proof}
 By  elementary column operations and block-row swapping, we can reduce $\MW$ into the following form
 \begin{align*}
 \renewcommand{\arraystretch}{1.5}
 \left[\begin{array}{cccc}
 \MW_{11}&\M0& \cdots&\M0\\ \hline
 \MW_{21}&\MW_{22}& \cdots &\M0\\ \hline
 \vdots&\vdots&\ddots&\vdots\\ \hline
  \MW_{k1}&\MW_{k2}& \cdots &\MW_{kk}\\ \hline
  \vdots&\vdots&\ddots&\vdots\\ \hline
  \MW_{|E|1}&\MW_{|E|2}& \cdots &\MW_{|E|k}\\
 \end{array}\right]
\end{align*}
 where the diagonal matrices $\MW_{jj}$ are full column-rank matrices. Since $\MW$ is an irreducible matrix, $k \leq (|E|-1)$. An upper bound on  the number of columns of $\MW_{jj}$ is  $n_{e_j}$, where $e_j$ is the edge corresponding to the row $j$ (after block-row swapping). So, 
 \begin{align*}
  n_w & \leq \max\biggl\{\sum_{j\in K} n_{e_j}: K \subseteq [|E|], |K| \leq (|E|-1)\biggr\} \\
  &\leq \max\biggl\{\sum_{j\in K} n_{e_j}:  |K| = (|E|-1)\biggr\}\\
  &= \max\biggl\{\sum_{e\in E} n_e - n_{e'}:  e' \in E\biggr\}\\
  &=\sum_{e\in E} n_e - s.
 \end{align*}
 This completes the proof.
\end{proof}

The next lemma is about matrices over $\Fq\left[\RX\right]$ of the form
\begin{align}
\begin{bmatrix}
\RX_{11}&\cdots &\RX_{1m} & \op{L}_1(\RY_1)&\cdots &\op{L}_{l}(\RY_1) \\
\RX_{21}&\cdots &\RX_{2m} & \op{L}_1(\RY_2)&\cdots &\op{L}_{l}(\RY_2) \\
\vdots &  \ddots & \vdots &\vdots &  \ddots & \vdots\\
\RX_{s1}&\cdots &\RX_{sm} & \op{L}_1(\RY_s)&\cdots &\op{L}_{l}(\RY_s) 
\end{bmatrix}_{s \times m+l} \label{eqn:lemma_matrix}
\end{align}
where $\op{L}_{i}(\RY_k)$  denotes a linear combination over $\Fq$ of entries of $\RY_k=[\RX_{k1}, \ldots ,\RX_{km}]$. Let us denote a matrix whose entries are the zero polynomials by $\M0$.

\begin{lemma}\label{lem:nonzeropoly}
Let $\MW$ be a $(\sum_{e \in E} n_e) \times  n_w$ wiretapper  matrix over $\Fq$ with full column-rank such that $n_w  \leq (\sum_{e \in E}n_e)-s$, where $s=\min\{n_e: e \in E\}$. Let $m := \sum_{e \in E}n_e - n_w $. Consider a  matrix $\MS:=(\MS_e, \MT_e)_{e \in E}$ over $\Fq\left[\RX\right]$ of the form \eqref{eqn:lemma_matrix}, where $\MS_e$ is an $s \times s$ matrix and $\MT_e$ is an $s \times (n_e -s)$ matrix. Furthermore, assume that $\MS$ satisfies $\MS\MW=\M0$ . If $\MW$ is an irreducible matrix, then $\prod_{e\in E} \det (\MS_e)$ is a non-zero polynomial.
   %(Polynomial in terms of the inderterminates corresponding to the free variables of $\MS$  corresponding to $\MS\MW=0$). 
\end{lemma}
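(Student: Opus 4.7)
The plan is to prove the contrapositive: if $\prod_{e \in E} \det(\MS_e) = 0$ in $\Fq[\RX]$, then $\MW$ is reducible. Since $\Fq[\RX]$ is an integral domain, this reduces to the case where $\det(\MS_e) = 0$ as a polynomial for some single edge $e \in E$. The aim is then to convert this polynomial identity into a linear-algebraic witness of reducibility for $\MW$.

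The first step is to rewrite $\MS$ in the compact factorization $\MS = \MX \MC$, where $\MX$ is the $s \times m$ matrix of independent indeterminates that parameterize each row of $\MS$ as displayed in \eqref{eqn:lemma_matrix}, and $\MC \in \Fq^{m \times \sum_e n_e}$ is a fixed matrix of rank $m$ whose rows form a basis for the left null space of $\MW$; this factorization is forced by the structural form of $\MS$ together with the constraint $\MS \MW = \M0$. Partitioning $\MC$ edge-wise as $\MC = (\MC_e)_{e \in E}$ with $\MC_e \in \Fq^{m \times n_e}$, and splitting $\MC_e = [\MC_e^{(1)} \mid \MC_e^{(2)}]$ with $\MC_e^{(1)} \in \Fq^{m \times s}$, one obtains $\MS_e = \MX \MC_e^{(1)}$. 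Since the rows of $\MS_e$ depend on disjoint blocks $\RY_k$ of indeterminates, Lemma~\ref{lem:det} will apply and yield a nonzero $\lambda \in \Fq^s$ with $\MS_e \lambda^T = \M0$ as a polynomial vector; by the algebraic independence of the entries of $\MX$, this becomes the $\Fq$-linear relation $\MC_e^{(1)} \lambda^T = \M0$.

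From this $\lambda$ I would build the nonzero vector $v \in \Fq^{\sum_e n_e}$ whose entries are $\lambda$ on the first $s$ coordinates of the edge-$e$ block and $0$ everywhere else, and check directly that $\MC v^T = \MC_e^{(1)} \lambda^T = \M0$. Because $\rank(\MC) = m$ and $\MC\MW = \M0$ while $\MW$ has full column rank $n_w$, both the column span of $\MW$ and the right null space of $\MC$ are $n_w$-dimensional subspaces of $\Fq^{\sum_e n_e}$ with the former contained in the latter, hence they coincide. Thus $v$ lies in the column span of $\MW$; since $v$ is supported only on the edge-$e$ block, this exhibits $\MW$ as reducible, contradicting the hypothesis. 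The main obstacle is setting up the factorization $\MS = \MX \MC$ cleanly enough that the edge-block decomposition of $\MS$ transfers faithfully to $\MC$; once that is in place the construction of $v$ and the rank-matching identification of the two $n_w$-dimensional subspaces are straightforward. It is worth noting that the hypothesis $n_w \leq \sum_e n_e - s$ (equivalently $m \geq s$) is precisely what is needed so that $\MC_e^{(1)} \lambda^T = \M0$ is not already forced by dimension counts, so that the reducibility of $\MW$ extracted from $\lambda$ is a genuine conclusion rather than an automatic one.
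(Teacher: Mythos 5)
Your proof is correct and matches the paper's argument: both invoke Lemma~\ref{lem:det} to extract a constant $\lambda \in \Fq^s$ annihilating the vanishing block $\MS_e$, build the witness vector supported on the first $s$ coordinates of edge $e$, and use the dimension count $\rank(\MW)=n_w=\sum_{e} n_e - m$ to place that vector in the column span of $\MW$, contradicting irreducibility. The only cosmetic difference is that you package the step through an explicit factorization $\MS=\MX\MC$ with $\MC$ a basis for the left null space of $\MW$, whereas the paper reaches the same conclusion by substituting field values into the indeterminates to obtain $m$ linearly independent vectors of that null space.
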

\begin{proof}
 Suppose  $\prod_{e\in E} \det (\MS_e)$ is the zero polynomial; then $ \det (\MS_{e^*}) \equiv 0$ for some $i\in E$. There are $sm$ indeterminates in $\MS$, where $s\leq m$. Note that $\MS_{e^*}$ is of the form $\eqref{eqn:detmatrix}$ for some linear functions. By Lemma~\ref{lem:det}, $ \det (\MS_{e^*}) \equiv 0$ implies that there exists a non-zero $\lambda = [\lambda_1, \ldots, \lambda_s]  \in \Fq^s$ such that $ \MS_{e^*} \lambda^T= 0$.  Consider the row vector $\MR=(\MR_e)_{e \in E}$ with $\MR_{e^*} = [\lambda_1, \ldots, \lambda_s, 0,\ldots, 0]\in \Fq^{n_{e^*}}$ and  $\MR_j =[- 0 -]\in \Fq^{n_{e'}}$ for every $e' \in E \setminus \{e^*\}$. Then $ \MS \MR^T= 0$. 
 
 Moreover, it is given that $\MS$ satisfies $\MS\MW=\M0$. Now, let the $m$ indeterminates in the first row of $\MS$ take values in $\Fq$ so that we get $m$ linearly independent vectors in the left nullspace of $\MW$. These vectors are also in the left nullspace of $\MR^T$ because $\MS \MR^T= 0$. Since $\MW$ has full column-rank, this is possible only if $\MR^T$ is in the column span of $\MW$, which implies that $\MW$ is reducible.

% Consider the matrix $\tMW = [\MW \mid \MR^T]$ which also satisfies $\MS \tMW =\M0$. One can see that $\ker(\tMW^T) \subseteq \ker(\MW^T) $. For the other direction, note that any vector in the $\ker(\MW^T)$ also belongs to $\ker(\MR)$. As a consequence $\ker(\tMW^T) = \ker(\MW^T) $, then the dimension of the column space of $\tMW$ is $\sum_{e\in E} n_e  - \dim(\ker(\tMW^T)) =\sum_{e\in E} n_e  - \dim(\ker(\MW^T)) = n_w$. Hence $\MR^T$ is in the column span of $\MW$ which implies that $\MW$ is reducible.  
\end{proof}

 %Since $\MS$ satisfies $\MS\MW = 0$, in each row of $\MS$ there are $m$ independent variables, which are indeterminates, and every other element in the row is expressed  as  a linear combination of these indeterminates. So, in total there are $sm$ indeterminates in $\MS$; without loss of generality, assume them to be in the first $m$ columns of $\MS$. 

\section{Proof of Theorem~\ref{thm:rateconstrained_treepin}} \label{app:thm:proof_rateconstrained_treepin}
Similar to the unconstrained case, we first prove the result for irreducible sources and then argue that the rate region of a general source is the same as that of an irreducible source that is obtained through reduction.
\begin{figure}[h]
\centering
\begin{tikzpicture}
\tikzstyle{every node}=[font= \fontsize{8pt}{10pt}\selectfont]
    \draw [thin,  ->] (0,-0.5) -- (0,2)      % draw y-axis line
        node [above, black] {$\wskc(R)$};              % add label for y-axis
    
    \draw [thin,  ->] (-0.5,0) -- (3,0)      % draw x-axis line
        node [right, black] {$R$};              % add label for x-axis
    
    \draw [draw=black, thick] (0,0) -- (1.5,1);% draw the graph
    \draw [draw=black, thick] (1.5,1) -- (3,1);
    \draw [thin,  dashed] (0,1) -- (1.5,1);
    \draw [thin,  dashed] (1.5,1) -- (1.5,0);
    
    \node [left] at (0,1) {$\wskc$};                % label y-intercept
    \node [below] at (1.5,0) {$(|E|-1)\wskc$};               % label x-intercept
\end{tikzpicture}
\caption{$\wskc(R)$ curve denoting the wiretap secret key capacity at a given rate $R$ }
\label{fig:rateregion}
 \end{figure}
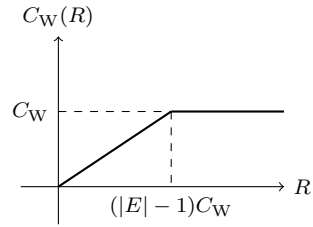
\begin{theorem}\label{thm:rate:irreducible}
Given an irreducible tree-PIN source $\RZ_V$ with a linear wiretapper $\RZ_w$, we have 
\begin{align*}
\wskc(R) = \min \left\{\frac{R}{|E|-1}, \wskc\right\} 
\end{align*}
where $R$ is the total discussion rate and $\wskc =\min_{e \in E}H(Y_e)$, which is the unconstrained wiretap secret key capacity.
\end{theorem}
\begin{proof}
Since the wiretapper side information can only reduce the secret key rate, $\wskc(R) \leq \skc(R)$. It follows from \cite[Theorem~4.2]{chan19} that $\skc(R) =  \min \left\{\frac{R}{|E|-1}, \skc\right\}$. Therefore, we have $\wskc(R) \leq \min \left\{\frac{R}{|E|-1}, \wskc\right\}$ because $\skc =\wskc$ for an irreducible tree-PIN source with a linear wiretapper, which was shown in Theorem~\ref{thm:cwsk:irred}.

For the achievability part, it is enough to show that the point $((|E|-1)\wskc,\wskc)$ is achievable because the rest of the curve follows from the time sharing argument between $((|E|-1)\wskc,\wskc)$ and $(0,0)$ ---  see Fig.~\ref{fig:rateregion}.

Let $s:=\wskc=\min_{e \in E}H(Y_e)=\min_{e \in E}n_e$, which is an integer. We will construct our achievable scheme on a sub-source  $\RZ'_V$ of the tree-PIN source $\RZ_V$ by ignoring some edge random variables. More precisely, $\RZ'_V$ is defined on the same tree $T$ with $\RY'_e := \left( \RX_{e,1}, \ldots, \RX_{e,s} \right)$ for each edge $e \in E$, and $\RZ'_i = \left( \RY'_e : i \in \xi (e) \right)$ for $i \in V$. Note that all the edge random vectors $\RY'_e$ have $s$ components. On the other hand, the  wiretapper side information $\RZ_{\opw}$ is the same as that of the original source. 

Let $\RX' :=(\RX_{e,k}: e\in E, 1 \leq k \leq s)$ and  $\RX'' :=(\RX_{e,k}: e\in E, s < k \leq n_e)$, which is a partition of the underlying components $\RX$ of the original source. This gives rise to a partition of the observations of the wiretapper into two parts: the first part contains observations involving only linear combinations of $\RX'$, and the second part contains linear observations with at least one component from $\RX''$. This means that $\RZ_{\opw}$, after applying some suitable invertible linear transformation, can be written as 
$$\RZ_{\opw} = \bM \RX' & \RX''\eM \bM \MA & \MB \\ \M{0} & \MC \eM ,$$ for some matrices $\MA$, $\MB$, and a full column-rank matrix $\MC$. With $\RZ'_{\opw}=\RX'\MA$ and $\RZ''_{\opw}=\RX'\MB + \RX''\MC$, $\RZ_{\opw} = \bM \RZ'_{\opw} & \RZ''_{\opw} \eM$.

For a large $n$,  users execute a linear secure omniscience communication scheme $\RF^{(n)}$ on the sub-source $\RZ'^n_V$ with respect to the wiretapper side information $\RZ^n_{\opw}$. Moreover, $\RF^{(n)}$ has the following properties: it achieves perfect omniscience at rate 
$$\frac{1}{n}H(\RF^{(n)})=H(\RZ'_V)-s=H(\RX')-s,$$ which is the minimum rate of omniscience $\rco(\RZ'_V)$, and it perfectly aligns with $\RZ'^n_{\opw}$, i.e., $H(\RZ'^n_{\opw}|\RF^{(n)})=0$. The existence of such a communication scheme is guaranteed from the proof of Theorem~\ref{thm:cwsk:irred}. After every user recovers the source $\RZ'^n_V$ using $\RF^{(n)}$, they agree on the key $\RK^{(n)}:=\RY'^n_{e_0}$ where $e_0 \in E$ is an edge incident on a leaf node. It is clear that $\RK^{(n)}$ satisfies the key recoverability condition because it is a function of the recovered source $\RZ'^n_V$. It remains to show that $\RK^{(n)}$ satisfies the secrecy condition.

Since $\RK^{(n)},\RZ'^n_{\opw}$ and $\RF^{(n)}$ are linear functions of $\RX'$, we have  $(\RK^{(n)}, \RF^{(n)}, \RZ'^n_{\opw}) - \RX'^n-\RZ''^n_{\opw}$. Note that $\RZ''^n_{\opw}$ is independent of $\RX'$ because $\MC$ is a full column-rank matrix. As a consequence, $\RZ''^n_{\opw}$ is independent of $(\RK^{(n)}, \RF^{(n)}, \RZ'^n_{\opw})$.  Furthermore, $\RY'^n_{e_0}$ is independent of $\RF^{(n)}$. This can be obtained by combining the perfect omniscience condition, which implies that $H(\RZ'^n_V|\RY'^n_{e_0},\RF^{(n)})=0$ for the leaf node, and the condition on the rate of the communication which is  $H(\RF^{(n)})=H(\RZ'^n_V)-ns=H(\RZ'^n_V)-H(\RY'^n_{e_0})$. Therefore, we have $H(\RY'^n_{e_0}|\RF^{(n)})= H(\RY'^n_{e_0},\RF^{(n)})-H(\RF^{(n)})=H(\RZ'^n_V,\RY'^n_{e_0},\RF^{(n)})-H(\RF^{(n)})=H(\RZ'^n_V)-H(\RF^{(n)})=H(\RY'^n_{e_0})$. The third equality is because $\RY'^n_{e_0}$ and $\RF^{(n)}$ are linear functions of $\RZ'^n_V$. Finally,
\begin{align*}
    H(\RK^{(n)}| \RF^{(n)}, \RZ^n_{\opw}) &=  H(\RK^{(n)}| \RF^{(n)}, \RZ'^n_{\opw},\RZ''^n_{\opw})\\
    &\utag{a}{=}  H(\RK^{(n)}| \RF^{(n)}, \RZ'^n_{\opw})\\
    &\utag{b}{=} H(\RK^{(n)}| \RF^{(n)})\\
    &= H(\RY'^n_{e_0}| \RF^{(n)})\\
    &\utag{c}{=}H(\RY'^n_{e_0})\\
    &= H(\RK^{(n)})
\end{align*}
where (a) follow from the independence of $\RZ''^n_{\opw}$ and $(\RK^{(n)}, \RF^{(n)}, \RZ'^n_{\opw})$, (b) is due that the fact that $\RF^{(n)}$ aligns perfectly with $\RZ'^n_{\opw}$, i.e., $H(\RZ'^n_{\opw}|\RF^{(n)})=0$ and (c) is because $\RY'^n_{e_0}$ is independent of $\RF^{(n)}$.

Thus we have shown that a secret key of rate $\frac{1}{n}H(\RK^{(n)})=\frac{1}{n}H(\RY'^n_{e_0})=s$ is achievable with a communication of rate $\frac{1}{n}H(\RF^{(n)})=H(\RZ'_V)-s= (|E|-1)s$. So the pair $((|E|-1)\wskc,\wskc)= ((|E|-1)s,s)$  is achievable, which is as desired.
\end{proof}

To extend this result to the general tree-PIN case, we will prove the following lemma, which allows us to carry out a reduction to an irreducible source without changing $\wskc(R)$. This lemma along with the above theorem on irreducible sources proves Theorem~\ref{thm:rateconstrained_treepin}. 

\begin{lemma}\label{lem:irred:rate} 
 If a tree-PIN source with a linear wiretapper $(\RZ_V,\RZ_{\opw})$ is not irreducible then there exists an irreducible source $(\tRZ_V, \tRZ_{\opw})$ such that 
 \begin{align*}
\wskc(\RZ_V\| \RZ_{\opw})(R) = \wskc(\tRZ_V\|\tRZ_{\opw})(R),\\ 
H(\RY_e|\op{mcf}(\RY_e\wedge\RZ_{\opw})) = H(\tRY_e)
\end{align*}
for all $e \in E$ and all discussion rates $R \geq 0$.
\end{lemma}
\begin{proof}
Since $(\RZ_V, \RZ_{\opw})$ is not irreducible, there exists an edge $e \in E$ such that $\RG_e := \op{mcf}(\RY_e\wedge \RZ_{\opw})$ is a non-constant function. Similar to the proof of Lemma~\ref{lem:irred:rate}, we linearly transform $\RY_e$ and $\RZ_{\opw}$ to  $(\RG_e, \tRY_e)$ and $(\RG_e, \tRZ_{\opw} )$, respectively where $ H(\tRY_e)=H(\RY_e|\op{mcf}(\RY_e\wedge\RZ_{\opw}))$. Let us consider a  new tree-PIN  source $\tRZ_V$, which is the same as $\RZ_V$ except that  $\tRY_e$ and $\tilde{n}_e$ are associated to the edge $e$, and the wiretapper side information is  $\tRZ_{\opw}$. Note that $(\tRZ_V, \tRZ_{\opw})$ is also a tree-PIN source with a linear wiretapper, and $\RG_e$ is independent of $(\tRZ_V, \tRZ_{\opw})$.

Since any valid scheme on reduced model $(\tRZ_V, \tRZ_{\opw})$ can be used as a valid scheme on original model $(\RZ_V, \RZ_{\opw})$, we have 
$$\wskc(\RZ_V\| \RZ_{\opw})(R) \geq \wskc(\tRZ_V\|\tRZ_{\opw})(R).$$

To prove the reverse inequality, $\wskc(R):=\wskc(\RZ_V\| \RZ_{\opw})(R) \leq \wskc(\tRZ_V\|\tRZ_{\opw})(R)$, consider capacity-optimal schemes. Fix $\delta>0$, and let $(\RF^{(n)},\RK^{(n)})$ be a WSKA scheme whose key rate and communication rate satisfy
\begin{gather}
    \wskc(R)-\delta < \liminf \frac{1}{n}\log|\mc{K}^{(n)}| < \wskc(R), \label{eq:rate_k}\\
    \limsup \frac{1}{n}\log|\mc{F}^{(n)}| < R. \nonumber
\end{gather}
Let $L:=\liminf \frac{1}{n}\log|\mc{K}^{(n)}|$. Fix an $\epsilon>0$ small enough that $(L-7\epsilon, L+7\epsilon) \subseteq (\wskc(R)-\delta, \wskc(R))$.
We restrict to a subsequence of $\left(\frac{1}{n}\log|\mc{K}^{(n)}|\right)_{n\geq 1}$ whose limit is $L$. And, with an abuse of notation, we still index this sequence\footnote{Designing a protocol on a subsequence can easily be extended to all the integers without sacrificing the asymptotic performance of the protocol.} with $n$. Therefore, for all $n$ large enough,  $L-\epsilon <\frac{1}{n}\log|\mc{K}^{(n)}| < L+\epsilon$. Again, along this subsequence, we have $\limsup \frac{1}{n}\log|\mc{F}^{(n)}| < R$ because of the properties of subsequential limits. Moreover, we have  $0 \leq \log|\mc{K}^{(n)}|-H(\RK^{(n)}|\RZ^n_{\opw},\RF^{(n)})=\left[\log|\mc{K}^{(n)}|-H(\RK^{(n)})\right] +I(\RK^{(n)} \wedge \RZ^n_{\opw},\RF^{(n)})< \epsilon$ and $\Pr\left[ \exists j\in V \text{ s.t. }  \RK_j^{(n)}\neq \RK^{(n)}\right]< \epsilon$ for all $n$ large enough.

Note that the above secrecy  condition implies that $0\leq\left[\log|\mc{K}^{(n)}|-H(\RK^{(n)})\right] < \epsilon$ and  $I(\RK^{(n)}\wedge\tRZ^n_{\opw},\RG^n_e,\RF^{(n)})=I(\RK^{(n)}\wedge\RZ^n_{\opw},\RF^{(n)})<  \epsilon$. The latter condition implies that $I(\RK^{(n)}\wedge\tRZ^n_{\opw},\RF^{(n)}|\RG^n_e)<  \epsilon$ and $I(\RK^{(n)}\wedge \RG^n_e)<  \epsilon$. From all the above conditions, we have
\begin{gather}
L-\epsilon <\frac{1}{n}\log|\mc{K}^{(n)}| < L+\epsilon,\label{eq:rate1}\\
    0\leq\left[\log|\mc{K}^{(n)}|-H(\RK^{(n)})\right] < \epsilon,\label{eq:rate2}\\
    I(\RK^{(n)}\wedge\tRZ^n_{\opw},\RF^{(n)}|\RG^n_e)<  \epsilon,\label{eq:rate3}\\ I(\RK^{(n)}\wedge \RG^n_e)<  \epsilon.\label{eq:rate4}
\end{gather}
Let $\mc{K}^{(n)}_g$ be the range of $\RK^{(n)}$ when $\RG_e^n=g$, and similarly, $\mc{F}^{(n)}_g$ denote the range of $\RF^{(n)}$ when $\RG_e^n=g$. Note that since $\mc{K}^{(n)}_g \subseteq \mc{K}^{(n)}$ and $\mc{F}^{(n)}_g \subseteq \mc{F}^{(n)}$ for any $g$, we have $\log|\mc{K}^{(n)}_g|\leq \log|\mc{K}^{(n)}|$ and $\log|\mc{F}^{(n)}_g|\leq \log|\mc{F}^{(n)}|$, hence
\begin{gather}
    \sum_{g}\Pr(\RG_e^n=g)\log|\mc{K}^{(n)}_g|\leq  \log|\mc{K}^{(n)}|.\label{eq:rate5}
\end{gather}
We will show that there exists a realization $g^{*}$ of $\RG_e^n$ for which all the desired conditions are met.

From \eqref{eq:rate2}, \eqref{eq:rate4} and \eqref{eq:rate5}, we obtain
\begin{gather*}
    H(\RK^{(n)}|\RG^n_e)\leq \sum_{g}\Pr(\RG_e^n=g) \log|\mc{K}^{(n)}_g|\leq \log|\mc{K}^{(n)}| < H(\RK^{(n)}|\RG^n_e)+2\epsilon.
\end{gather*}
We can conclude from the above chain of inequalities that 
\begin{gather*}
    0\leq \sum_{g}\Pr(\RG_e^n=g)\left[\frac{1}{n} \log|\mc{K}^{(n)}|-\frac{1}{n} \log|\mc{K}^{(n)}_g|\right]< 2\epsilon,\\
    0\leq \sum_{g}\Pr(\RG_e^n=g)\left[ \log|\mc{K}_g^{(n)}|-H(\RK^{(n)}|\RG^n_e=g)\right]< 2\epsilon.
\end{gather*}
These inequalities together with \eqref{eq:rate3} give
\begin{align*}
    0\leq &\sum_{g}\Pr(\RG_e^n=g)\left\lbrace\left[\frac{1}{n} \log|\mc{K}^{(n)}|-\frac{1}{n} \log|\mc{K}^{(n)}_g|\right]+\left[ \log|\mc{K}_g^{(n)}|-H(\RK^{(n)}|\RG^n_e=g)\right]\right. \\ &\left.+\, I(\RK^{(n)}\wedge\tRZ^n_{\opw},\RF^{(n)}|\RG^n_e=g)+\Pr\left( \exists j\in V \text{ s.t. } \RK_j^{(n)}\neq \RK^{(n)}\Bigm\vert \RG^n_e=g\right)\right\rbrace < 5\epsilon.
\end{align*}
Note that the averaged quantity (the term in the curly bracket) is non-negative; hence there exists a realization of $\RG_e^{(n)}=g^*$ such that 
\begin{align*}
    0\leq &\left[\frac{1}{n} \log|\mc{K}^{(n)}|-\frac{1}{n} \log|\mc{K}^{(n)}_{g^*}|\right]+\left[ \log|\mc{K}_{g^*}^{(n)}|-H(\RK^{(n)}|\RG^n_e={g^*})\right]\\ &+\,I(\RK^{(n)}\wedge\tRZ^n_{\opw},\RF^{(n)}|\RG^n_e={g^*})+\Pr\left( \exists j\in V \text{ s.t. } \RK_j^{(n)}\neq \RK^{(n)}\Bigm\vert \RG^n_e=g\right) < 5\epsilon.
\end{align*}
Since all the summands are non-negative, we have 
\begin{subequations}
\label{eq:rate6}
\begin{gather}
    0\leq \left[\frac{1}{n} \log|\mc{K}^{(n)}|-\frac{1}{n} \log|\mc{K}^{(n)}_{g^*}|\right]< 5\epsilon,\label{eq:rate7}\\
    0\leq \left[ \log|\mc{K}_{g^*}^{(n)}|-H(\RK^{(n)}|\RG^n_e={g^*})\right]< 5\epsilon,\label{eq:rate8}\\ 
    0\leq I(\RK^{(n)}\wedge\tRZ^n_{\opw},\RF^{(n)}|\RG^n_e={g^*}) < 5\epsilon,\label{eq:rate9}\\
    0\leq \Pr\left( \exists j\in V \text{ s.t. } \RK_j^{(n)}\neq \RK^{(n)}\Bigm\vert \RG^n_e=g^*\right) < 5\epsilon\label{eq:rate10}.
\end{gather}
\end{subequations}
Let $\RK^{(n)}_{j, g^*}$ (resp. $\RF^{(n)}_{g^*}$) be the function  $\RK^{(n)}_j$ (resp. $\RF^{(n)}$) restricted to $\RG^n_e={g^*}$. So $\RK^{(n)}_{j, g^*}$ and $\RF^{(n)}_{g^*}$ are  functions solely of $\tRZ_V^n$ and possible private randomness. For the source $(\tRZ_V, \tRZ_{\opw})$, users run the scheme $(\RF^{(n)}_{g^*},\RK^{(n)}_{1, g^*}, \ldots, \RK^{(n)}_{m, g^*})$ to generate a secret key using the source $\tRZ_V^n$ and the private randomness. It is enough to argue that this a valid WSKA scheme. To see this, let $P_{\RK^{(n)},\tRZ_V^n, \tRZ_{\opw}^n,\RS_V,\RG^n_e}$ be the joint distribution of $\RK^{(n)}$ and the source random variables, where $\RS_V$ is the local randomness generated by the users. Consider a random variable $\RK^{(n)}_{g^*}$ whose joint distribution with the other random variables is $P_{\RK^{(n)}_{g^*},\tRZ_V^n, \tRZ_{\opw}^n,\RS_V}=P_{\RK^{(n)}|\tRZ_V^n, \tRZ_{\opw}^n,\RS_V,\RG^n_e=g^*}\cdot P_{\tRZ_V^n, \tRZ_{\opw}^n,\RS_V}$. Since $\RG^n_e$ is independent of $(\tRZ_V^n, \tRZ_{\opw}^n,\RS_V)$, we have $H(\RK^{(n)}_{g^*})=H(\RK^{(n)}|\RG^n_e={g^*})$, $I(\RK^{(n)}_{g^*}\wedge\tRZ^n_{\opw},\RF^{(n)})=I(\RK^{(n)}\wedge\tRZ^n_{\opw},\RF^{(n)}|\RG^n_e={g^*})$ and $\Pr\left( \exists j\in V \text{ s.t. } \RK_{j, g^*}^{(n)}\neq \RK_{g^*}^{(n)}\right)=\Pr\left( \exists j\in V \text{ s.t. } \RK_j^{(n)}\neq \RK^{(n)}\Bigm\vert \RG^n_e=g^*\right)$. As $\epsilon>0$ is arbitrary, the conditions \eqref{eq:rate10}, \eqref{eq:rate9}, and \eqref{eq:rate8} imply that recoverability and secrecy conditions are met, proving that this is a valid WSKA scheme. Moreover, the rate of this code satisfies
$\liminf \frac{1}{n} \log|\mc{K}^{(n)}_{g^*}|> \liminf \frac{1}{n} \log|\mc{K}^{(n)}|-6 \epsilon > L-7\epsilon$. From the assumption that $(L-7\epsilon, L+7\epsilon) \subseteq (\wskc(R)-\delta, \wskc(R))$, we have $\liminf \frac{1}{n} \log|\mc{K}^{(n)}_{g^*}|>\wskc(R)-\delta$. Furthermore, the rate of the communication $\RF^{(n)}_{g^*}$ satisfies $\limsup\frac{1}{n}\log|\mc{F}^{(n)}_{g^*}|\leq \limsup\frac{1}{n}\log|\mc{F}^{(n)}| < R$.
Thus,
\begin{align*}
    \wskc(\tRZ_V\|\tRZ_{\opw})(R)>\wskc(\RZ_V\| \RZ_{\opw})(R)-\delta.
\end{align*}
As $\delta$ is arbitrary, we have $\wskc(\tRZ_V\|\tRZ_{\opw})(R)
\geq \wskc(\RZ_V\| \RZ_{\opw})(R).$
This shows that  $\wskc(\RZ_V\| \RZ_{\opw})(R)= \wskc(\tRZ_V\|\tRZ_{\opw})(R)$. Therefore, we can repeat this process until the source becomes irreducible without affecting $\wskc(\RZ_V\| \RZ_{\opw})(R)$.
\end{proof}

The result of Theorem~\ref{thm:rateconstrained_treepin} follows by putting the above lemma and theorem together.

\bibliographystyle{IEEEtran}
\bibliography{IEEEabrv,ref}

\end{document}